\documentclass{article}

\usepackage{amsmath}
\usepackage{amsfonts}
\usepackage{amssymb}
\usepackage{graphicx}
\usepackage{amsthm}
\usepackage{thmtools}
\usepackage{cleveref}
\usepackage{autonum}
\usepackage{xcolor}
\usepackage{enumitem}
\usepackage{mathtools}
\usepackage{bbm}
\usepackage{booktabs}
\usepackage{mathrsfs}

\newtheorem{theorem}{Theorem}
\newtheorem{example}[theorem]{Example}
\newtheorem{lemma}[theorem]{Lemma}
\newtheorem{proposition}[theorem]{Proposition}
\newtheorem{remark}[theorem]{Remark}
\newtheorem{assumption}[theorem]{Assumption}

\crefname{theorem}{Theorem}{Theorems}
\crefname{assumption}{Assumption}{Assumptions}
\crefname{lemma}{Lemma}{Lemmas}
\crefname{proposition}{Proposition}{Propositions}
\crefname{figure}{Figure}{Figures}
\crefname{example}{Example}{Examples}
\crefname{section}{Section}{Sections}
\crefname{table}{Table}{Tables}

\newcommand{\R}{\mathbb{R}}
\newcommand{\dd}{\,\mathrm{d}}
\newcommand{\E}{\mathbb E}
\newcommand{\abs}[1]{\left\lvert#1\right\rvert}
\newcommand{\norm}[1]{\left\|#1\right\|}
\newcommand{\defeq}{\coloneqq}

\renewcommand{\d}{\mathrm{d}}
\renewcommand{\div}{\operatorname{div}}
\newcommand{\Rey}{\mathrm{Re}}

\begin{document}

\title{Stochastic modeling of Fourier modes \\ in two-dimensional turbulence \\ via filtered white noise}
\author{Paolo Cifani, Franco Flandoli, Andrea Zanoni \\ \\ \small{Scuola Normale Superiore, Pisa, Italy}}
\date{}
\maketitle

\begin{abstract}
Modeling turbulent flows by a random Fourier decomposition is a classical procedure in order to use simplified models of turbulence in heat transport and other applications. We investigate the Fourier time series of two-dimensional Navier--Stokes equations with friction and damping, forced at intermediate scales, and identify significant statistical structures. In particular, we find the existence of a typical time correlation length, and propose a stochastic model for the Fourier components. Finally, we compute the transport of a passive scalar under advection-diffusion dynamics by means of direct numerical simulation of the stochastic damped Navier--Stokes equation and compare it with analytical predictions of the effective diffusion produced by the stochastic model.
\end{abstract}

\section{Introduction} \label{sec:intro}

Investigating the effects of turbulent flows on passive scalars and other transported quantities, many works in the past have introduced stochastic models of fluids, which often provide analytical results not possible if the fluid is described by the solution of fluid dynamic equations \cite{ABC25,AFP21,ECV23,FlL23,MTV01}. 

Stochastic models are an idealization of fluid motion and there is no unique recipe. Nevertheless, they provide a robust mathematical framework for capturing the probabilistic nature of turbulent fluctuations. Traditionally, these models find their place in classical decorrelation turbulence theory \cite{Kra64,Pop94,Ten75,Yeu02}. As established in literature, the Taylor Eulerian scale $\tau_E$ is dominated by the advection of dissipative eddies by the most energetic eddies. Conversely, its Lagrangian counterpart $\tau_L$, which characterizes the internal dynamic straining and continuous movements, is governed by the local eddy turnover time. At observation times exceeding the temporal microscale, stochastic modelling becomes physically justifiable. However, these physical scales, while robust, provide an order of magnitude rather than a precise cutoff. In order to construct a model that accurately reproduces the statistical flow dynamics, here we follow a data-driven approach. Specifically, we derive a stochastic model which incorporates an autocorrelation function whose form closely adheres to that of the numerically measured autocorrelation. While this approach inherits the pitfall of being data-dependent rather than general, it represents a critical first step towards actual stochastic modelling of 2D turbulent flows and offers insights into where the classical Ornstein--Uhlenbeck model fails.  

We investigate this issue in the case of the two-dimensional Navier--Stokes equation, activated at small spatial scales and with a friction damping, namely described in vorticity form as
\begin{equation} \label{eq:NavierStokes_vorticity}
\partial_t\omega + \mathbf u \cdot \nabla\omega + \alpha\omega - \nu \Delta\omega = \mathcal F,
\end{equation}
where $\alpha > 0$ denotes the linear damping coefficient and $\nu > 0$ is the kinematic viscosity. The stochastic forcing term $\mathcal F$ is localized in spectral space and defined by
\begin{equation}
\widehat{\mathcal F}(\mathbf k) = \mathbbm{1}_{\{ \abs{\mathbf k} = k_f \}} \frac{F}{\sqrt{\mathsf n}} \frac{1}{\sqrt{2}} \left( \dot W_1 + i \dot W_2 \right),
\end{equation}
where $k_f$ is the forcing wavenumber, $F = 10$ sets the forcing amplitude, $\dot W_1$ and $\dot W_2$ are independent Gaussian white-noise processes in time, and $\mathsf n$ is the number of Fourier modes satisfying $|\mathbf k| = k_f$. This system develops an inverse cascade of energy and a direct enstrophy cascade \cite{BoE12,BoM10,KrM80,Lei97,Tab02}, as also shown in \cref{fig:vorticity_spectrum}. The spectral grid is discretized with $\mathsf N = 512$ modes, and the forcing is concentrated at wavenumbers $\abs{\mathbf k} \sim k_f = 64$. The inertial spectrum $S$ extends from the injection scale back to lower wavenumbers, to a level that depends on $\alpha$. Consistently, snapshots of the vorticity field in the stationary regime show coherent structures larger than the forcing scale, whose characteristic size increases as $\alpha$ decreases. In \cref{fig:vorticity_spectrum}, we report three simulations with different values of $\alpha$. For each case, we specify the Reynolds numbers associated with the inverse energy cascade, controlled by $\alpha$, and the direct enstrophy cascade, controlled by $\nu$, in \cref{tab:Reynolds}. In particular, following the derivation in \cite[Section 2]{BoE12}, we define the Reynolds numbers as
\begin{equation}
\Rey_\alpha = \frac{k_f}{\pi \alpha} \left( \sum_{\mathbf k} S_{\abs{\mathbf k}}^{(u)} \right)^{1/2} \qquad \text{and} \qquad \Rey_\nu = \frac{\pi^2}{k_f^2 \nu^{2/3}} \left( \sum_{\mathbf k} \abs{\mathbf k}^2 S_{\abs{\mathbf k}}^{(\omega)} \right)^{1/3},
\end{equation}
with viscosity $\nu = 10^{-5}$ and damping parameters $\alpha_1 = 10^{-1}$, $\alpha_2 = 1.25 \cdot 10^{-2}$, $\alpha_3 = 3 \cdot 10^{-3}$. $\Rey_\alpha$ quantifies the effect of large-scale linear damping and controls the development of the inverse energy cascade, while $\Rey_\nu$ measures the relative importance of inertial and viscous effects at small scales and characterizes the extent of the direct enstrophy cascade. We notice that the different behavior of the low-wavenumber part of the spectrum for the $\alpha_1$ case is due to the stronger linear damping. In fact, the inverse energy transfer is arrested earlier, so less energy reaches the largest scales, leading to the different spectral shape observed at low wavenumbers. In contrast, for smaller values of $\alpha$, the weaker damping allows energy to be transferred to larger scales before being dissipated.

\begin{table}
\begin{center}
\begin{tabular}{ccccc}
\toprule
&& $\alpha_1$ & $\alpha_2$ & $\alpha_3$ \\
\midrule
$\mathrm{Re}_\alpha$ && $5.03e1$ & $1.21e3$ & $1.25e4$ \\
$\mathrm{Re}_\nu$ && $5.67e2$ & $6.24e2$ & $6.18e2$ \\
\bottomrule
\end{tabular}
\end{center}
\caption{Reynolds numbers associated with the inverse energy and direct enstrophy cascades in the experiments shown in \cref{fig:vorticity_spectrum}.}
\label{tab:Reynolds}
\end{table}

\begin{figure}
\begin{center}
\begin{tabular}{ccc}
\includegraphics{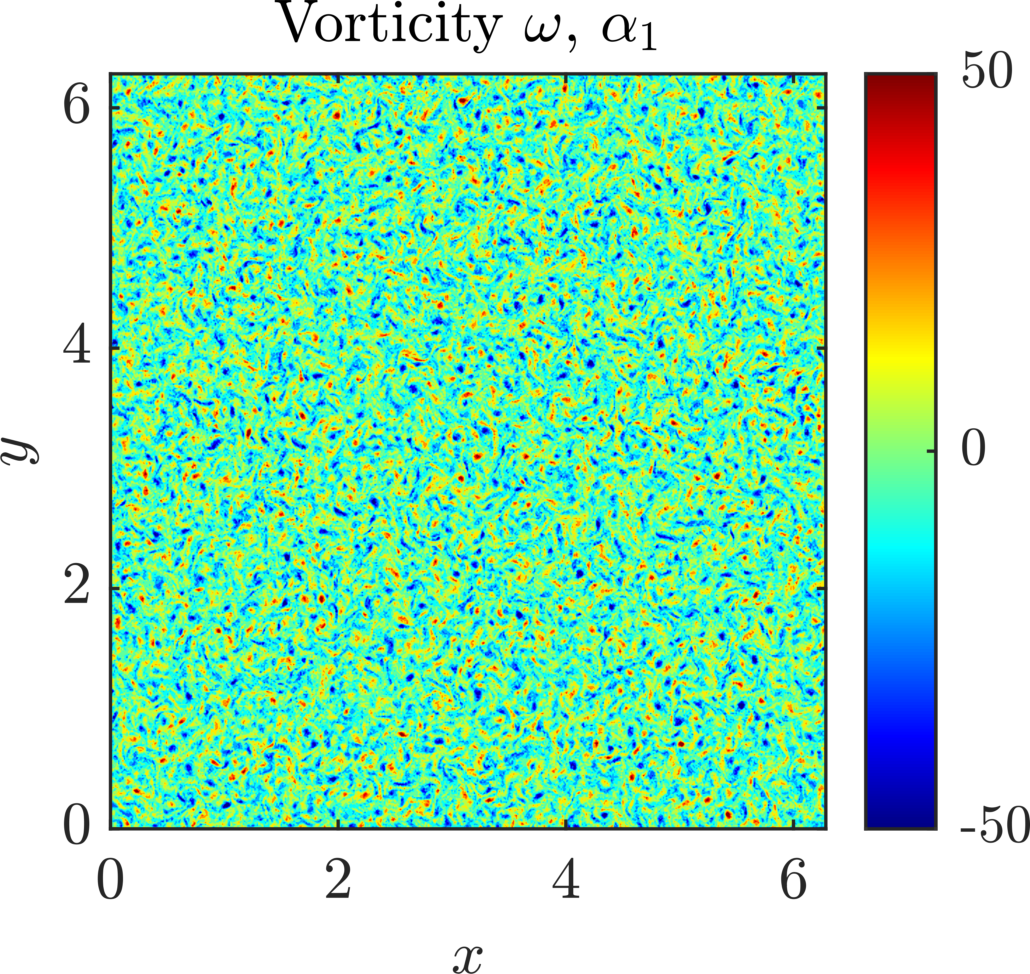} &&
\includegraphics{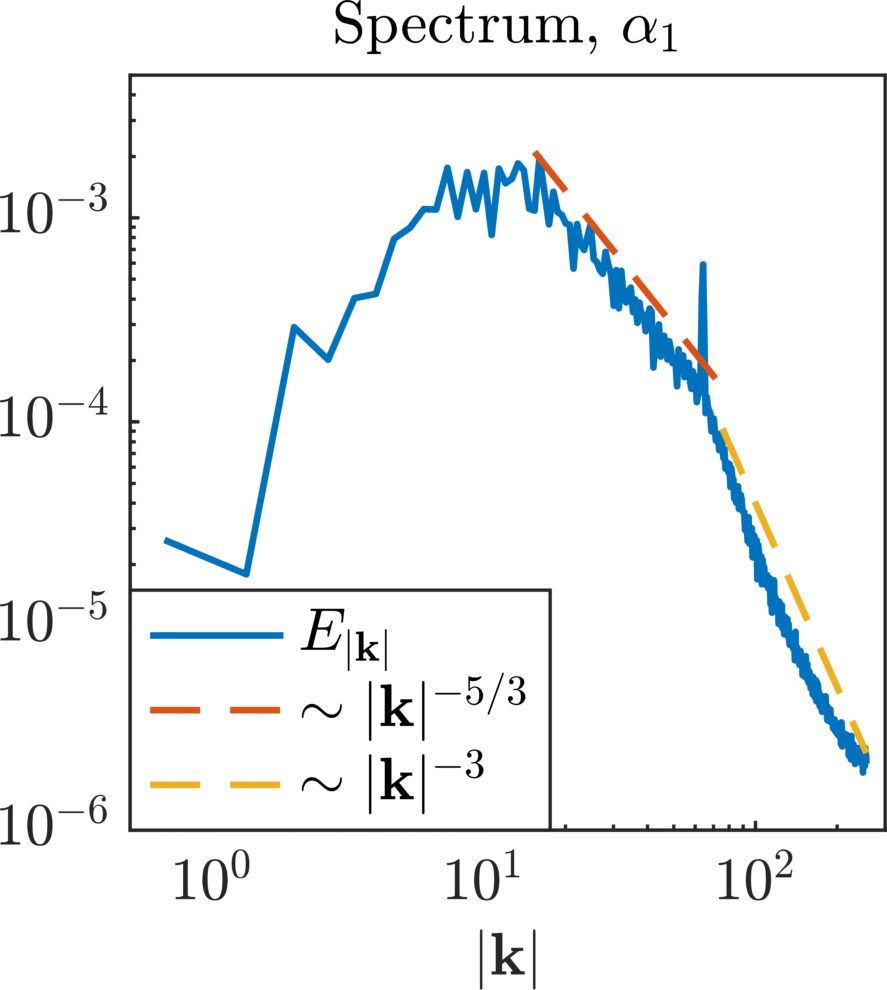} \\[5pt]
\includegraphics{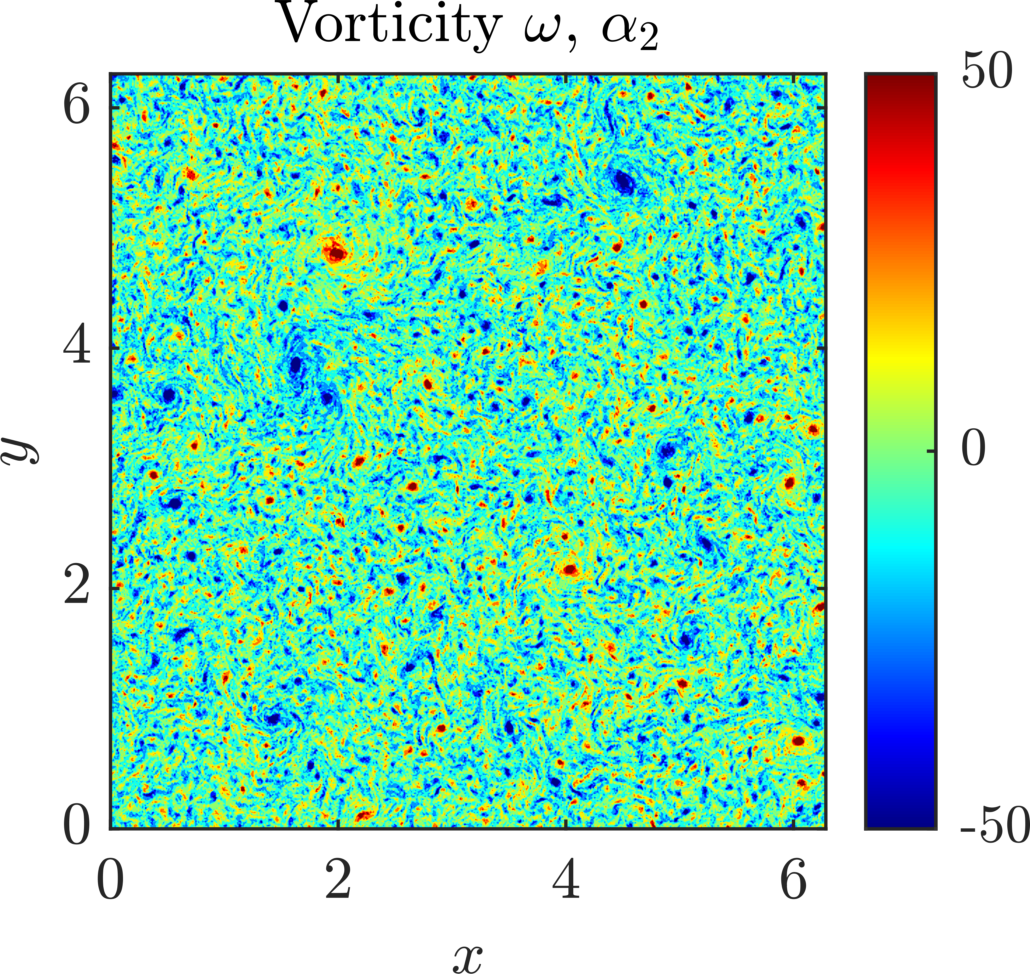} &&
\includegraphics{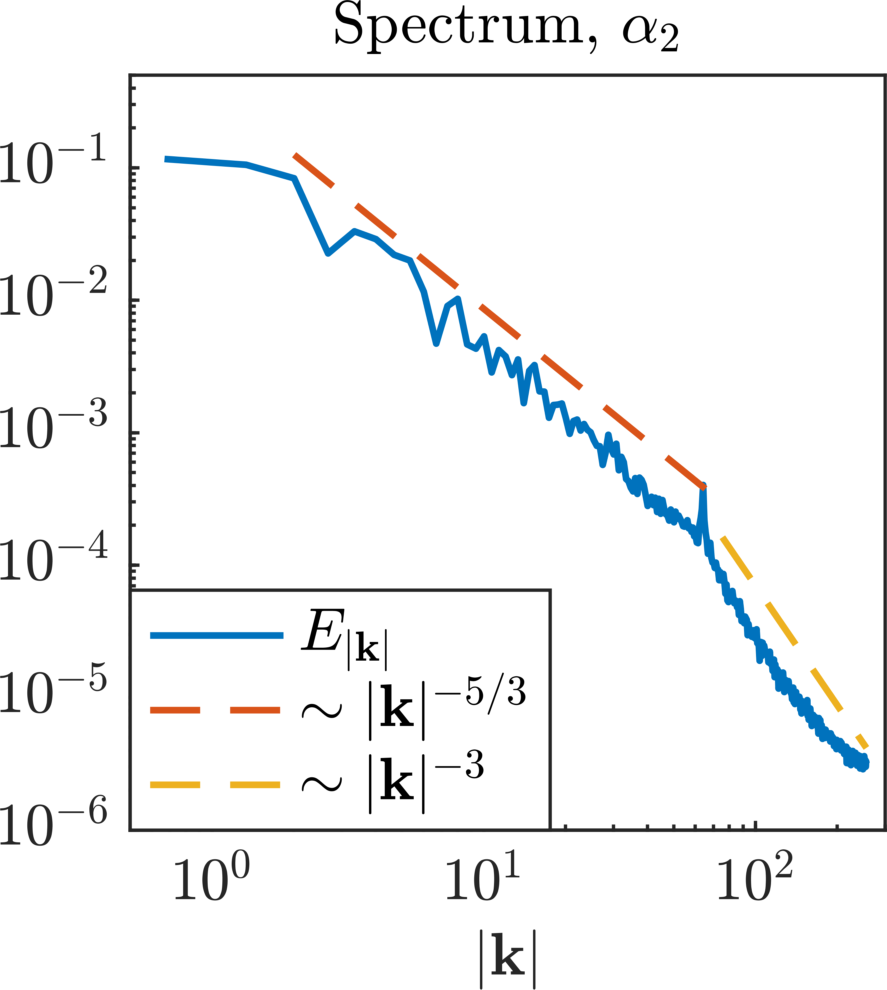} \\[5pt]
\includegraphics{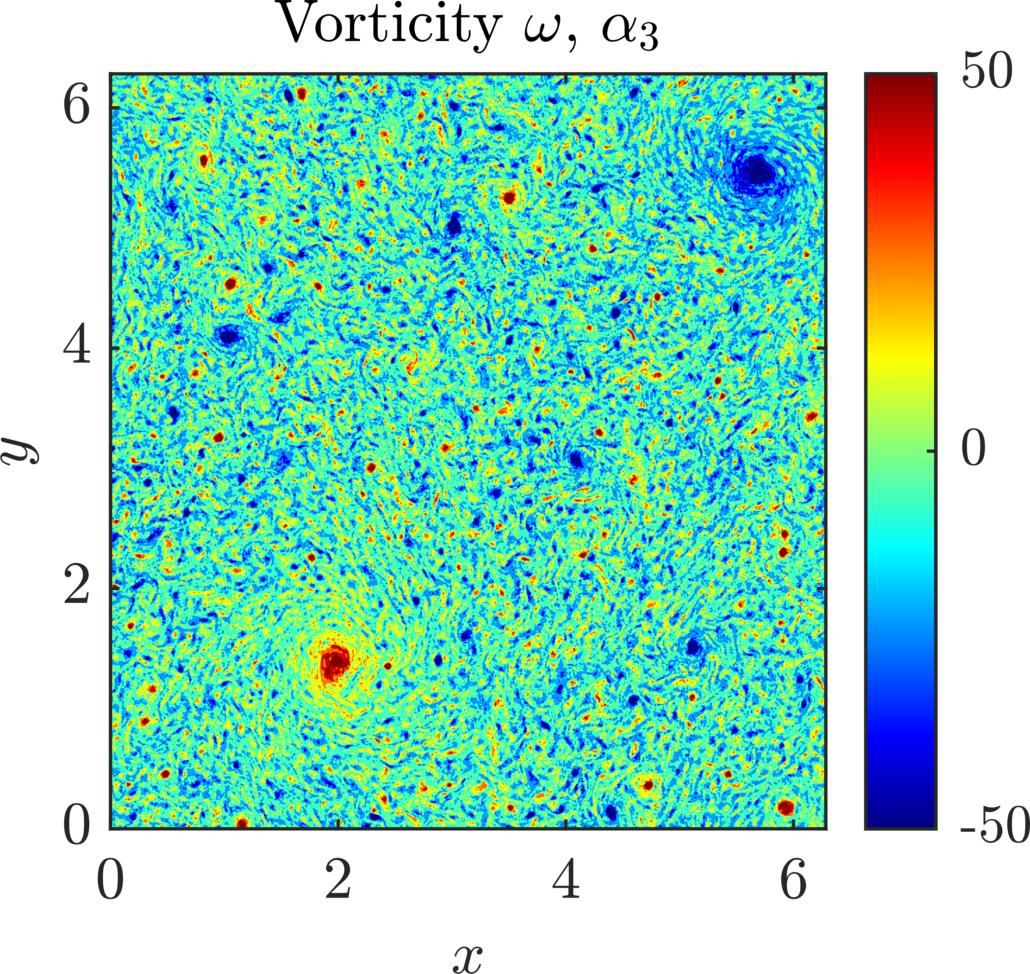} &&
\includegraphics{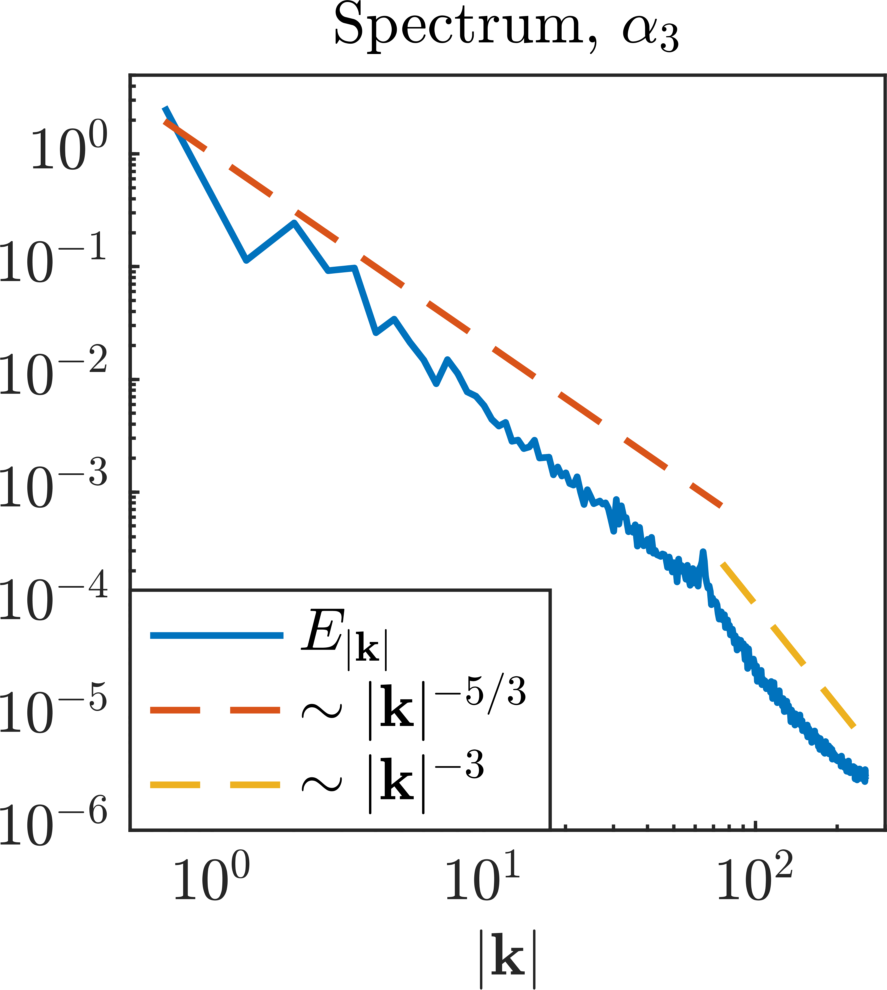}
\end{tabular}
\end{center}
\caption{Vorticity field $\omega$ and energy spectrum $E_k$ at the final time step of the simulation, for decreasing values of the damping parameter $\alpha$, i.e., $\alpha_1 > \alpha_2 > \alpha_3 > 0$.}
\label{fig:vorticity_spectrum}
\end{figure}

From the numerical simulations of this system on the torus $[0, 2\pi]^2$, we extract a velocity field, denoted below by $\mathbf u^{\text{NS}}$, that is then used in the equation
\begin{equation}
\partial_t T + \mathbf u^{\text{NS}} \cdot \nabla T = \kappa \Delta T,
\end{equation}
to examine the advection-diffusion of a passive scalar $T$. More precisely, we consider the Lagrangian dynamics for $t \ge 0$
\begin{equation}
\begin{aligned}
\frac{\d \mathbf x^{\text{NS}}(t)}{\d t} &= \mathbf u^{\text{NS}} \left(  \mathbf x^{\text{NS}}(t), t \right), \\
\mathbf x^{\text{NS}}(0) &= 0,
\end{aligned}
\end{equation}
and estimate, by Monte Carlo method, the function
\begin{equation} \label{eq:variance_true_function}
t \mapsto \E \left[ \abs{\mathbf x^{\text{NS}}(t)}^2 \right].
\end{equation}
These numerical computations aim to approximate, up to small numerical errors and the idealization of a stochastic white noise input and periodic boundary conditions, the real properties of the transport of a scalar quantity in a turbulent two-dimensional fluid. 

Our goal consists in modeling the turbulent fluid by means of a stochastic process $\mathbf u^{\text{model}}$ that is as simple as possible while remaining sufficiently realistic. Specifically, we require that the solution of
\begin{equation}
\begin{aligned}
\frac{\d \mathbf x^{\text{model}}(t)}{\d t} &= \mathbf u^{\text{model}} \left(  \mathbf x^{\text{model}}(t), t \right), \\
\mathbf x^{\text{model}}(0) &= 0,
\end{aligned}
\end{equation}
reproduces, to a good approximation, the behavior of the true system, in the sense that
\begin{equation} \label{eq:variance_model_true}
\E \left[ \abs{\mathbf x^{\text{model}}(t)}^2 \right] \simeq \E \left[ \abs{\mathbf x^{\text{NS}}(t)}^2 \right].
\end{equation}
We are interested especially on the main statistical features, rather than sophisticated ones like anomalous scalings, intermittency, and similar properties. 

We choose, a priori, to work with a model of the form
\begin{equation} \label{eq:model_u}
\mathbf u(\mathbf x, t) = \sum_{\mathbf k} \lambda_{\mathbf k} \mathbf e_{\mathbf k} (\mathbf x) \xi_t^{\mathbf k},
\end{equation}
where $\lambda_{\mathbf k}$ are real numbers, $(\mathbf e_{\mathbf k })_{\mathbf k}$ is a classical complete orthonormal system of divergence free vector fields on the two-dimensional torus, and $\xi_t^{\mathbf k}$ are real valued stationary stochastic processes \cite{CCH23,DeM25,Mem14}. The choice of $\lambda_{\mathbf k}$ follows standard approaches in turbulence modeling and is related to the energy spectrum (see, e.g., \cite[Chapter 6]{Pop00}), supposing that $\xi_t^{\mathbf k}$ are suitably normalized. The main difficulty therefore lies in the selection of the stochastic processes $\xi_t^{\mathbf k}$.

Two classical choices for $\xi_t^{\mathbf k}$ that appear in the literature are white noise (see, for instance, \cite{ECL23,FGL22,FlL22,FlT26,HaM17,Hol15,Kra94,PFH23,PPV23,RMC17}) and Ornstein--Uhlenbeck processes or, more generally, processes defined as solutions of stochastic differential equations (see, among others, \cite{ABC25,ACM22,CLR21,CiF25,DeP26,Pap22}). Even if these two classes are natural and promising, a closer analysis of the data does not confirm them. We investigate the statistical properties of the Fourier components of $\mathbf u^{\text{NS}}$ obtained from numerical simulations and first find a clear role played by a relaxation time $\tau_{\mathbf k}$, which rules out direct modeling by white noise. Second, we also observe a clear discrepancy with respect to an Ornstein--Uhlenbeck process with damping time $\tau_{\mathbf k}$. In fact, the statistical structure seems quite precise and differs from these two classes of processes. We therefore propose a general class of models that is suitable to cover the processes found in the numerical simulations. We also note that in \cite{PPB01} more complex classes of stochastic models based on systems of nonlinear stochastic differential equations are proposed and analyzed in the context of tracer dispersion.

In particular, we consider a family of stochastic processes obtained by filtering white noise with a suitable kernel $\theta$, scaled by the relaxation time $\tau_{\mathbf k}$, and defined by the Itô integral
\begin{equation} \label{eq:model_xitk}
\xi_t^{\mathbf k} = \int_{-\infty}^{+\infty} \frac1{\sqrt{\tau_{\mathbf k}}} \theta \left( \frac{t-s}{\tau_{\mathbf k}} \right) \dd W_s.
\end{equation}
This construction allows us to control both the smoothness of $\xi_t^{\mathbf k}$ and its autocorrelation structure through the choice of the kernel $\theta$. In particular, by selecting Gaussian kernels, we are able to reproduce the main statistical features, including the autocorrelation, observed in the Fourier modes extracted from the numerical simulations. This leads to our initial formulation of the model $\mathbf{u}^{\text{model}}$, obtained by substituting \eqref{eq:model_xitk} into equation \eqref{eq:model_u}.

We then analytically investigate the diffusion properties of the synthetic model, based on a Green--Kubo type formula. In both the short-time and long-time regimes, we observe agreement between model and data, as required in \eqref{eq:variance_model_true}. Moreover, these results are consistent with the classical transition from ballistic to diffusive behavior in tracer dispersion \cite{FGV01,Tay21}.

The motivation and main novelties of this work are summarized below.
\begin{itemize}[leftmargin=*]
\item We focus on the Eulerian temporal correlation structure in the two-dimensional setting, where the inverse cascade dominates. This aspect is much less investigated than in the classical three-dimensional setting, where the direct cascade dominates and studies mainly focus on spatial correlations or Lagrangian dynamics.
\item We observe that the temporal statistical structure differs from that assumed in classical stochastic models, such as the delta-correlation adopted in many idealized models \cite{Kaz68,Kra68} or the exponential decay of correlations associated with a Markovian structure, often used as an idealization of short but finite correlation times. Instead, we find that autocorrelations decay faster than exponentially and exhibit a similar temporal structure across different Fourier components.
\item We show that the Fourier-based synthetic model introduced here can be used to compute diffusion coefficients analytically, opening the door to applications in turbulent diffusion.
\end{itemize}

Finally, an important question is the range of validity of the results presented in this work. First, we remark that the analysis applies when $\alpha$ is large or moderate, namely when the structures created by the inverse cascade are still not very large and continue to contribute to a sort of chaotic motion of small-to-moderate scale structures. The turbulence and transport features indeed change when $\alpha$ is small and the largest inverse-cascade structures become very large.

Second, our analysis is based exclusively on time series obtained from direct numerical simulations of the two-dimensional Navier--Stokes equations on a periodic domain, forced at intermediate scales and subject to significant friction damping. We have not analyzed any other datasets, and therefore our conclusions are restricted to this particular, already idealized, fluid model. In particular, we cannot extrapolate these results to three-dimensional or even quasi-two-dimensional flows \cite{Ale23}, nor can we conclude that they hold for real turbulent flows.

As an example of a mechanism that could lead to deviations from the statistical behavior observed here, one may consider the presence of coherent vortical structures. In two-dimensional turbulence, such structures emerge when the friction damping is too weak to suppress their formation through the inverse cascade. In three-dimensional turbulence, analogous coherent structures arise, for example, as vortex filaments during the direct cascade. In the first case, the large size and relatively coherent motion of these structures are expected to produce much longer temporal correlations, making a simplified statistical model less appropriate, or at least requiring a more careful quantification of the deviations from the model proposed here, including the assumption that the stochastic processes associated with different Fourier components are independent. In the second case, the temporal correlations may still remain short, but their statistical structure is likely to be more complex, possibly involving intermittency effects that are neglected in the present model and are less pronounced in the idealized two-dimensional setting.

\paragraph{Outline.} The rest of the paper is organized as follows. In \cref{sec:numerics_Fourier}, we analyze the correlation properties of the time series of the Fourier components of the solution of the two-dimensional stochastic Navier--Stokes equations obtained by direct numerical simulation, identifying common features. In \cref{sec:model_process}, we summarize the numerical results by introducing a model for the time series of \cref{sec:numerics_Fourier}, investigating the properties of this model and comparing it with the data. In \cref{sec:velocity_field}, we introduce a synthetic turbulent velocity field based on the stochastic model of the Fourier components identified above, and we present a white-noise approximation obtained in the limit of small correlation time. In \cref{sec:heat_equation}, we compare the diffusion properties of this synthetic turbulent velocity field with those obtained from direct numerical simulations. Finally, in \cref{sec:conclusion}, we draw the conclusions.

\section{Numerical results on the Fourier components} \label{sec:numerics_Fourier}

The numerical results presented in this work are obtained using a fully dealiased (2/3 rule) pseudo-spectral solver for the two-dimensional Navier--Stokes equations \cite{CHQ07}. In particular, the numerical scheme conserves enstrophy up to machine precision whenever a quadratic-invariant time-integration method is employed. In practice, a third-order Runge--Kutta scheme is used with a CFL number of 0.3, providing a good balance between accuracy and computational efficiency. In this section, we consider a slightly different setup than the one used in the simulations presented in \cref{sec:intro}. In particular, the spatial discretization uses $\mathsf N = 1024$ modes, while the kinematic viscosity and damping coefficients are $\nu = 10^{-5}$ and $\alpha = 1.25 \cdot 10^{-2}$, respectively. We have compared the Fourier statistics obtained with $\mathsf N = 512$, used in \cref{sec:intro}, and $\mathsf N = 1024$, and found no significant differences in the post-processing of the time series presented here. All numerical experiments are performed for different forcing scales $k_f = 50$ and $k_f = 100$. These parameters correspond to Reynolds numbers associated with the inverse energy and direct enstrophy cascades of $\Rey_\alpha = 4.52e2$ and $\Rey_\nu = 9.78e2$ for $k_f = 50$, and $\Rey_\alpha = 5.82e2$ and $\Rey_\nu = 3.87e2$ for $k_f = 100$. The resolution is such that $\mathsf N L_\nu \approx 6.5$ for $k_f=50$ and $\mathsf N L_\nu \approx 5$ for $k_f=100$, where $L_\nu = \nu^{1/2} / \eta_nu^{1/6}$ denotes the enstrophy dissipation scale. Lagrangian trajectories are integrated by means of a second-order Adams--Bashforth scheme, and statistics are collected after the flow had reached a statistically stationary state.

In this section, we focus on the real Fourier components of the vorticity field $\omega$, which we denote by $f_t^{\mathbf k}$, where $\mathbf k$ is the corresponding wavenumber. A white-noise forcing is applied to the modes with wavenumbers $\abs{\mathbf k} \sim k_f$, where the forcing wavenumber is either $k_f = 50$ or $k_f = 100$. In particular, we consider the modes $\mathbf k = (50,0)$ and $\mathbf k = (100,0)$ for the two cases, which satisfy exactly $\abs{\mathbf k}=k_f$. In addition, we analyze the modes
\begin{equation} \label{eq:modes_used}
K = \begin{bmatrix}
3 & 18 & 34 & 71 & 177 & 354 & 512 \\
4 & 18 & 37 & 71 & 177 & 354 & 512
\end{bmatrix},
\end{equation}
whose moduli are
\begin{equation}
\abs{K} \simeq \begin{bmatrix}
5 & 25.46 & 50.25 & 100.41 & 250.32 & 500.63 & 724.08
\end{bmatrix}.
\end{equation}
The trajectories $f_t^{\mathbf k}$ are simulated up to a final time $T = 1000$ using a time step $\delta = 0.005$, which is sufficiently small to resolve the smallest scales. Consequently, all Fourier coefficient time series are adequately resolved, except for the modes corresponding to the forcing wavenumber $k_f$, where white-noise forcing is applied directly.

\begin{figure}
\begin{center}
\begin{tabular}{cc}
\includegraphics{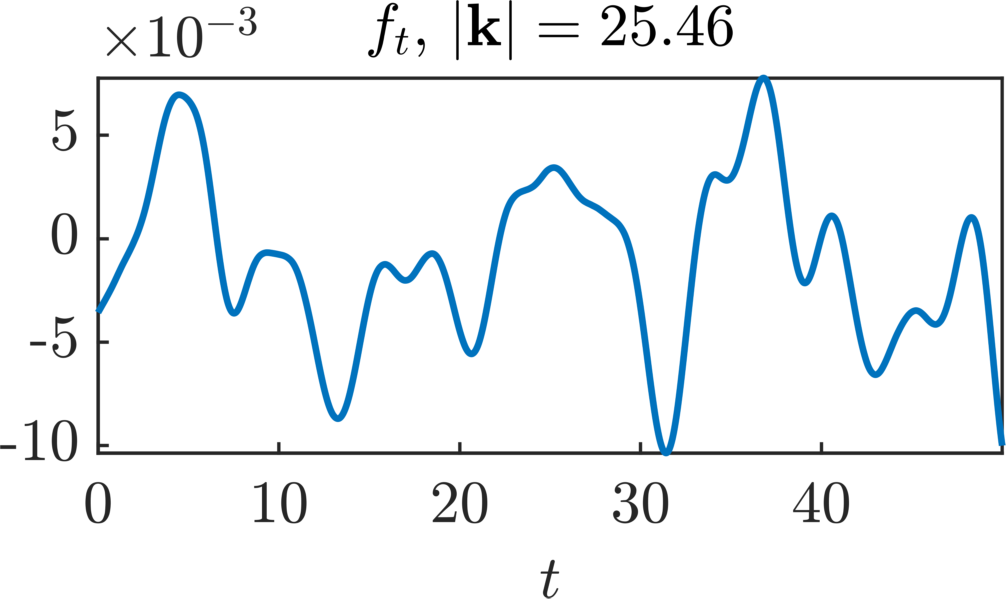} &
\includegraphics{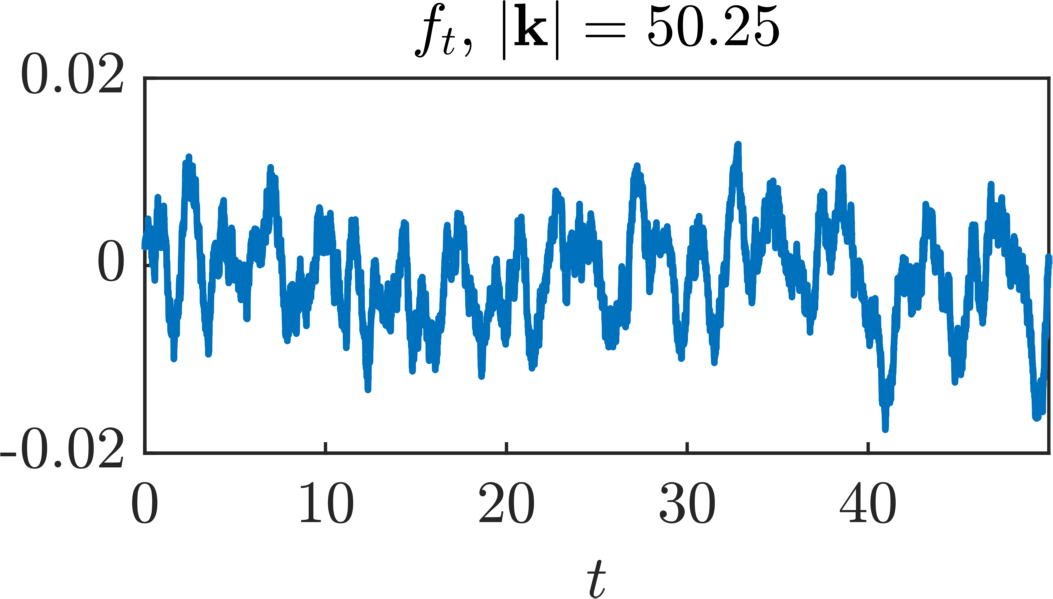} \\[5pt]
\includegraphics{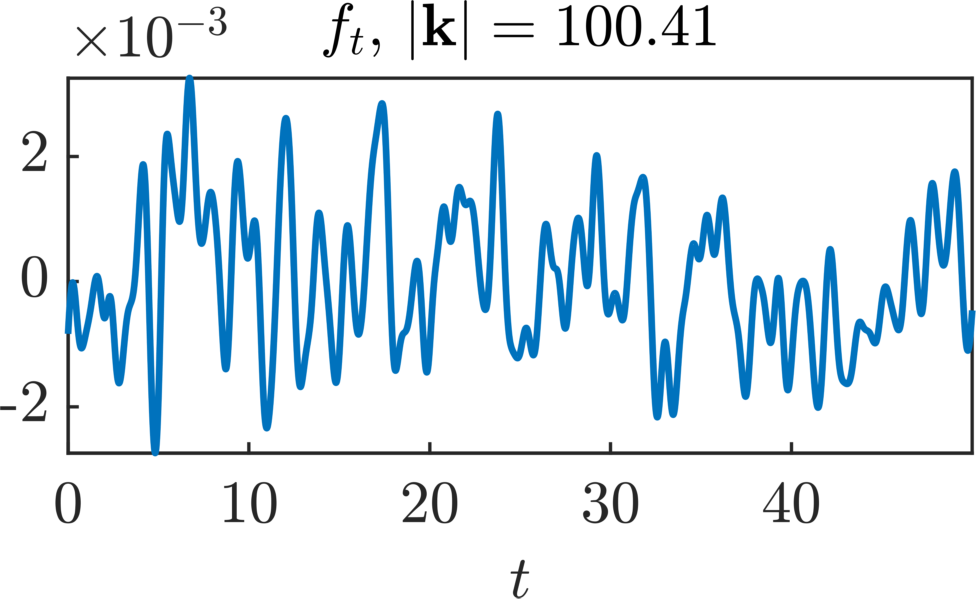} &
\includegraphics{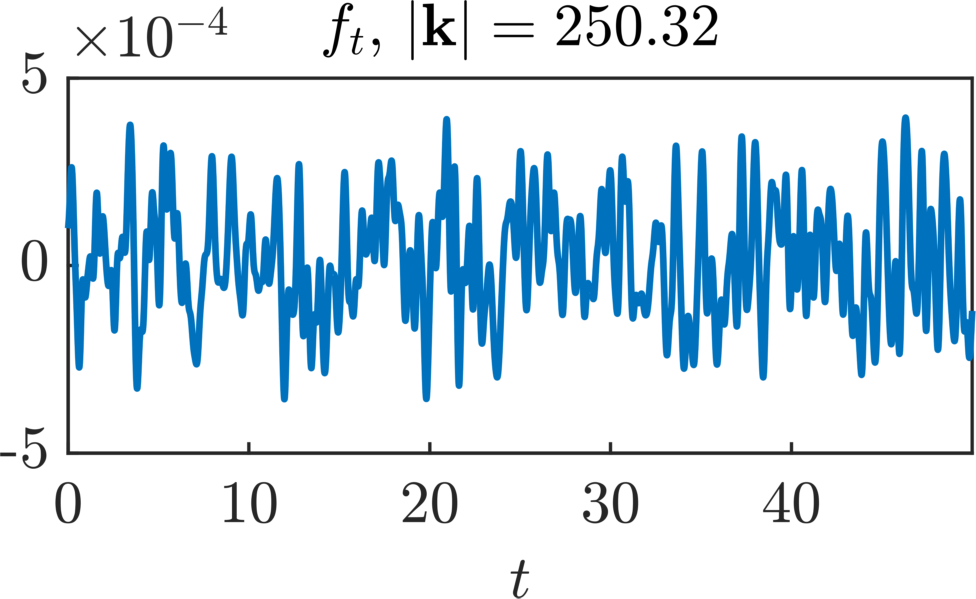}
\end{tabular}
\end{center}
\caption{Trajectories of several Fourier modes with forcing applied at $\abs{\mathbf k} \sim k_f = 50$.}
\label{fig:Fourier_trajectory_k50}
\end{figure}

\begin{figure}
\begin{center}
\begin{tabular}{cc}
\includegraphics{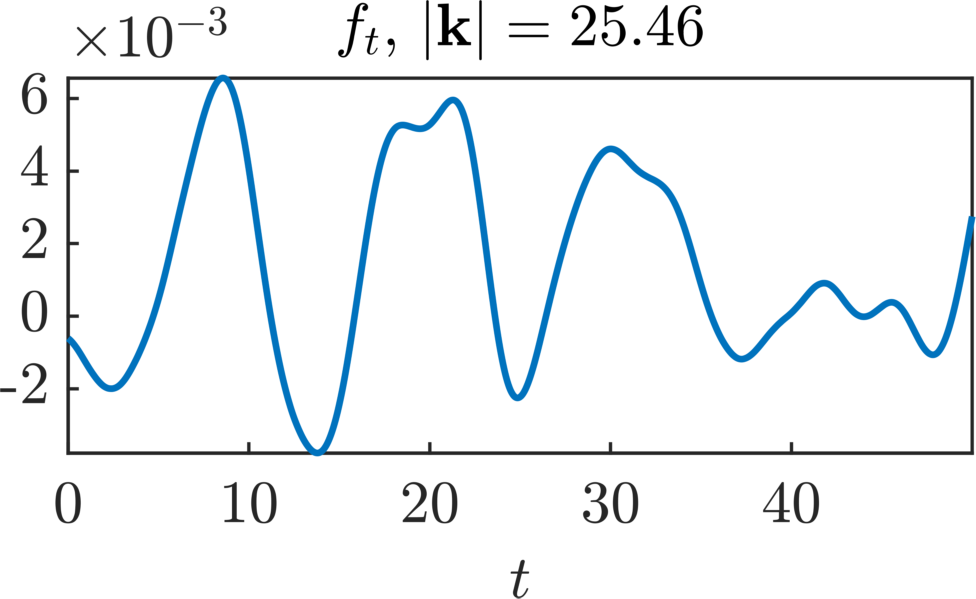} &
\includegraphics{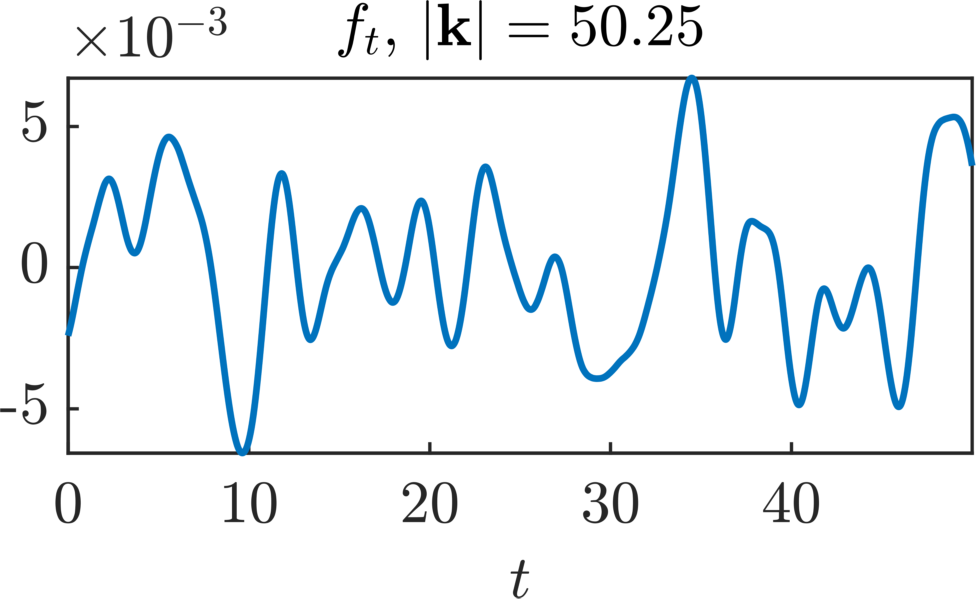} \\[5pt]
\includegraphics{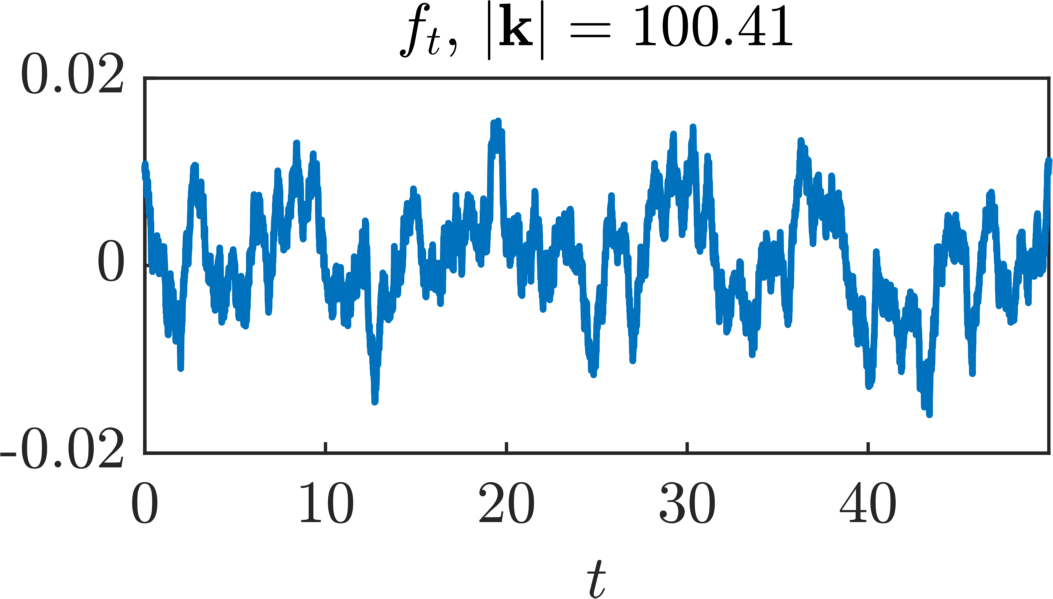} &
\includegraphics{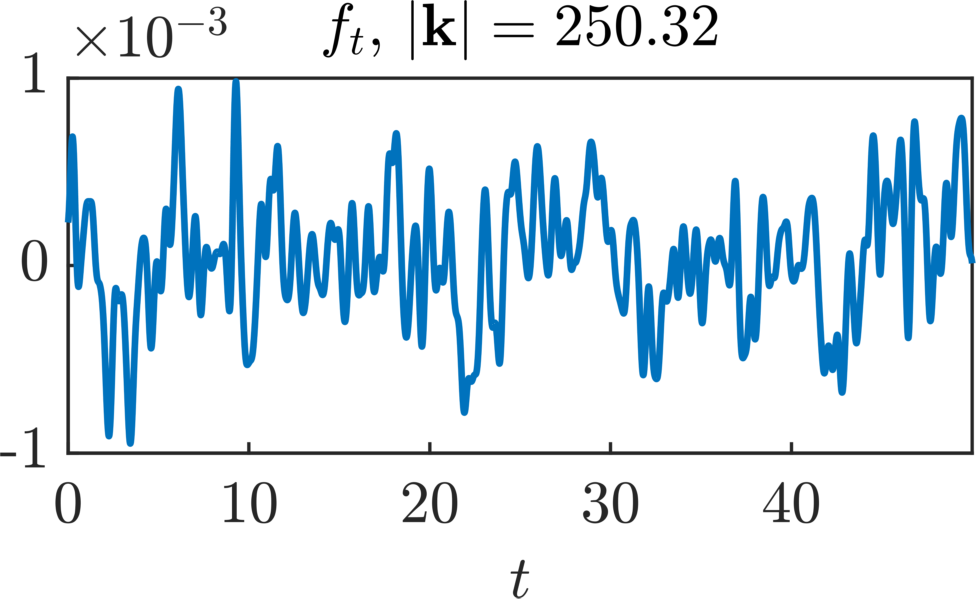}
\end{tabular}
\end{center}
\caption{Trajectories of several Fourier modes with forcing applied at $\abs{\mathbf k} \sim k_f = 100$.}
\label{fig:Fourier_trajectory_k100}
\end{figure}

We first visualize the trajectories in \cref{fig:Fourier_trajectory_k50,fig:Fourier_trajectory_k100} for a subset of the modes in \eqref{eq:modes_used}. The Fourier modes that are not directly forced appear smooth, with oscillations of increasing frequency as the wavenumber increases, corresponding to smaller spatial scales. In contrast, the forced modes exhibit trajectories that seem continuous but nowhere differentiable stochastic processes, which is consistent with the presence of white-noise forcing.

\begin{figure}
\begin{center}
\begin{tabular}{cccc}
\includegraphics{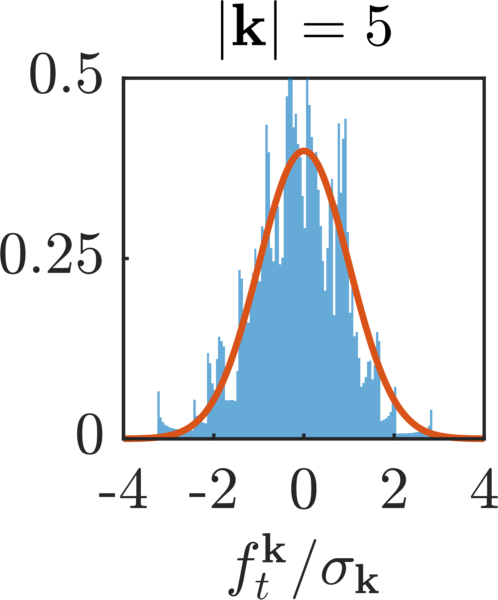} &
\includegraphics{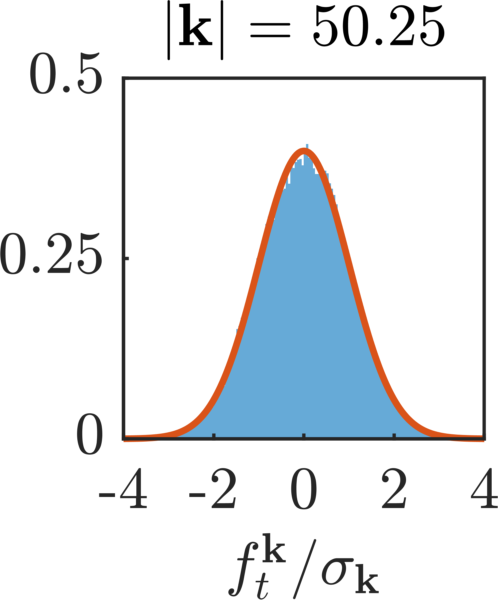} &
\includegraphics{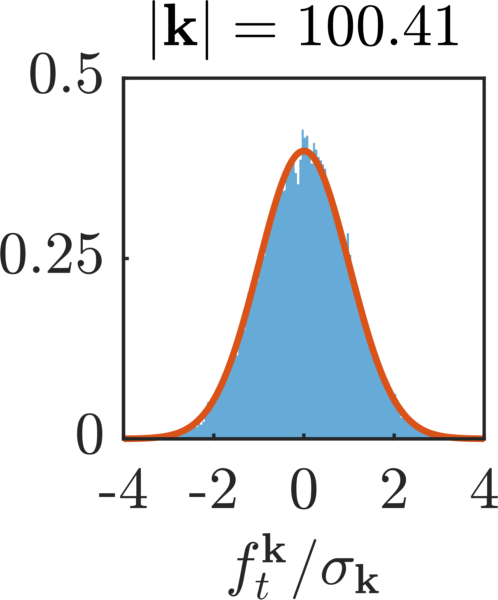} &
\includegraphics{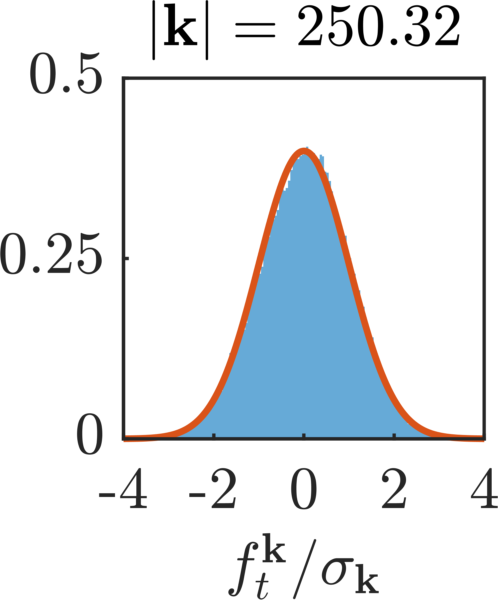} \\[0.5cm]
\includegraphics{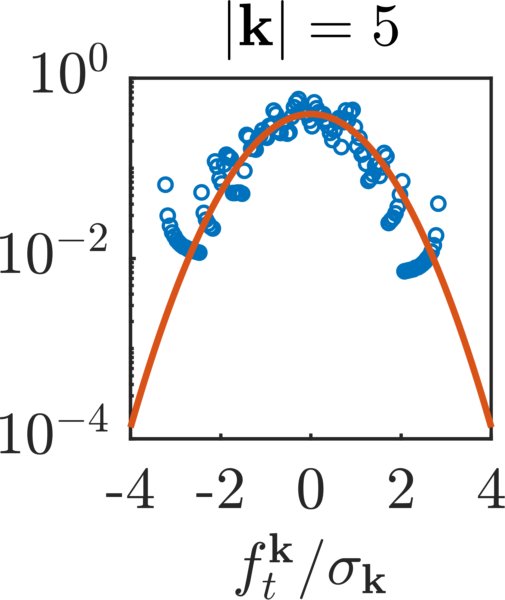} &
\includegraphics{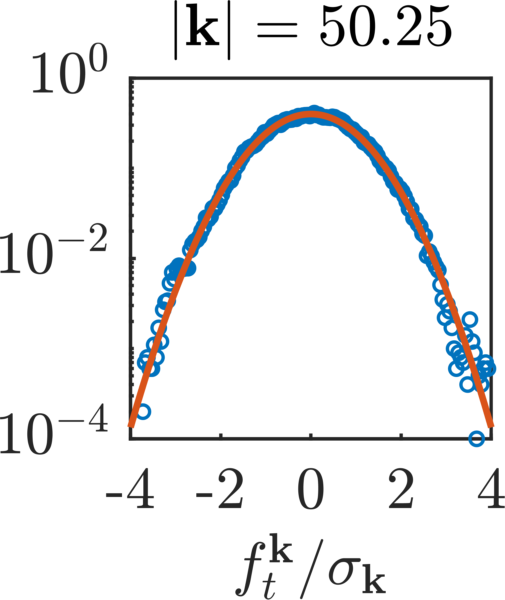} &
\includegraphics{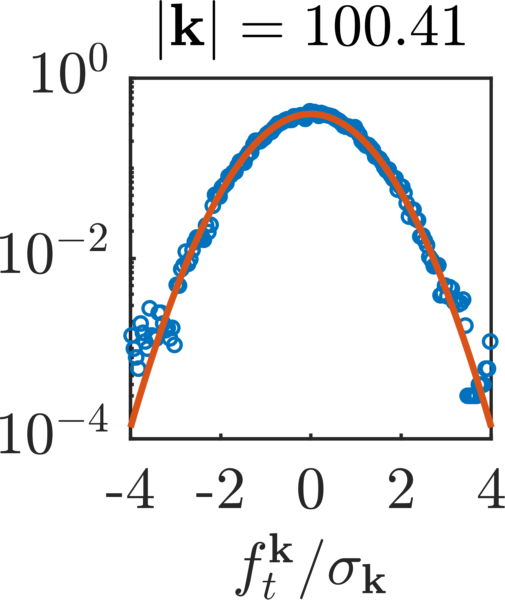} &
\includegraphics{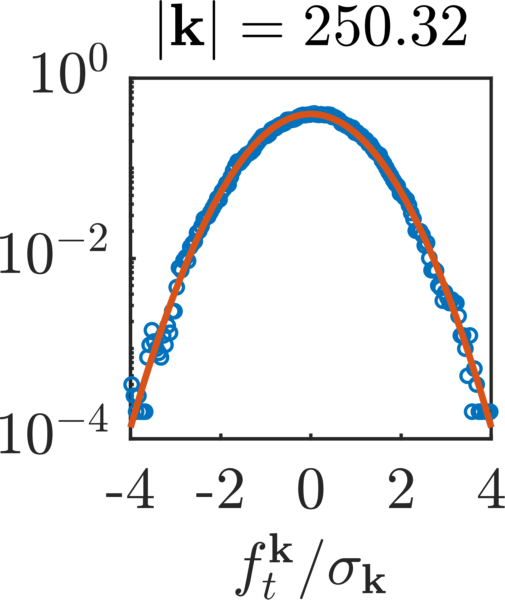}
\end{tabular}
\end{center}
\caption{Distribution of several Fourier modes with forcing applied at $\abs{\mathbf k} \sim k_f = 50$, compared with a Gaussian distribution $\mathcal N(0,\sigma_{\mathbf k}^2)$, shown in both linear (top) and semilogarithmic (bottom) scales.}
\label{fig:Fourier_invariant_distribution_k50}
\end{figure}

\begin{figure}
\begin{center}
\begin{tabular}{cccc}
\includegraphics{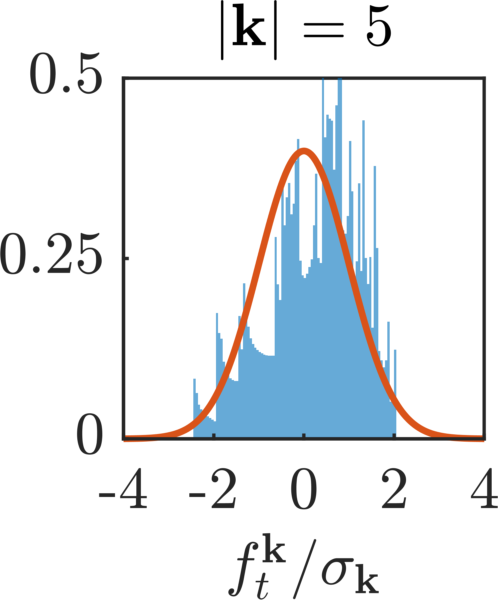} &
\includegraphics{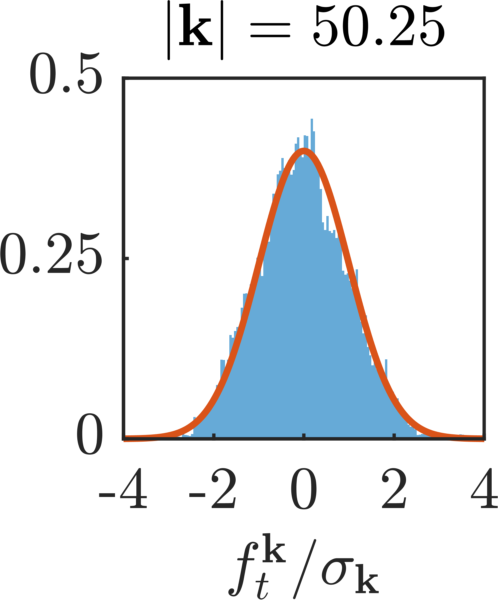} &
\includegraphics{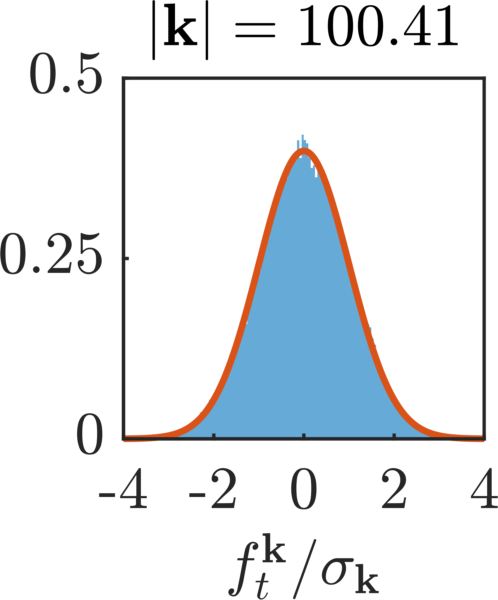} &
\includegraphics{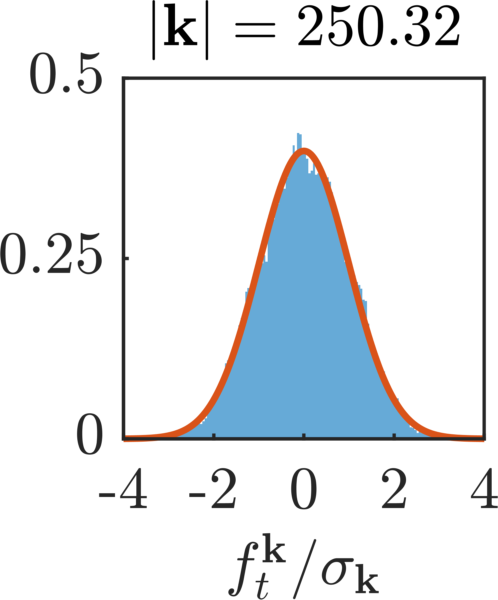}  \\[0.5cm]
\includegraphics{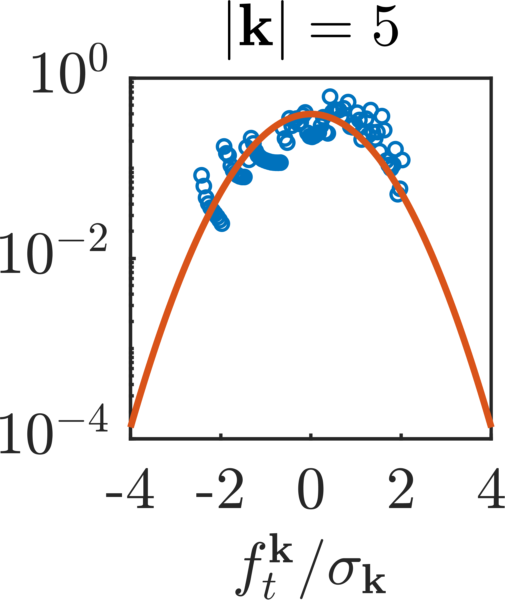} &
\includegraphics{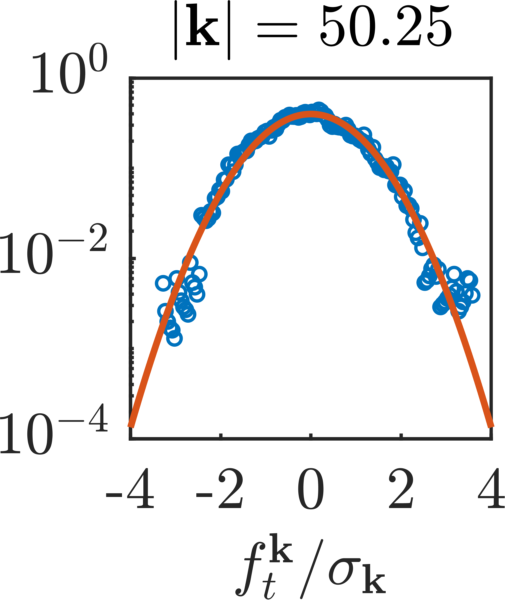} &
\includegraphics{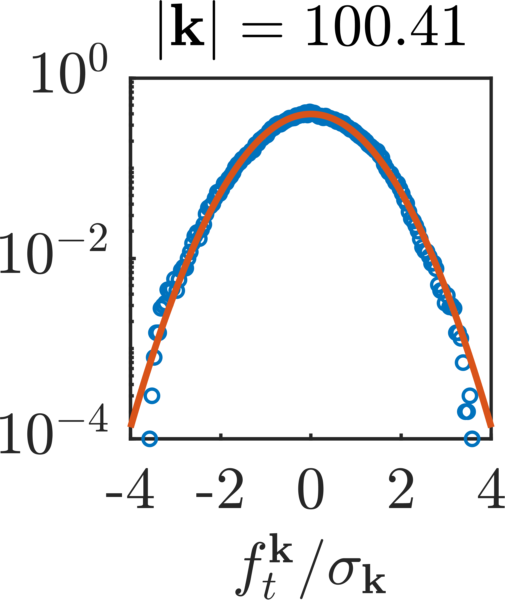} &
\includegraphics{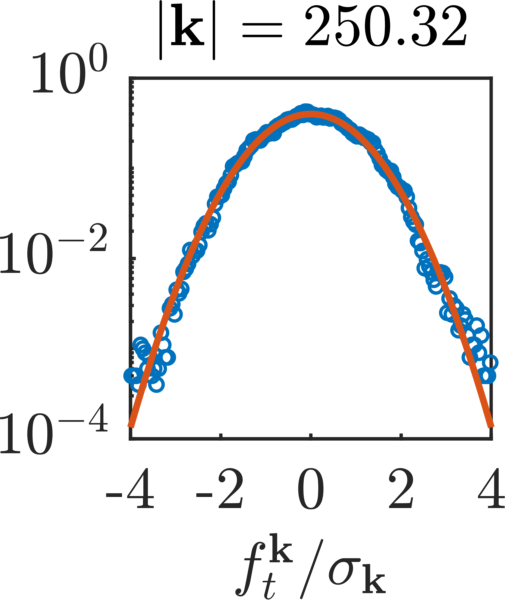}
\end{tabular}
\end{center}
\caption{Distribution of several Fourier modes with forcing applied at $\abs{\mathbf k} \sim k_f = 100$, compared with a Gaussian distribution $\mathcal N(0,\sigma_{\mathbf k}^2)$, shown in both linear (top) and semilogarithmic (bottom) scales.}
\label{fig:Fourier_invariant_distribution_k100}
\end{figure}

In \cref{fig:Fourier_invariant_distribution_k50,fig:Fourier_invariant_distribution_k100}, we also plot histograms of some of the Fourier modes in \eqref{eq:modes_used}. Except for the smallest wavenumber, which corresponds to the largest spatial scale, the empirical distributions match zero-mean Gaussian distributions whose variance is estimated from the data. This agreement support the use of Gaussian processes as a reasonable model for the Fourier modes.

\begin{figure}
\begin{center}
\includegraphics{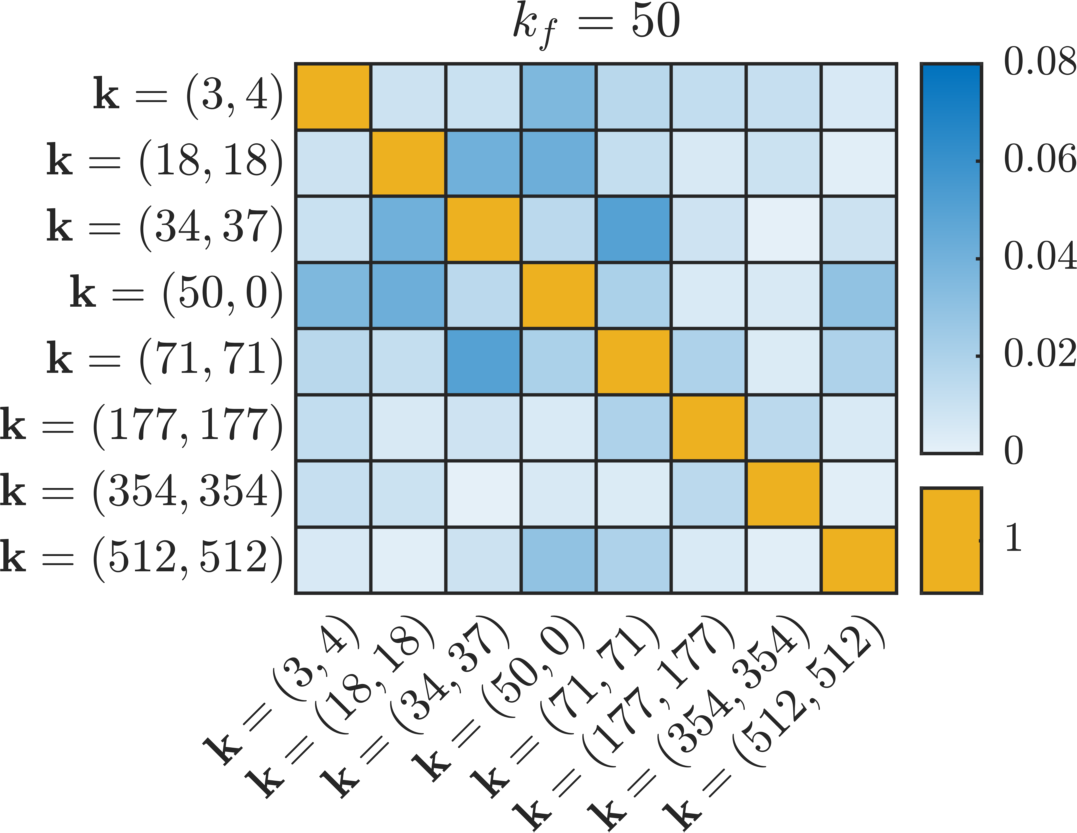} \hspace{0.5cm}
\includegraphics{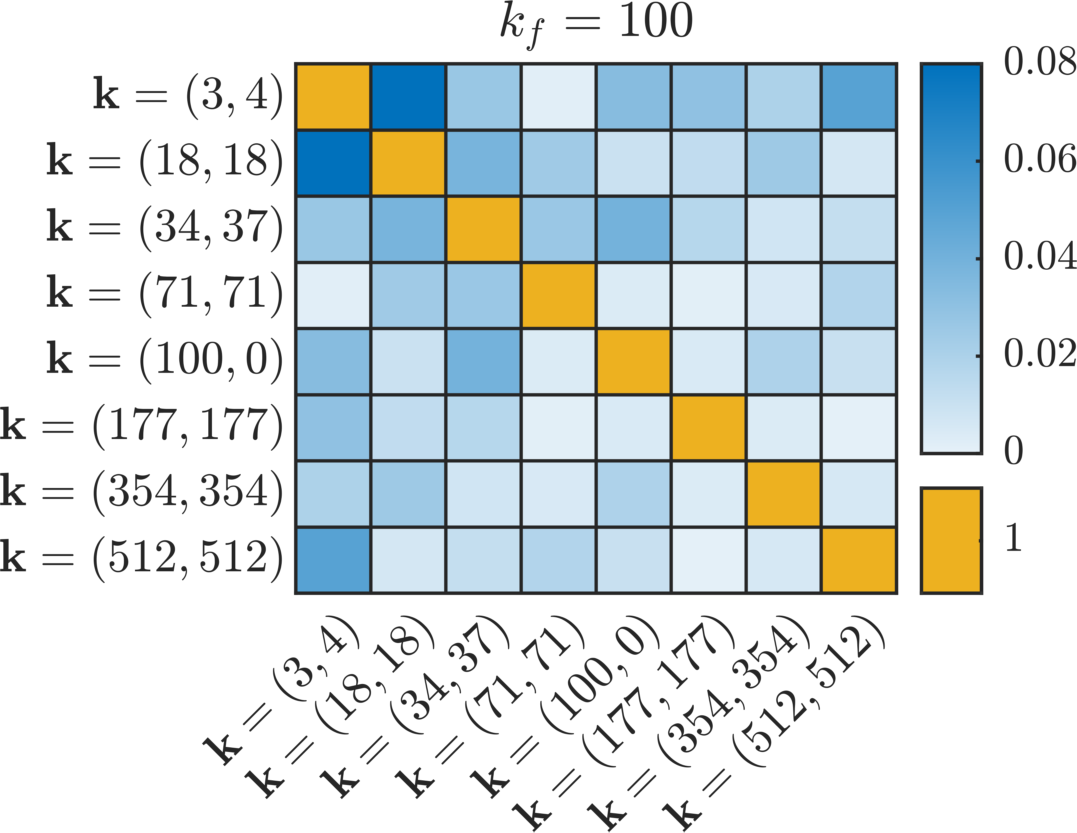}
\end{center}
\caption{Heatmap of the absolute value of the correlations between different Fourier modes, with forcing applied at $\abs{\mathbf k} \sim k_f = 50$ (left) and $\abs{\mathbf k} \sim k_f = 100$ (right).}
\label{fig:independence}
\end{figure}

We then investigate the statistical properties of the Fourier modes. We first compute the correlations between different Fourier modes and summarize the results in the heatmaps in \cref{fig:independence}. We observe that all absolute values of the correlations are smaller than $0.08$, suggesting that it may be reasonable to assume $\E [\xi_t^{\mathbf k} \xi_t^{\mathbf k'}] =0$ for $\mathbf{k}+\mathbf{k'} \ne 0$  in equation \eqref{eq:model_u}. We want to remark, at this point, that small correlation does not imply statistical or dynamical independence. Fourier modes are, in fact, coupled through triadic interactions. Modelling the Fourier coefficients as independent stochastic processes is therefore a convenient mathematical framework, which captures the covariance but distorts higher order statistics.

We now focus on the Fourier modes individually. In particular, we consider the autocorrelation function at stationarity
\begin{equation} \label{eq:Rk_Delta}
R_{\mathbf k}(\Delta) = \frac{\E[f_{t+\Delta}^{\mathbf k} f_t^{\mathbf k}]}{\E[(f_t^{\mathbf k})^2]},
\end{equation}
and the normalized variance of the increments
\begin{equation} \label{eq:Vk_Delta}
V_{\mathbf k}(\Delta) = \frac{\E[(f_{t+\Delta}^{\mathbf k} - f_t^{\mathbf k})^2]}{2 \E[(f_t^{\mathbf k})^2]} = 1 - R_{\mathbf k}(\Delta).
\end{equation}
We also define the relaxation time
\begin{equation} \label{eq:relaxation_time}
\tau_{\mathbf k} = 2 \int_0^\infty R_{\mathbf k}(\Delta) \dd \Delta,
\end{equation}
where the coefficient $2$ is justified in \cref{sec:heat_equation}. The autocorrelation function $R_{\mathbf k}(\Delta)$ is estimated from the data by temporal averages using the MATLAB function \texttt{autocorr} (Econometrics Toolbox), which provides a standard and numerically stable implementation.

\begin{figure}
\begin{center}
\includegraphics{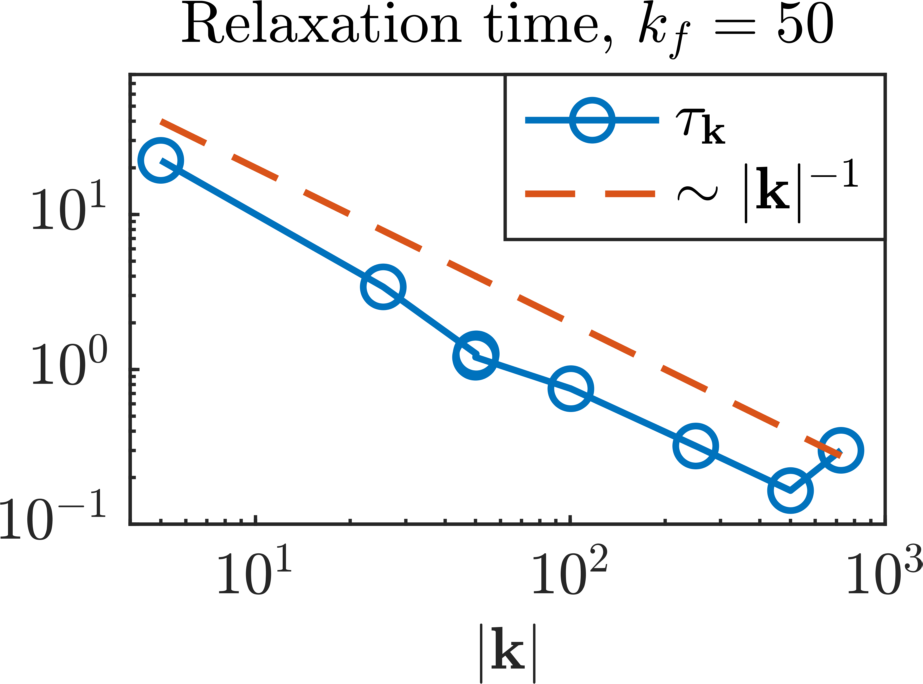} \hspace{0.5cm}
\includegraphics{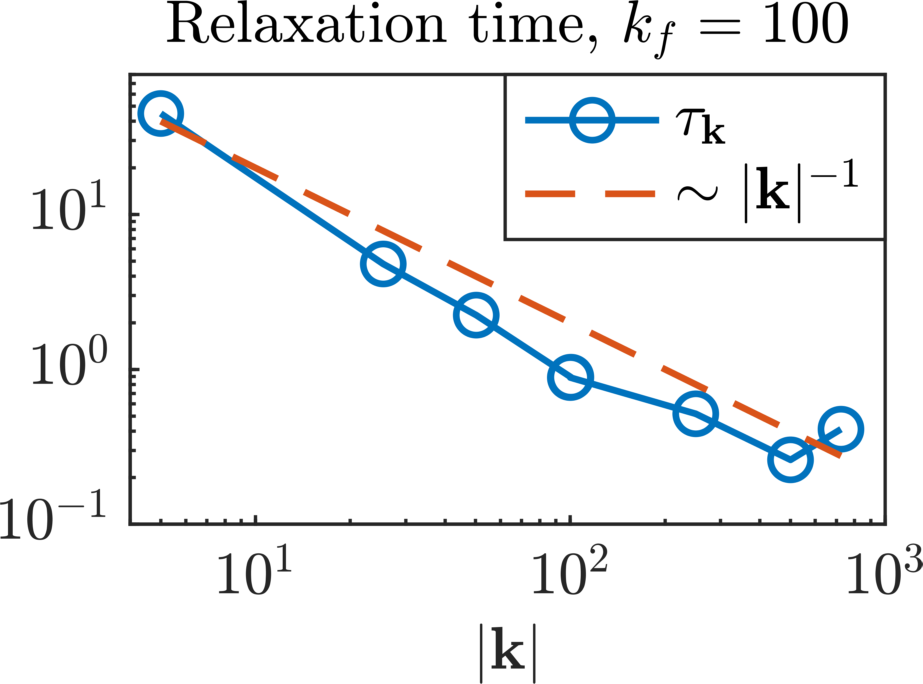}
\end{center}
\caption{Estimated relaxation time $\tau_{\mathbf k}$ for different Fourier modes, with forcing applied at $\abs{\mathbf k} \sim k_f = 50$ (left) and $\abs{\mathbf k} \sim k_f = 100$ (right).}
\label{fig:estimated_tau}
\end{figure}

The estimated relaxation times $\tau_{\mathbf k}$ are shown in \cref{fig:estimated_tau}, where $\tau_{\mathbf k}$ is plotted as a function of $\abs{\mathbf k}$. We observe an approximate inverse proportionality between the relaxation time and the Fourier wavenumber magnitude, i.e., $\tau_{\mathbf k} \propto \abs{\mathbf k}^{-1}$, except at the smallest scale. This scaling differs from the classical Kolmogorov prediction $\tau_k \sim k^{-2/3}$ associated with the inverse-cascade eddy-turnover time \cite[Chapter 6]{Pop00}. Nevertheless, we also note that a scaling of the form $\tau_k \sim k^{-1}$ has previously been reported, for example in \cite{PMC06}, where it is attributed to sweeping effects.

\begin{figure}
\begin{center}
\includegraphics{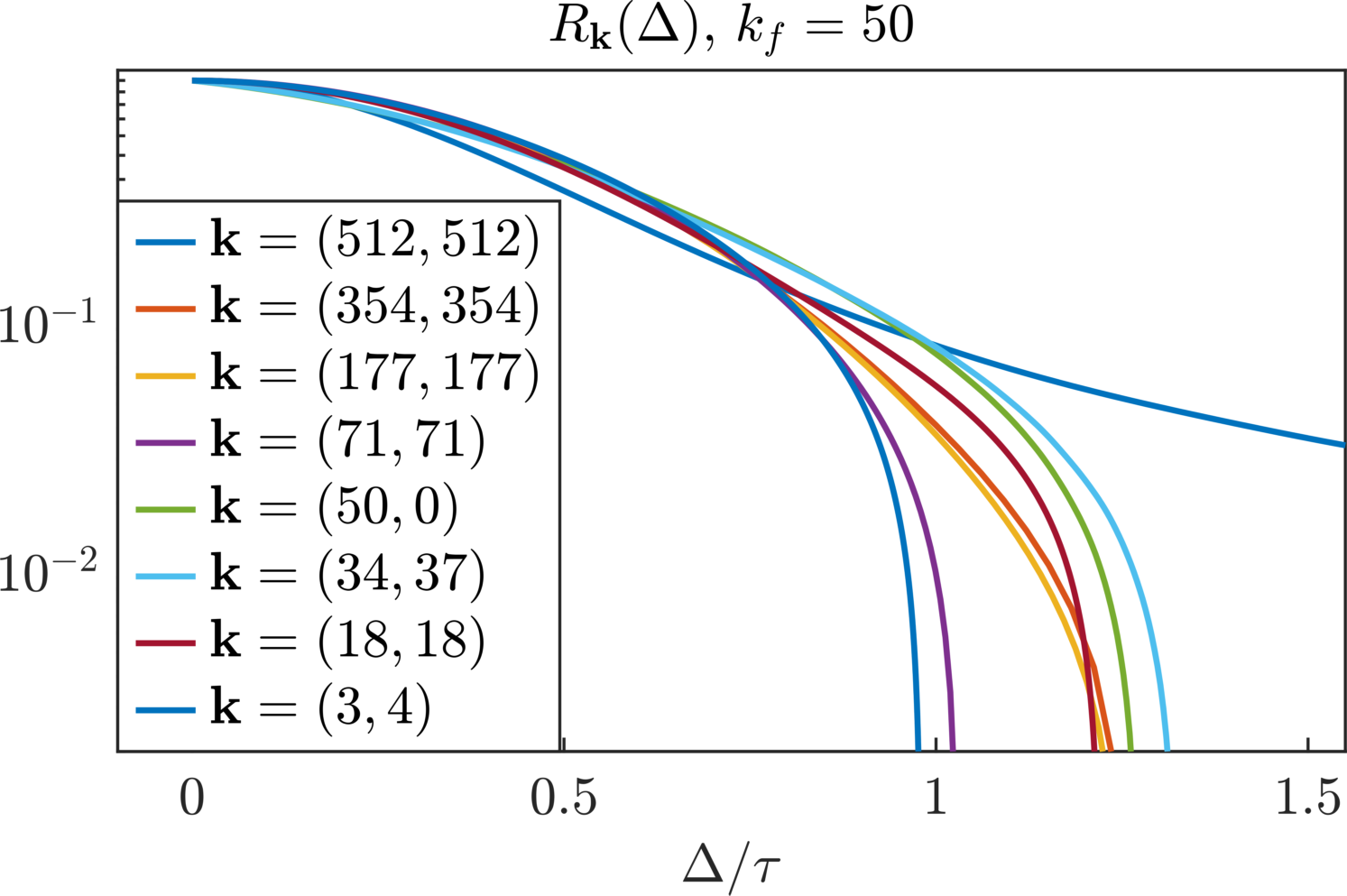} \\ \vspace{0.5cm}
\includegraphics{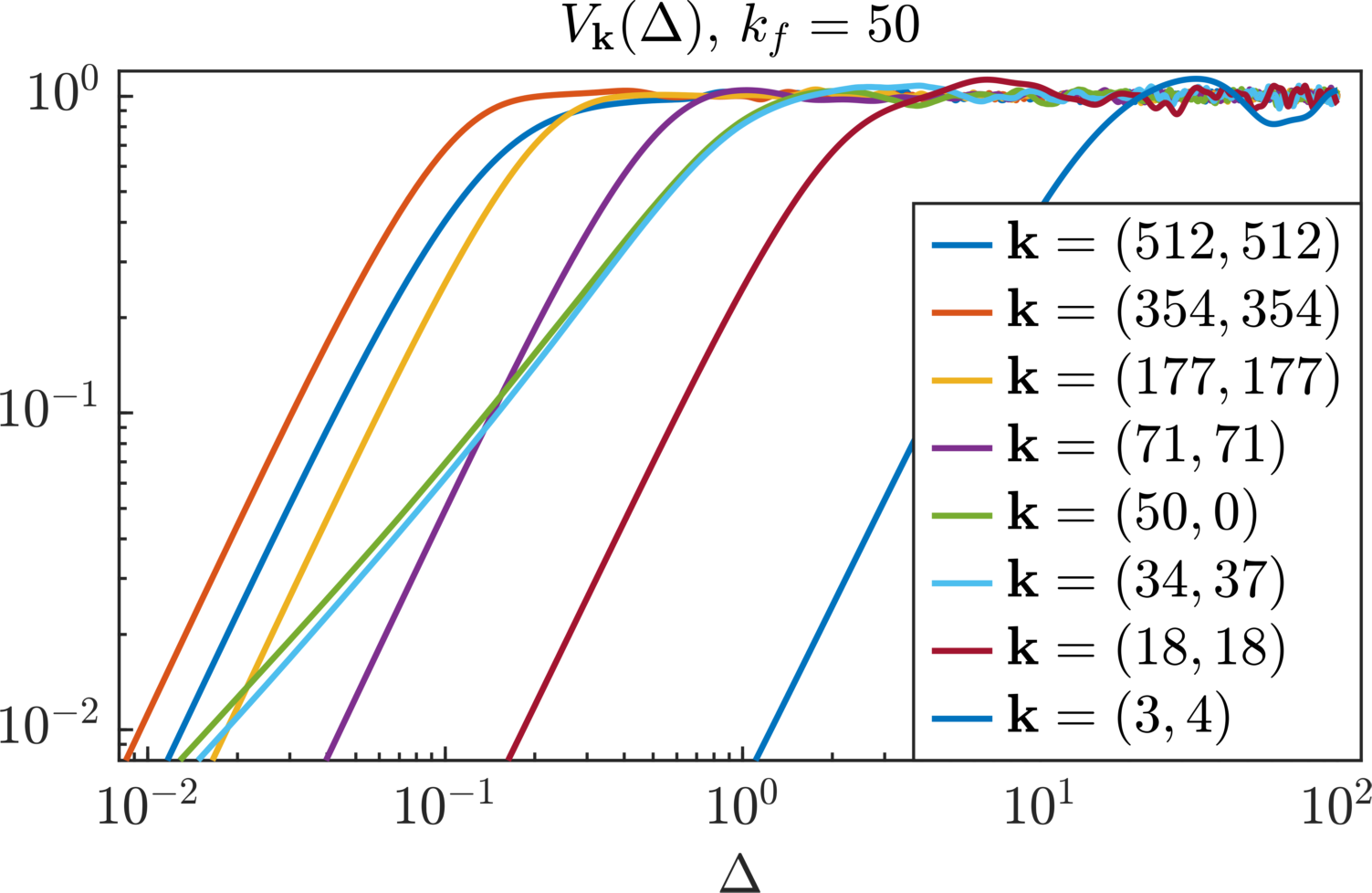}
\end{center}
\caption{Autocorrelation function $R_{\mathbf k}(\Delta)$ rescaled by the relaxation time $\tau_{\mathbf k}$ and normalized variance of the increments $V_{\mathbf k}(\Delta)$ for different Fourier modes, with forcing applied at $\abs{\mathbf k} \sim k_f = 50$.}
\label{fig:Fourier_correlation_k50}
\end{figure}

\begin{figure}
\begin{center}
\includegraphics{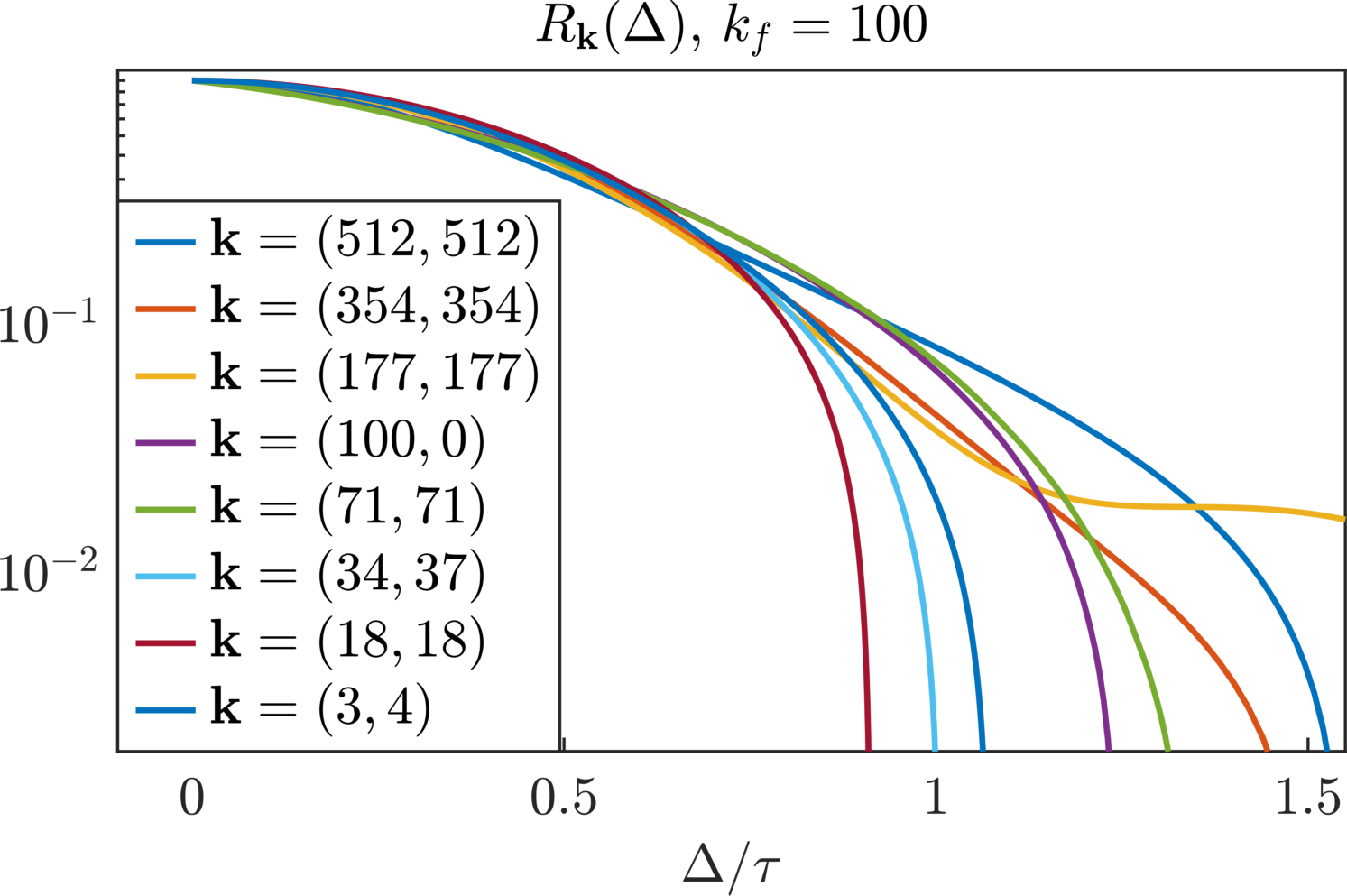} \\ \vspace{0.5cm}
\includegraphics{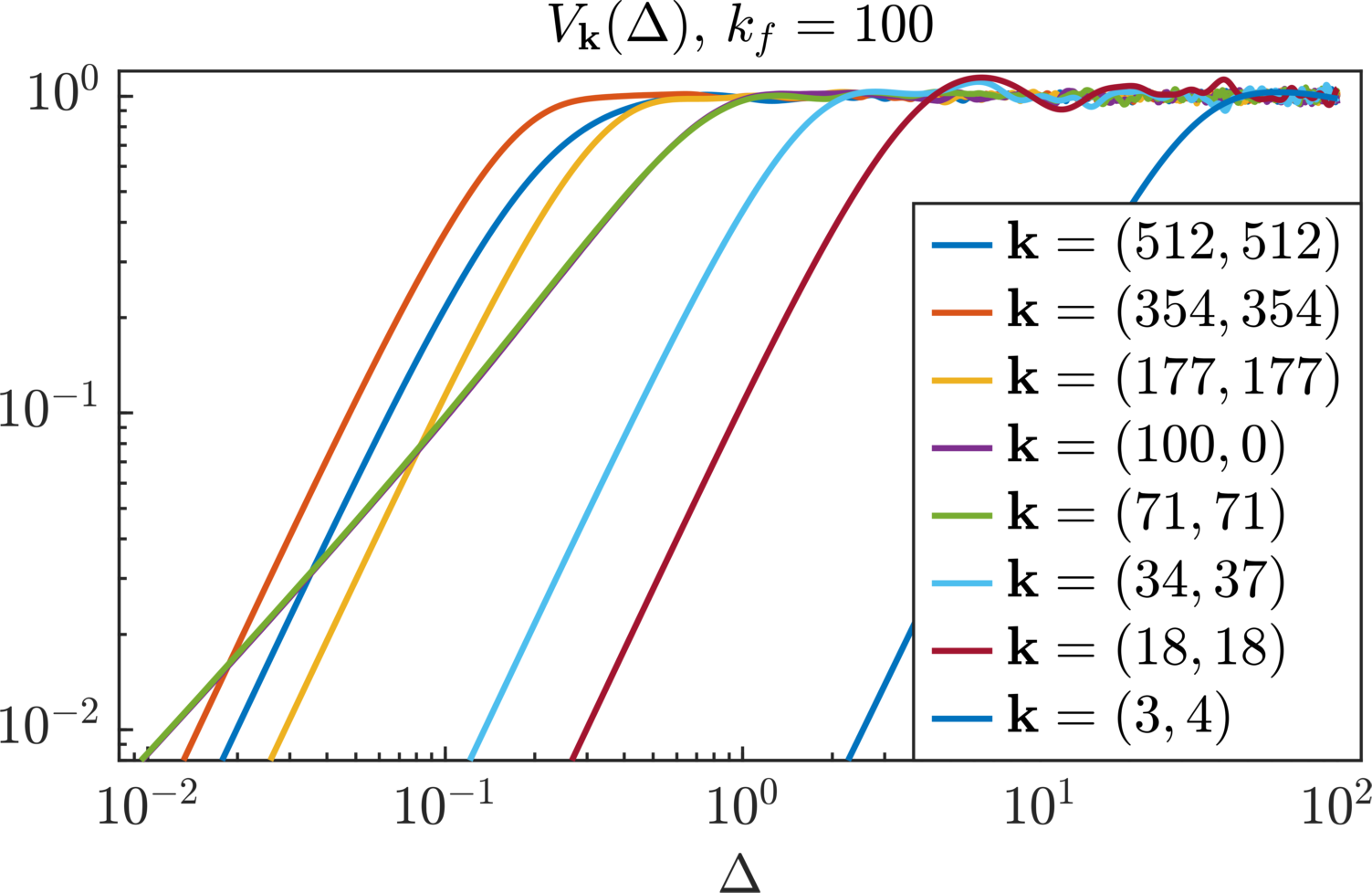}
\end{center}
\caption{Autocorrelation function $R_{\mathbf k}(\Delta)$ rescaled by the relaxation time $\tau_{\mathbf k}$ and normalized variance of the increments $V_{\mathbf k}(\Delta)$ for different Fourier modes, with forcing applied at $\abs{\mathbf k} \sim k_f = 100$.}
\label{fig:Fourier_correlation_k100}
\end{figure}

In \cref{fig:Fourier_correlation_k50,fig:Fourier_correlation_k100} we present the autocorrelation function \eqref{eq:Rk_Delta} in a semi-logarithmic plot, rescaled by the relaxation time $\tau_{\mathbf k}$, which allows for a direct comparison across the modes in \eqref{eq:modes_used}. Most modes have a similar behavior, with a decay faster than exponential. Moreover, we show the normalized variance \eqref{eq:Vk_Delta} on a logarithmic scale, and we observe two distinct behaviors. The forced modes exhibit an initial slope equal to one, consistent with the effect of the injected white noise. Conversely, the unforced modes show an initial slope equal to two. This can be explained by the fact that, although we construct stochastic models, the underlying partial differential equation and the numerical simulations are deterministic except for the forcing term. As a result, the Fourier modes are smooth in time, and the increment $f_{t+\Delta}^{\mathbf k} - f_t^{\mathbf k}$ scales linearly in $\Delta$ for small $\Delta$, which yields a quadratic scaling of $V_{\mathbf k}(\Delta)$. Nevertheless, this observation, together with the regularity of the trajectories discussed above, suggests that it may be appropriate to consider stochastic models with smoother sample paths than those of, e.g., the Ornstein–Uhlenbeck process, which is frequently used in turbulence modeling.

\begin{figure}
\begin{center}
\begin{tabular}{ccc}
\includegraphics{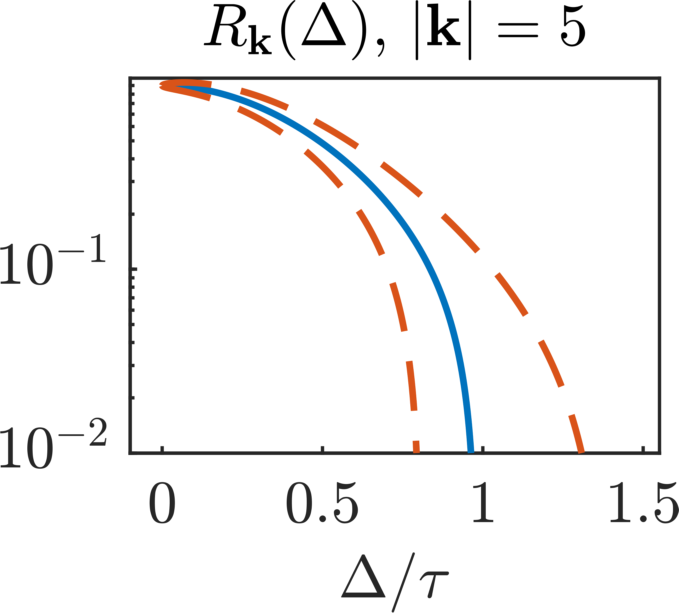} &
\includegraphics{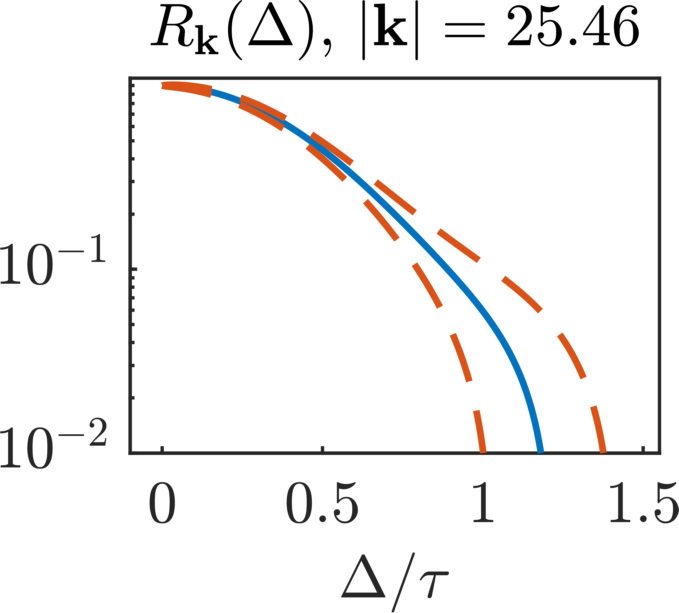} &
\includegraphics{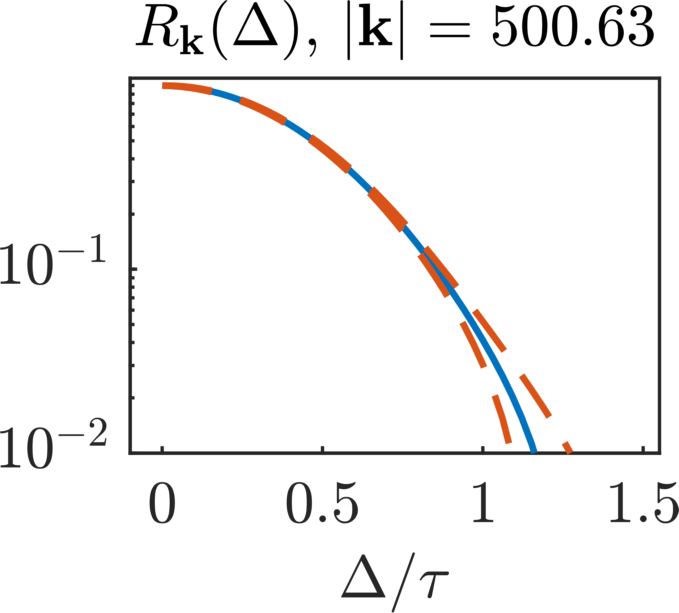} \\[5pt]
\includegraphics{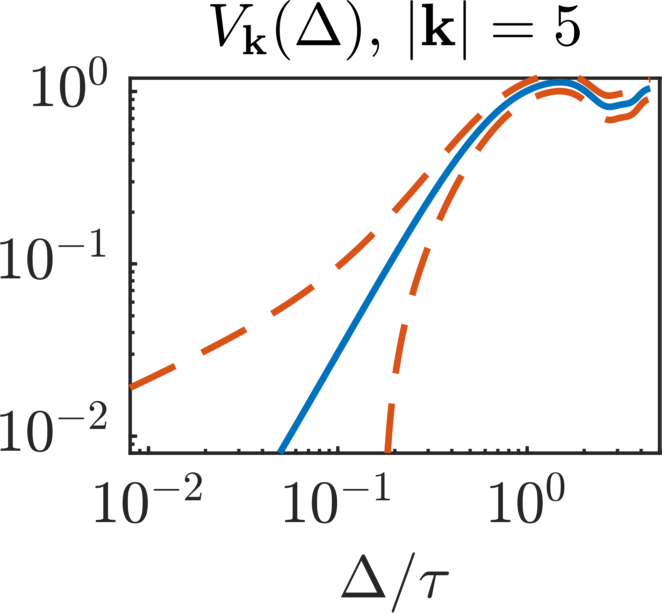} &
\includegraphics{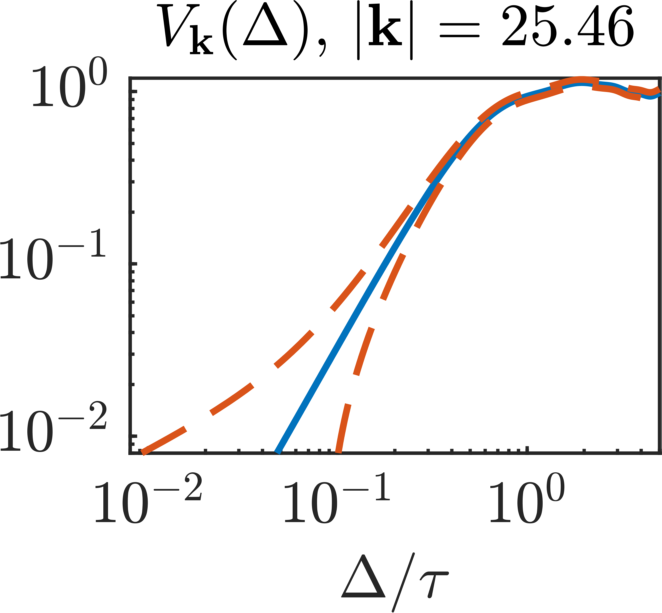} &
\includegraphics{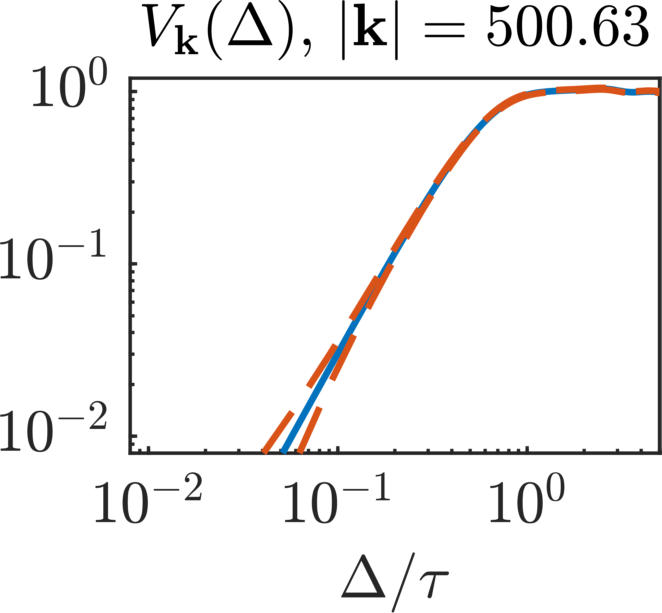}
\end{tabular}
\end{center}
\caption{Approximate 95\% confidence intervals for the autocorrelation function $R_{\mathbf k}(\Delta)$ and the normalized variance of the increments $V_{\mathbf k}(\Delta)$, both rescaled by the relaxation time $\tau_{\mathbf k}$, for different Fourier modes with forcing applied at $\abs{\mathbf k} \sim k_f = 50$.}
\label{fig:Fourier_confidence_k50}
\end{figure}

\begin{figure}
\begin{center}
\begin{tabular}{ccc}
\includegraphics{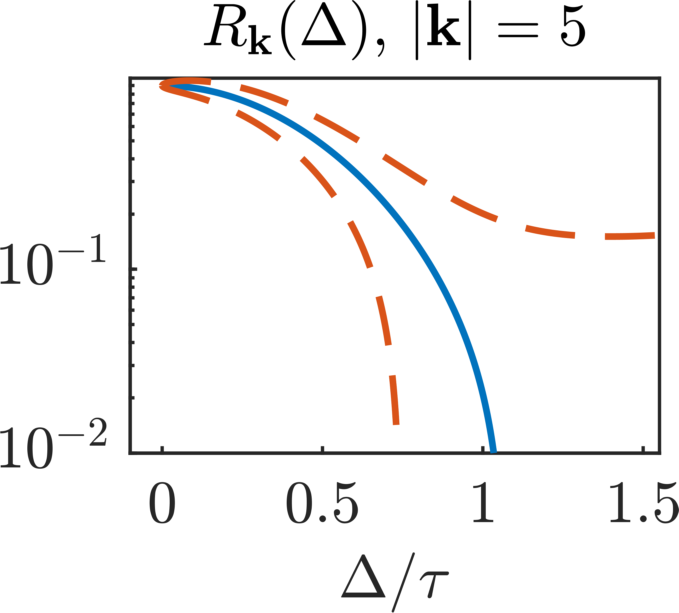} &
\includegraphics{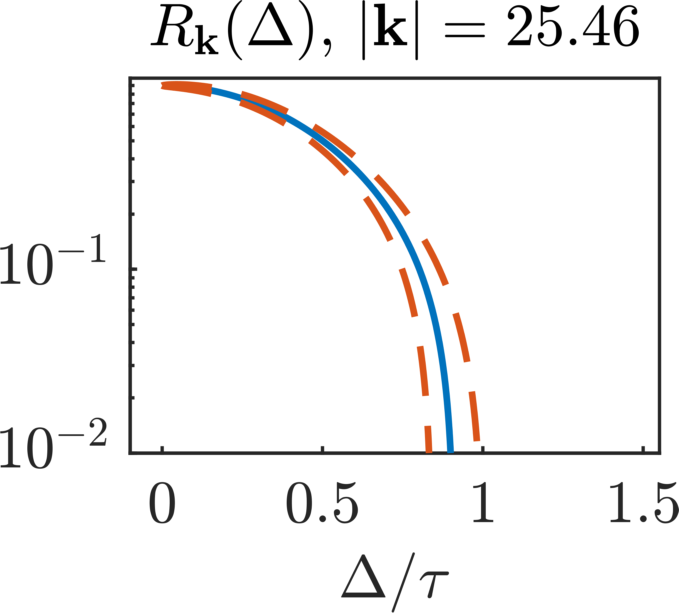} &
\includegraphics{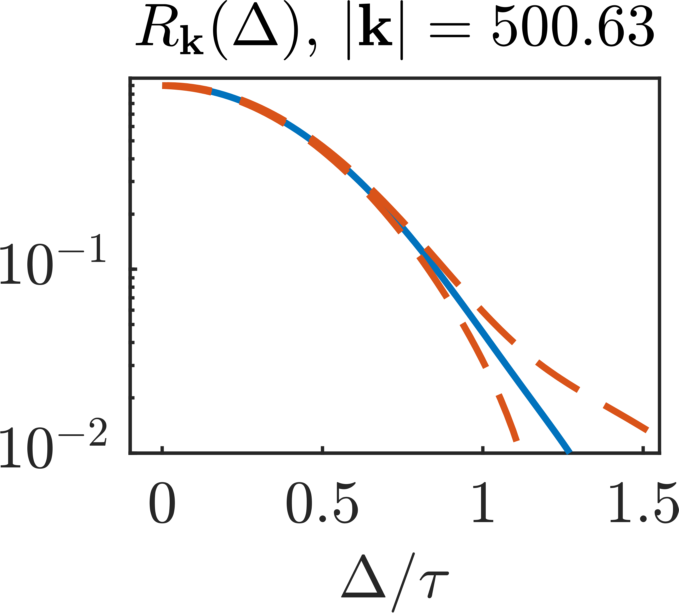} \\[5pt]
\includegraphics{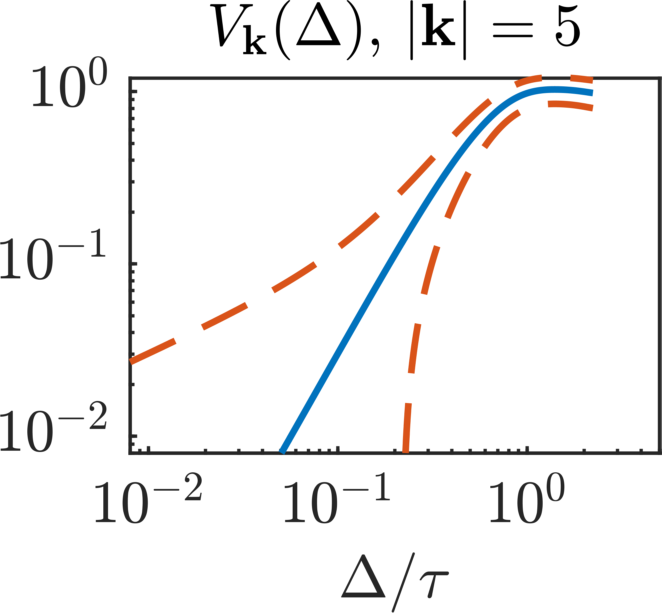} &
\includegraphics{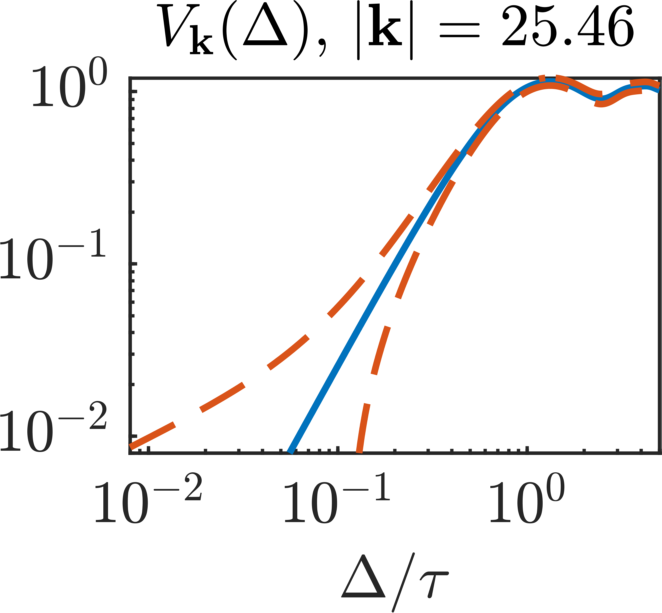} &
\includegraphics{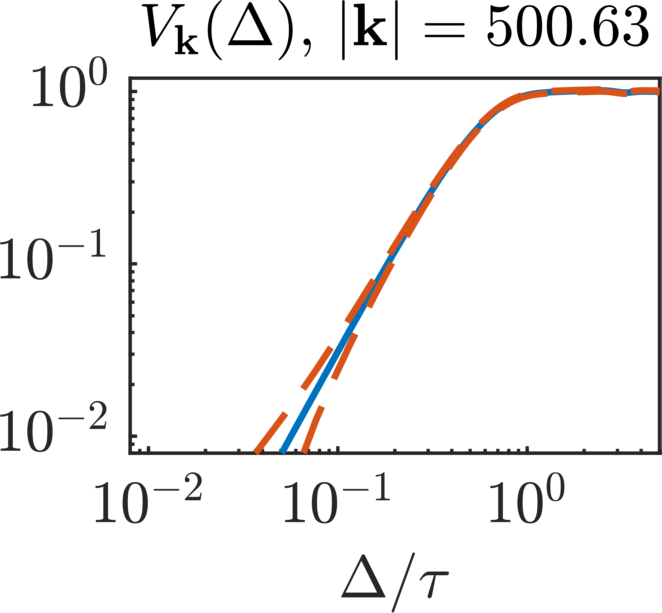}
\end{tabular}
\end{center}
\caption{Approximate 95\% confidence intervals for the autocorrelation function $R_{\mathbf k}(\Delta)$ and the normalized variance of the increments $V_{\mathbf k}(\Delta)$, both rescaled by the relaxation time $\tau_{\mathbf k}$, for different Fourier modes with forcing applied at $\abs{\mathbf k} \sim k_f = 100$.}
\label{fig:Fourier_confidence_k100}
\end{figure}

Finally, we study the reliability of the plots presented so far and derive confidence intervals for the autocorrelation function $R_{\mathbf k}(\Delta)$. We use Bartlett's formula to approximate the confidence intervals \cite{BoJ76}. In particular, let $\delta$ be the discretization step for the Fourier modes and let $N = T/\delta$ denote the number of samples from the trajectory. Then, an approximate confidence interval with confidence level $\ell$ for $R_{\mathbf k}(n\delta)$, $n = 1, \dots, N$, is
\begin{equation} \label{eq:confidence_Bartlett}
R_{\mathbf k}(n\delta) \pm z_{\frac{1+\ell}2} \sqrt{\frac1N \left( 1 + 2 \sum_{j=1}^{n-1} R_{\mathbf k}(j\delta)^2 \right)},
\end{equation}
where $z_{\frac{1+\ell}2}$ is the quantile of order $(1+\ell)/2$ of the standard Gaussian distribution. We report in \cref{fig:Fourier_confidence_k50,fig:Fourier_confidence_k100} the functions $R_{\mathbf k}(\Delta)$ and $V_{\mathbf k}(\Delta)$ for a subset of the wavenumbers in \eqref{eq:modes_used}, together with the approximate confidence intervals computed using formula \eqref{eq:confidence_Bartlett}, where we set $\ell =  0.95$. We observe that smaller wavenumbers are associated with wider confidence intervals, whereas larger wavenumbers yield more accurate estimates of the autocorrelation function. This is due to the longer relaxation time $\tau_{\mathbf k}$ for smaller wavenumbers, which implies that fewer essentially uncorrelated samples are available for the estimation of the autocorrelation function. Nevertheless, we observe that as long as $R_{\mathbf k}(\Delta)$ and $V_{\mathbf k}(\Delta)$ are larger than $\sim 0.1$, below which approximation errors begin to dominate, the measurements remain sufficiently accurate to capture the relevant trends, especially for intermediate and large wavenumbers.

\section{Model for the single processes} \label{sec:model_process}

In this section, we derive a stochastic model for the process underlying the Fourier components, with the aim of reproducing the statistical properties observed in the numerical data in \cref{sec:numerics_Fourier}. Rather than considering solutions of stochastic differential equations, such as Ornstein--Uhlenbeck processes, we focus on processes defined via convolutions of white noise with suitable kernels. This approach has the advantage that the choice of kernel directly controls both the regularity and the autocorrelation structure of the resulting process.

We first introduce the conditions that a kernel must satisfy in order to provide a meaningful model for Fourier components of turbulent flows. We then derive an explicit expression for the covariance function and show that it satisfies the required normalization properties. Finally, we present examples of admissible kernels and illustrate their flexibility in capturing different correlation structures at both short and long time scales.

Let $\theta \colon \mathbb{R} \to \mathbb{R}$ be a function that plays the role of a mollifier, and define the stationary Gaussian process
\begin{equation} \label{eq:def_xi}
\xi_t = \int_{-\infty}^{+\infty} \frac1{\sqrt\tau} \theta \left( \frac{t-s}\tau \right) \dd W_s,
\end{equation}
where $W_t$ is a two-sided Brownian motion, and $\tau > 0$ is a relaxation time. The kernel $\theta$ satisfies the conditions below, and the regularity of the process $\xi_t$ depends on the regularity of $\theta$ itself.

\begin{assumption} \label{as:condition_theta}
The function $\theta \in L^1(\R) \cap L^2(\R)$ satisfies:
\begin{enumerate}[label=(\roman*)]
\item $\theta(x) \ge 0$ for a.e. $x \in \R$;
\item $\theta(-x) = \theta(x)$ for a.e. $x \in \R$;
\item $\norm{\theta}_{L^1(\R)} = 1$;
\item $\norm{\theta}_{L^2(\R)} = 1$;
\item $M \defeq \int_\R \abs{x} \theta(x) \dd x < \infty$.
\end{enumerate}
\end{assumption}

\begin{remark}
A kernel $\theta$ satisfying \cref{as:condition_theta} can be interpreted as a probability density function of a symmetric random variable with finite first moment, together with the additional normalization $\norm{\theta}_{L^2(\mathbb{R})} = 1$. This condition can always be enforced by a suitable rescaling of a given density. Indeed, let $\widetilde{\theta}$ be a probability density function, then define
\begin{equation}
\theta(x) = \frac1A \widetilde\theta \left( \frac{x}A \right), \qquad \text{where} \qquad A = \int_{-\infty}^{+\infty} \widetilde\theta(x)^2 \dd x,
\end{equation}
which satisfies $\norm{\theta}_{L^2(\mathbb{R})} = 1$.
\end{remark}

\subsection{Autocovariance function}

The autocovariance function of $\xi_t$ depends only on the lag between two time points, as the following lemma shows.

\begin{lemma} \label{lem:autocovariance}
Let $\xi_t$ be defined in equation \eqref{eq:def_xi} and $\theta$ satisfy \cref{as:condition_theta}. Then, for all $t \ge 0$ and $\Delta \ge 0$ we have
\begin{equation}
\E[\xi_{t+\Delta} \xi_t] = C \left( \frac\Delta\tau \right),
\end{equation}
where
\begin{equation} \label{eq:def_autocovariance}
C(u) \defeq \theta^{(2)}(u) \defeq (\theta * \theta)(u) = \int_{-\infty}^{+\infty} \theta(u - r) \theta(r) \dd r.
\end{equation}
\end{lemma}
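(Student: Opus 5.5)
The plan is to apply the Itô isometry for Wiener integrals with respect to the two-sided Brownian motion $W$, and then use the symmetry of $\theta$ to recognize a convolution. First we check that $\xi_t$ is well defined. The deterministic integrand $s \mapsto \tau^{-1/2}\theta((t-s)/\tau)$ has $L^2(\R)$ norm squared
\begin{equation}
\frac1\tau \int_\R \theta\left(\frac{t-s}\tau\right)^2 \dd s = \norm{\theta}_{L^2(\R)}^2 = 1,
\end{equation}
by the substitution $r = (t-s)/\tau$ and Assumption~\ref{as:condition_theta}(iv). Hence $\xi_t$ is a centered Gaussian random variable. Moreover, for two deterministic $L^2$ integrands $f,g$, the polarized Itô isometry gives $\E[\int f \dd W \int g \dd W] = \int_\R f(s) g(s) \dd s$.

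Applying this with $f(s) = \tau^{-1/2}\theta((t+\Delta-s)/\tau)$ and $g(s) = \tau^{-1/2}\theta((t-s)/\tau)$ yields
\begin{equation}
\E[\xi_{t+\Delta}\xi_t] = \frac1\tau \int_\R \theta\left(\frac{t+\Delta-s}\tau\right)\theta\left(\frac{t-s}\tau\right) \dd s .
\end{equation}
We then substitute $r = (t-s)/\tau$, so that $\dd s = \tau \dd r$ (with the orientation reversed, which cancels the sign). This turns the right-hand side into
\begin{equation}
\int_\R \theta\left(\frac\Delta\tau + r\right)\theta(r) \dd r .
\end{equation}
Using the evenness (ii), $\theta(\Delta/\tau + r) = \theta(-\Delta/\tau - r)$. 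Changing $r \mapsto -r$ and using (ii) again on the second factor gives $\int_\R \theta(\Delta/\tau - r)\theta(r)\dd r = (\theta*\theta)(\Delta/\tau)$, which is $C(\Delta/\tau)$. In particular, the result is independent of $t$.

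The only point requiring any care is justifying the isometry over the whole real line for the two-sided Brownian motion. The cleanest way is to write $W$ as two independent Brownian motions on $[0,\infty)$ and $(-\infty,0]$, apply the standard isometry on each half-line, and note that the cross terms vanish by independence. Finiteness of the convolution follows from Cauchy--Schwarz, since $\theta \in L^2$. Assumptions (i), (iii) and (v) are not needed for this lemma.
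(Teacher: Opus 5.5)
Your proposal is correct and follows essentially the same route as the paper: the (polarized) Itô isometry for the two Wiener integrals, the substitution $r=\pm(t-s)/\tau$, and the evenness of $\theta$ to identify the result as $(\theta*\theta)(\Delta/\tau)$. One cosmetic point: your first application of (ii) is superfluous, and as literally written your chain ends at $\int_\R\theta(r-\Delta/\tau)\theta(r)\dd r$, which needs one more use of (ii). Substituting $r\mapsto -r$ directly in $\int_\R\theta(\Delta/\tau+r)\theta(r)\dd r$ and applying (ii) to the second factor gives the convolution immediately, which is what the paper does with $r=-(t-s)/\tau$.
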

\begin{proof}
By definition of $\xi_t$ and applying Itô isometry, we have
\begin{equation}
\begin{aligned}
\E[\xi_{t+\Delta} \xi_t] &= \E \left[ \int_{-\infty}^{+\infty} \frac1{\sqrt\tau} \theta \left( \frac{t+\Delta-s}\tau \right) \dd W_s \int_{-\infty}^{+\infty} \frac1{\sqrt\tau} \theta \left( \frac{t-s}\tau \right) \dd W_s \right] \\
&= \int_{-\infty}^{+\infty} \frac1\tau \theta \left( \frac{t+\Delta-s}\tau \right) \theta \left( \frac{t-s}\tau \right) \dd s.
\end{aligned}
\end{equation}
Using the change of variables $r = -(t-s)/\tau$ and the symmetry of $\theta$, we obtain
\begin{equation}
\E[\xi_{t+\Delta} \xi_t] = \int_{-\infty}^{+\infty} \theta \left( \frac\Delta\tau - r \right) \theta(r) \dd r,
\end{equation}
which is the desired result.
\end{proof}

Therefore, by \cref{lem:autocovariance}, the variance of the increments of $\xi_t$ is given by
\begin{equation}
\E \left[ (\xi_{t+\Delta} - \xi_t)^2 \right] = 2 \left[ C(0) - C \left( \frac\Delta\tau \right) \right].
\end{equation}
In the next result, we show that \cref{as:condition_theta} ensures two normalization conditions that will be important later. Specifically, the first condition implies that the process $\xi_t$ has unit variance, while the second guarantees that the equation derived from this model is consistent with existing results in turbulence modeling.

\begin{lemma} \label{lem:normalizations}
The autocovariance function defined in equation \eqref{eq:def_autocovariance} satisfies:
\begin{enumerate}[label=(\roman*)]
\item $C(0) = 1$;
\item $\int_0^\infty C(u) \dd u = \frac12$.
\end{enumerate}
\end{lemma}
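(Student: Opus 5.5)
Both identities follow directly from the definition $C = \theta * \theta$ in \cref{lem:autocovariance} together with the normalizations in \cref{as:condition_theta}.

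For (i), I evaluate the convolution at $u = 0$:
\begin{equation}
C(0) = \int_{-\infty}^{+\infty} \theta(-r)\theta(r) \dd r .
\end{equation}
By the symmetry (ii) of \cref{as:condition_theta}, the integrand equals $\theta(r)^2$ almost everywhere. Hence $C(0) = \norm{\theta}_{L^2(\R)}^2 = 1$ by condition (iv). Since $\theta \in L^2(\R)$, this integral is finite and well defined. Equivalently, $C(0) = \E[\xi_t^2]$ via the Itô isometry.

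For (ii), I first show that $C$ is even. Using the substitution $r \mapsto -r$ and the symmetry of $\theta$,
\begin{equation}
C(-u) = \int \theta(-u-r)\theta(r)\dd r = \int \theta(u + r)\theta(r) \dd r = \int \theta(u - r')\theta(-r') \dd r' = C(u).
\end{equation}
It therefore suffices to compute the integral over all of $\R$ and halve it. Because $\theta \ge 0$ by condition (i), Tonelli's theorem lets me exchange the order of integration without any further integrability check:
\begin{equation}
\int_\R C(u) \dd u = \int_\R \theta(r) \left( \int_\R \theta(u-r) \dd u \right) \dd r = \norm{\theta}_{L^1(\R)}^2 = 1 ,
\end{equation}
using condition (iii). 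By evenness, $\int_0^\infty C(u)\dd u = \tfrac12$.

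The only step needing care is justifying the Fubini exchange and the finiteness of $\int_0^\infty C$. Nonnegativity together with $\theta \in L^1(\R)$ handles both, since Young's inequality gives $\norm{\theta*\theta}_{L^1} \le \norm{\theta}_{L^1}^2$. Condition (v) is not needed for this lemma.
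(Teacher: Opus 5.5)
Your proof is correct and follows essentially the same route as the paper: you evaluate $C(0)$ using the symmetry of $\theta$ and the $L^2$ normalization, show that $C$ is even via the substitution $r \mapsto -r$, and compute $\int_\R C(u) \dd u = \norm{\theta}_{L^1(\R)}^2 = 1$ before halving. Your explicit appeal to Tonelli through nonnegativity, and your remark that $\theta \in L^2(\R)$ makes $C(0)$ a well-defined pointwise value, are small rigor additions that the paper leaves implicit.
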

\begin{proof}
For point $(i)$, by \cref{as:condition_theta}, we have
\begin{equation}
C(0) = \int_{-\infty}^{+\infty} \theta(-r) \theta(r) \dd r = \int_{-\infty}^{+\infty} \theta(r)^2 \dd r = \norm{\theta}_{L^2(\R)}^2 = 1.
\end{equation}
For point (ii), we first note that $C$ is even. Indeed,
\begin{equation}
C(-u) = \int_{-\infty}^{+\infty} \theta(-u-r) \theta(r) \dd r = \int_{-\infty}^{+\infty} \theta(u+r) \theta(-r) \dd r,
\end{equation}
which, by the change of variables $s = -r$, implies
\begin{equation}
C(-u) = \int_{-\infty}^{+\infty} \theta(u-s) \theta(s) \dd s = C(u).
\end{equation}
Therefore, we obtain
\begin{equation}
\int_0^\infty C(u) \dd u = \frac12 \int_{-\infty}^{+\infty} C(u) \dd u = \frac12,
\end{equation}
since
\begin{equation}
\int_{-\infty}^{+\infty} C(u) \dd u = \int_{-\infty}^{+\infty} \int_{-\infty}^{+\infty} \theta(u - r) \theta(r) \dd r \dd u = \norm{\theta}_{L^1(\R)}^2 = 1,
\end{equation}
and this concludes the proof.
\end{proof}

Hence, \cref{lem:normalizations} also gives
\begin{equation}
\int_0^\infty C \left( \frac\Delta\tau \right) \dd\Delta = \frac\tau2,
\end{equation}
which is the reason for the coefficient $2$ in equation \eqref{eq:relaxation_time}.

\begin{example} \label{ex:Matern_Gaussian}
A kernel $\theta$ satisfying \cref{as:condition_theta} is
\begin{equation} \label{eq:Gaussian_kernel}
\theta(x) = \sqrt2 e^{-2\pi x^2}.
\end{equation}
In this case, the process $\xi_t$ is smooth, and its autocovariance function can be computed explicitly
\begin{equation} \label{eq:Gaussian_covariance}
C(u) = e^{-\pi u^2},
\end{equation}
which exhibits a fast decay of memory. Other possible choices for $\theta$ can be derived from Matérn-type kernels, originally introduced in \cite{Mat60}. In particular, let $\beta \ge 0$ be a smoothness parameter, and define
\begin{equation} \label{eq:Matern_kernel}
\theta_\beta(x) = \frac{2^{\beta+1}}{\sqrt\pi} \frac{\Gamma(\beta+1)}{\Gamma(2\beta+1/2)} \left( \gamma_\beta \abs{x} \right)^\beta K_\beta \left( \gamma_\beta\abs{x} \right),
\end{equation}
where $\Gamma$ denotes the Gamma function, $K_\beta$ is the modified Bessel function of the second kind, and
\begin{equation}
\gamma_\beta = 2^{2\beta+1} \frac{\Gamma(\beta+1/2) \Gamma(\beta+1)}{\Gamma(2\beta+1/2)}.
\end{equation}
The corresponding autocovariance function becomes
\begin{equation} \label{eq:Matern_covariance}
C_\beta(u) = \sqrt{\frac2\pi} \frac{\Gamma(\beta+1/2) \Gamma(\beta+1)}{\Gamma(2\beta+1/2) \Gamma(2\beta+1)} \left( \gamma_\beta \abs{u} \right)^{2\beta+1/2} K_{2\beta+1/2} \left( \gamma_\beta\abs{u} \right).
\end{equation}
We note that as $\beta \to \infty$, the kernel $\theta_\beta$ in \eqref{eq:Matern_kernel} approaches the Gaussian kernel $\theta$ in \eqref{eq:Gaussian_kernel}, and consequently the covariance $C_\beta$ in \eqref{eq:Matern_covariance} converges to $C$ in \eqref{eq:Gaussian_covariance}. Moreover, when $\beta = 0$, we recover the exponential autocovariance function
\begin{equation}
C_0(x) = e^{-2\abs{u}},
\end{equation}
which is characteristic of Ornstein--Uhlenbeck processes.
\end{example}

\subsection{Numerical validation}

We focus on the model described in \cref{ex:Matern_Gaussian}. We first illustrate its properties numerically and then compare the results with the fluid data from \cref{sec:numerics_Fourier}.

\begin{figure}
\begin{center}
\begin{tabular}{ccc}
\includegraphics{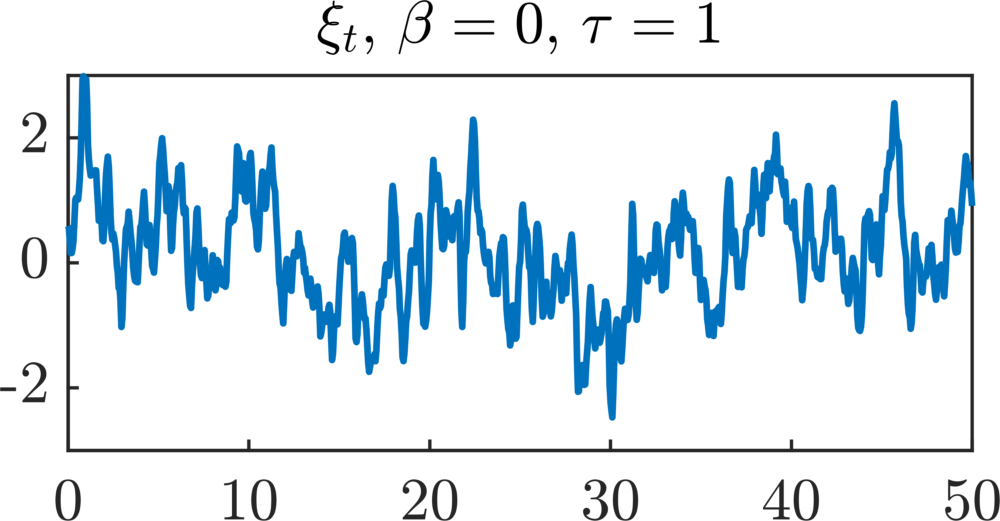} &&
\includegraphics{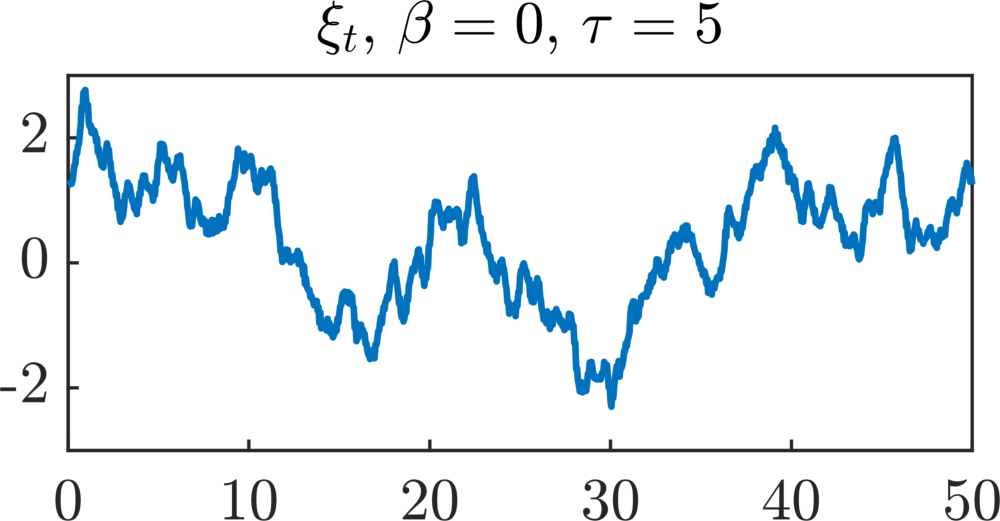} \\[5pt]
\includegraphics{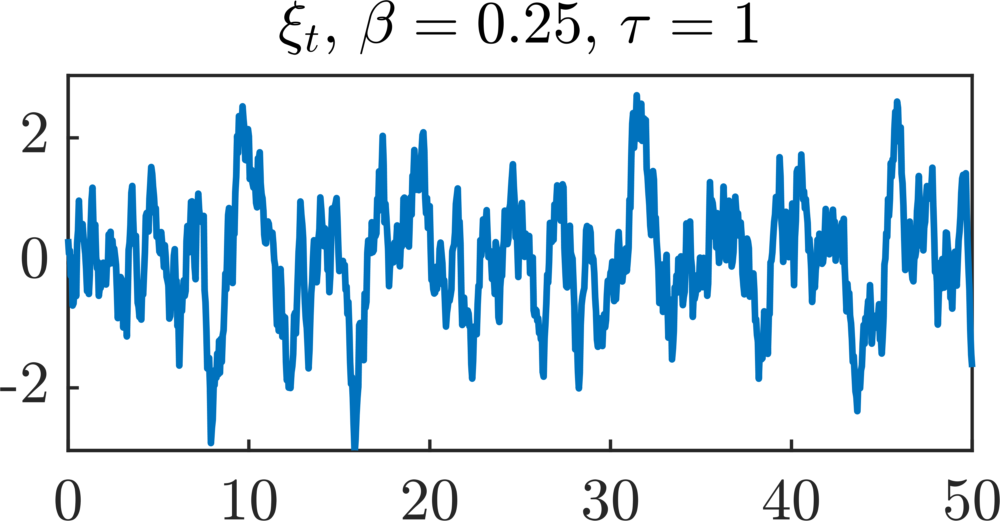} &&
\includegraphics{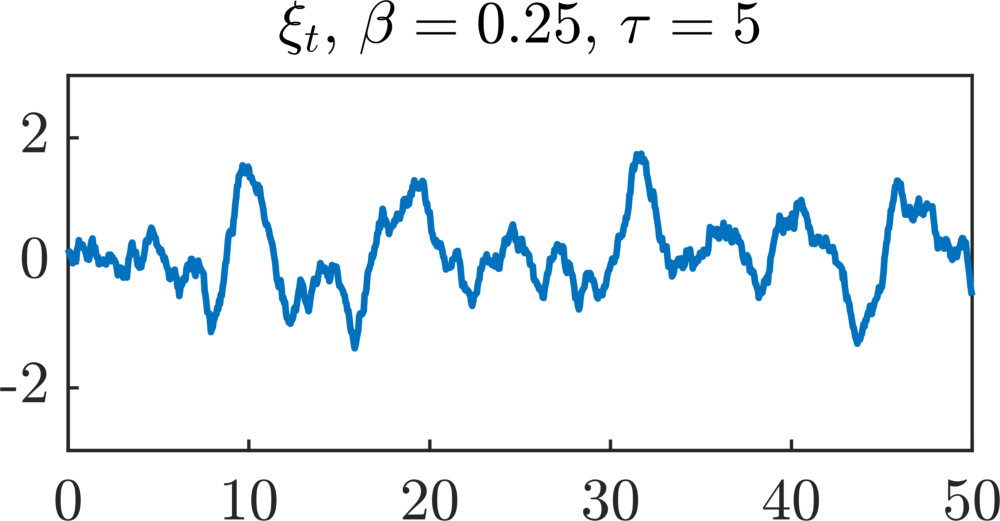} \\[5pt]
\includegraphics{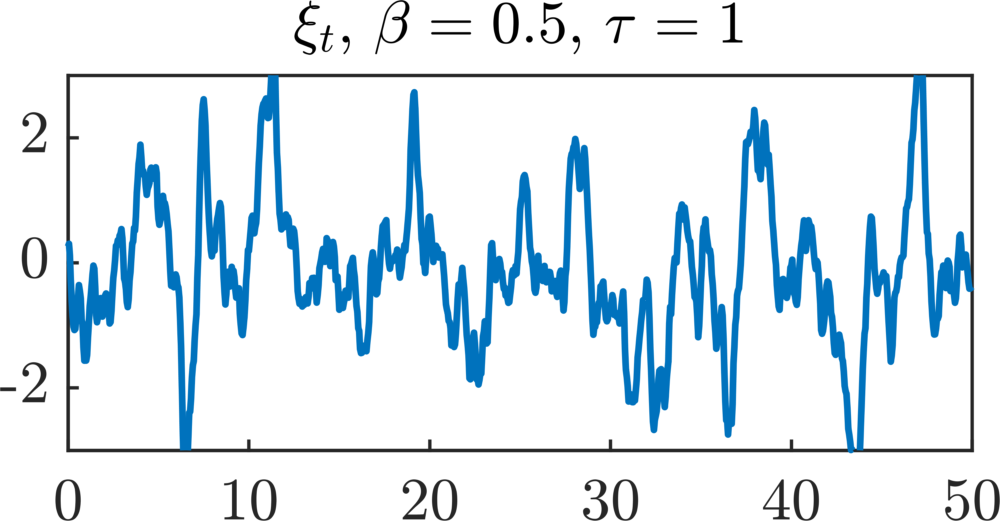} &&
\includegraphics{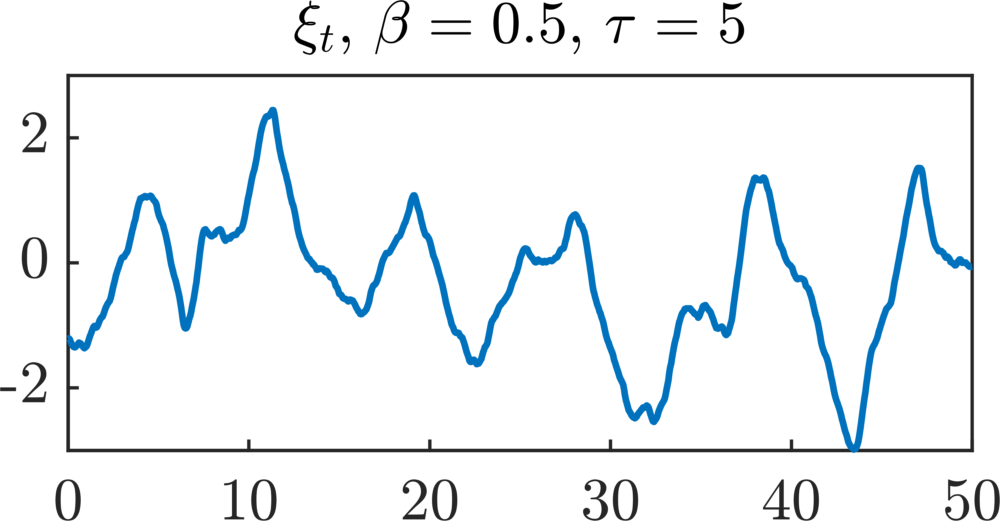} \\[5pt]
\includegraphics{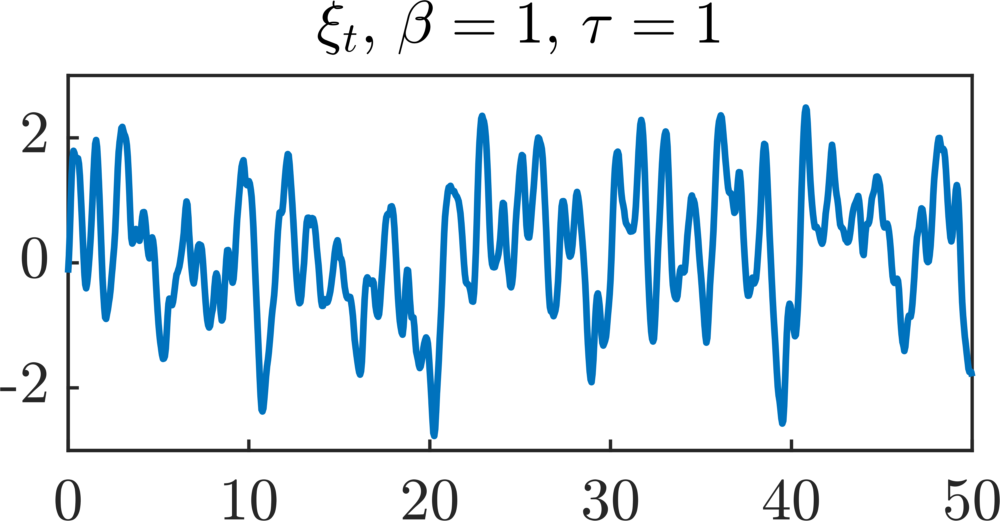} &&
\includegraphics{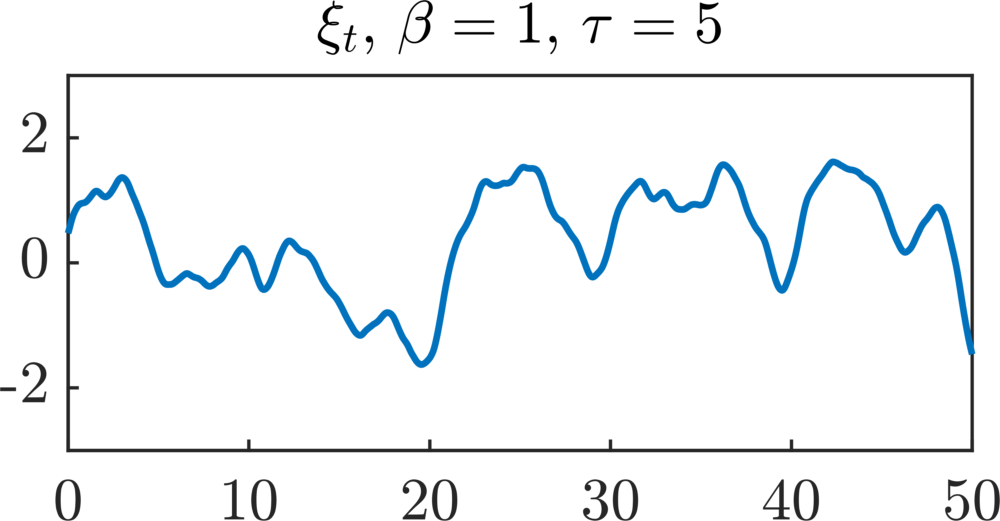} \\[5pt]
\includegraphics{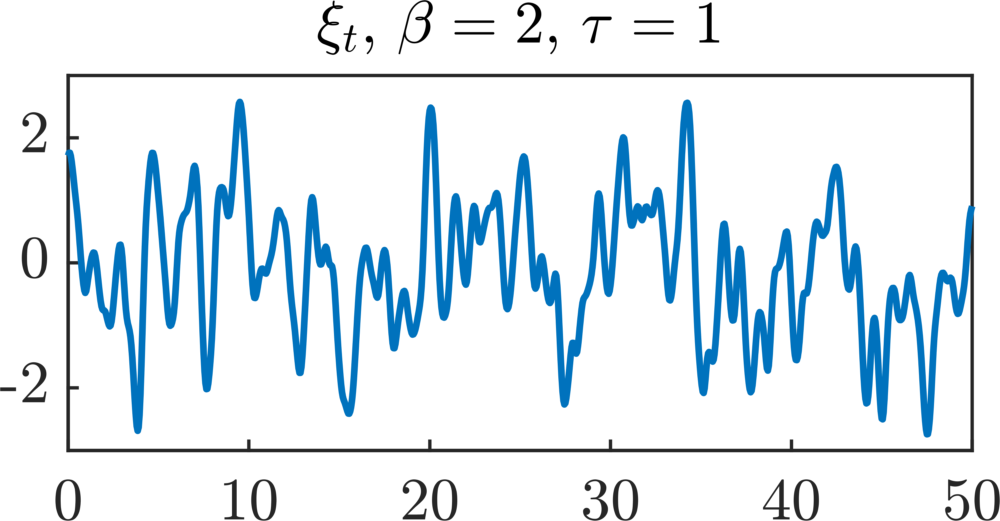} &&
\includegraphics{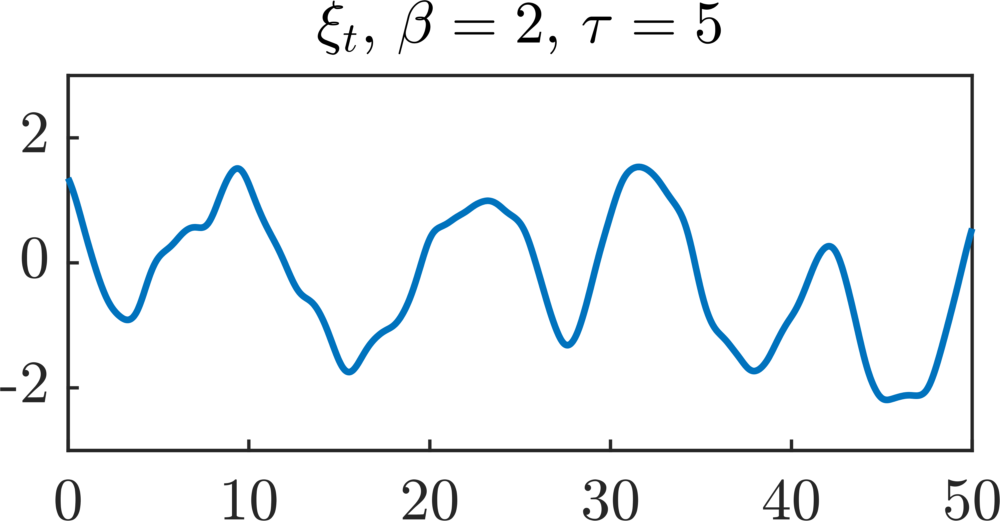} \\[5pt]
\includegraphics{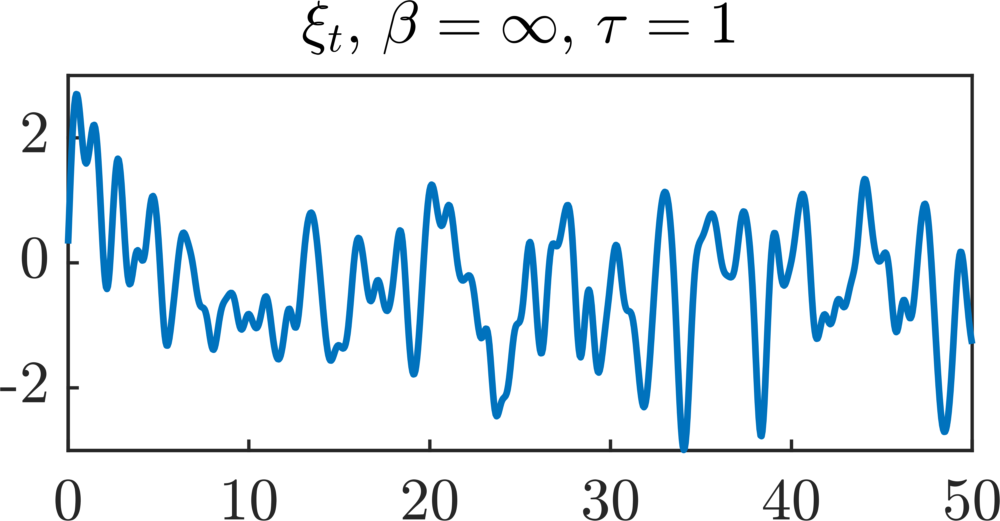} &&
\includegraphics{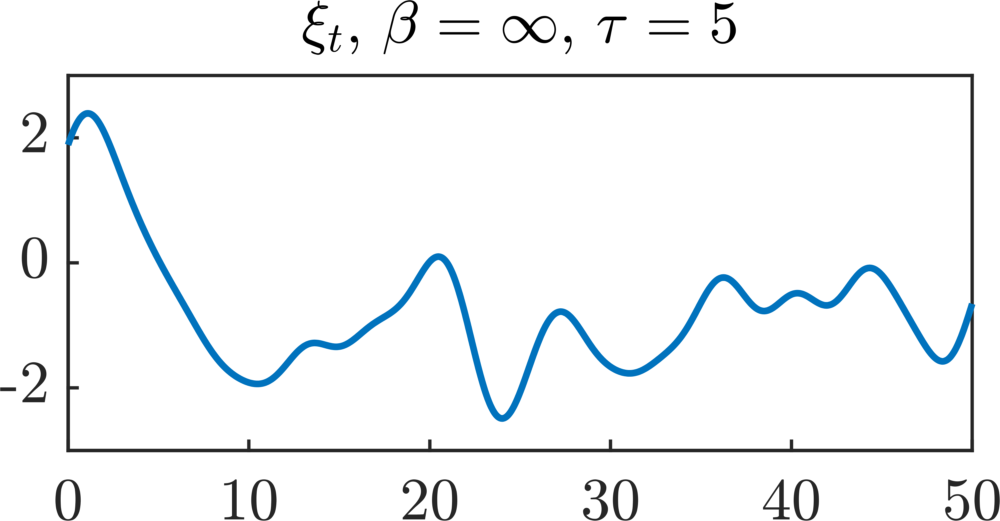}
\end{tabular}
\end{center}
\caption{Simulated trajectories of the model for different values of $\beta$ and $\tau$.}
\label{fig:model_trajectory}
\end{figure}

We begin by simulating trajectories of the stochastic process $\xi_t$ in equation \eqref{eq:def_xi}, using the kernel $\theta_\beta$ from equation \eqref{eq:Matern_kernel}. The smoothness parameter is varied as $\beta = 0, 0.25, 0.5, 1, 2$, and we also consider $\beta = \infty$, which corresponds to the Gaussian kernel in equation \eqref{eq:Gaussian_kernel}. In \cref{fig:model_trajectory}, trajectories are shown for relaxation times $\tau = 1$ and $5$. As expected, increasing $\beta$ enhances the regularity of $\xi_t$, while larger $\tau$ reduces the frequency of oscillations.

\begin{figure}
\begin{center}
\includegraphics{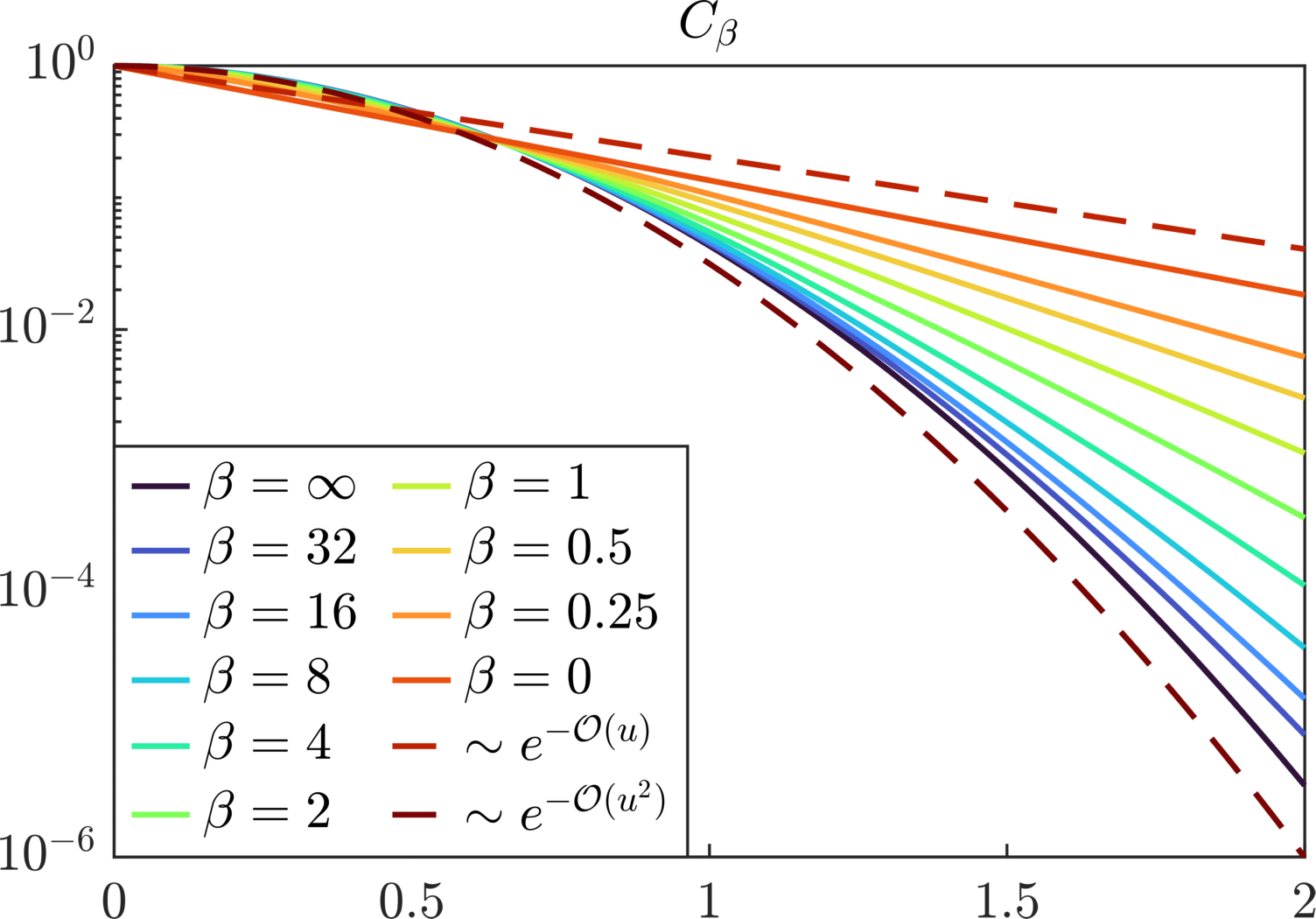} \\ \vspace{0.5cm}
\includegraphics{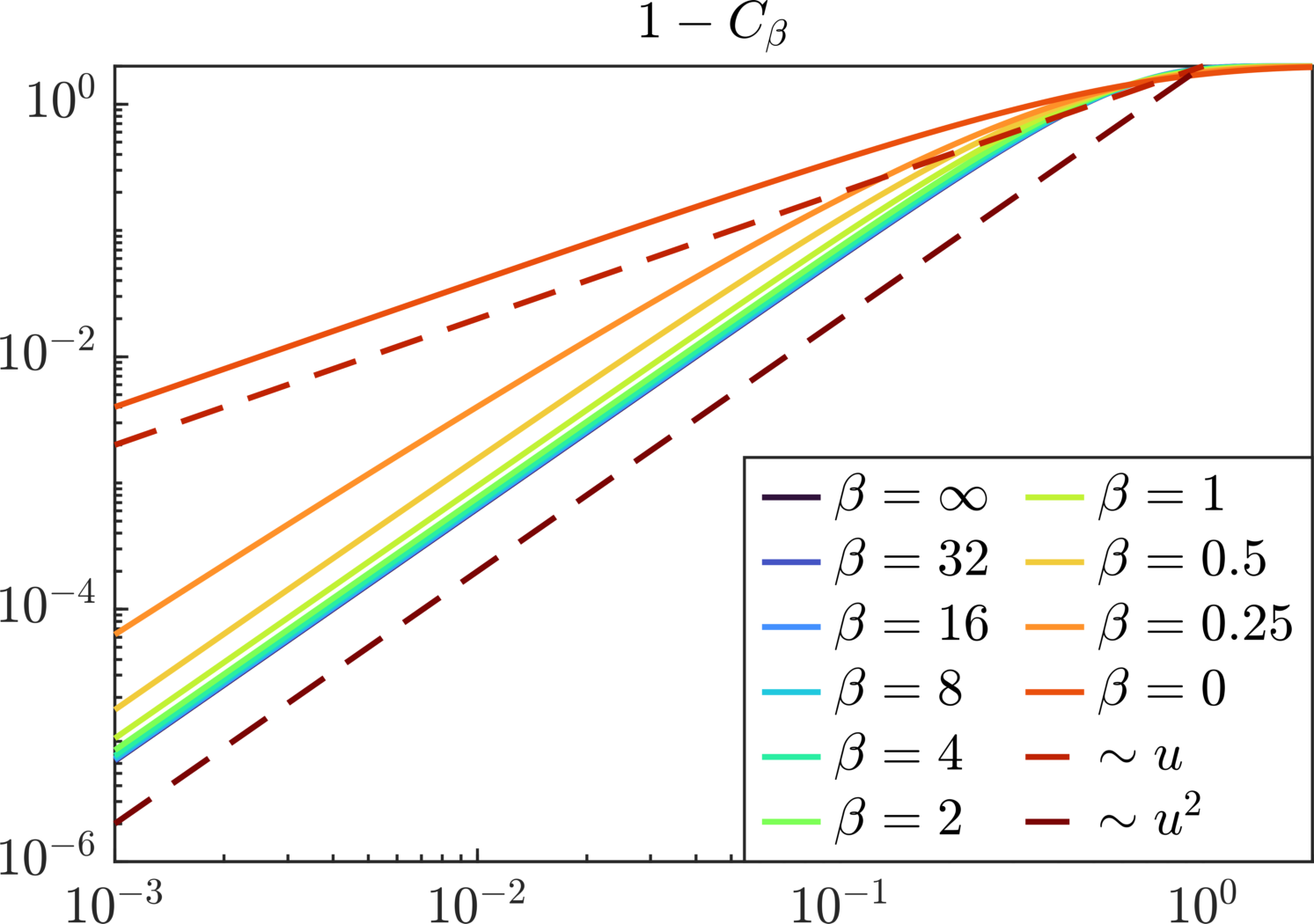}
\end{center}
\caption{Autocorrelation function $C_\beta$ from \cref{ex:Matern_Gaussian} and the function $2(1 - C_\beta)$, corresponding to the statistics in equations \eqref{eq:Rk_Delta} and \eqref{eq:Vk_Delta} for the vorticity field, respectively, for different values of the parameter $\beta$.}
\label{fig:model_correlation}
\end{figure}

Next, we examine the autocovariance function $C_\beta$. In \cref{fig:model_correlation}, we present both a semi-logarithmic plot of $C_\beta(u)$ and a logarithmic plot of $1 - C_\beta(u)$, which correspond to $R_{\mathbf k}(\Delta)$ and $V_{\mathbf k}(\Delta)$ from \cref{sec:numerics_Fourier}. Several values of $\beta$ are considered: $0, 0.25, 0.5, 1, 2, 4, 8, 16, 32, \infty$. By adjusting $\beta$, the model can capture the full spectrum of autocorrelation behaviors, ranging from the exponential decay of the Ornstein–Uhlenbeck process ($\beta = 0$) to Gaussian decay ($\beta = \infty$). In the logarithmic plot of $1 - C_\beta$, the initial slope varies continuously from $1$ for $\beta = 0$ to $2$ for $\beta = \infty$, covering the range typical of fractional Ornstein–Uhlenbeck processes with Hurst index $H \ge 1/2$.

\begin{figure}
\begin{center}
\includegraphics{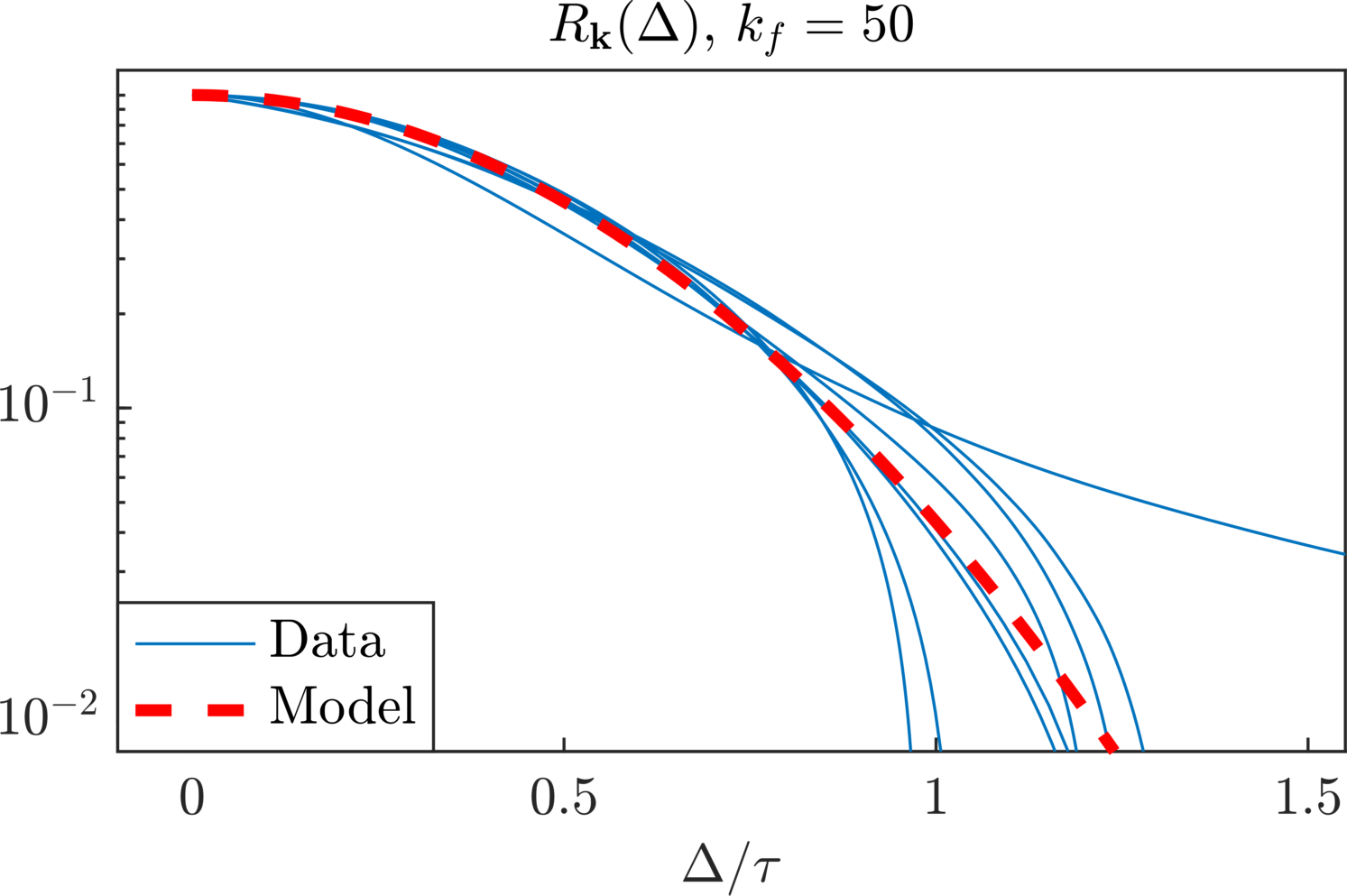} \\ \vspace{0.5cm}
\includegraphics{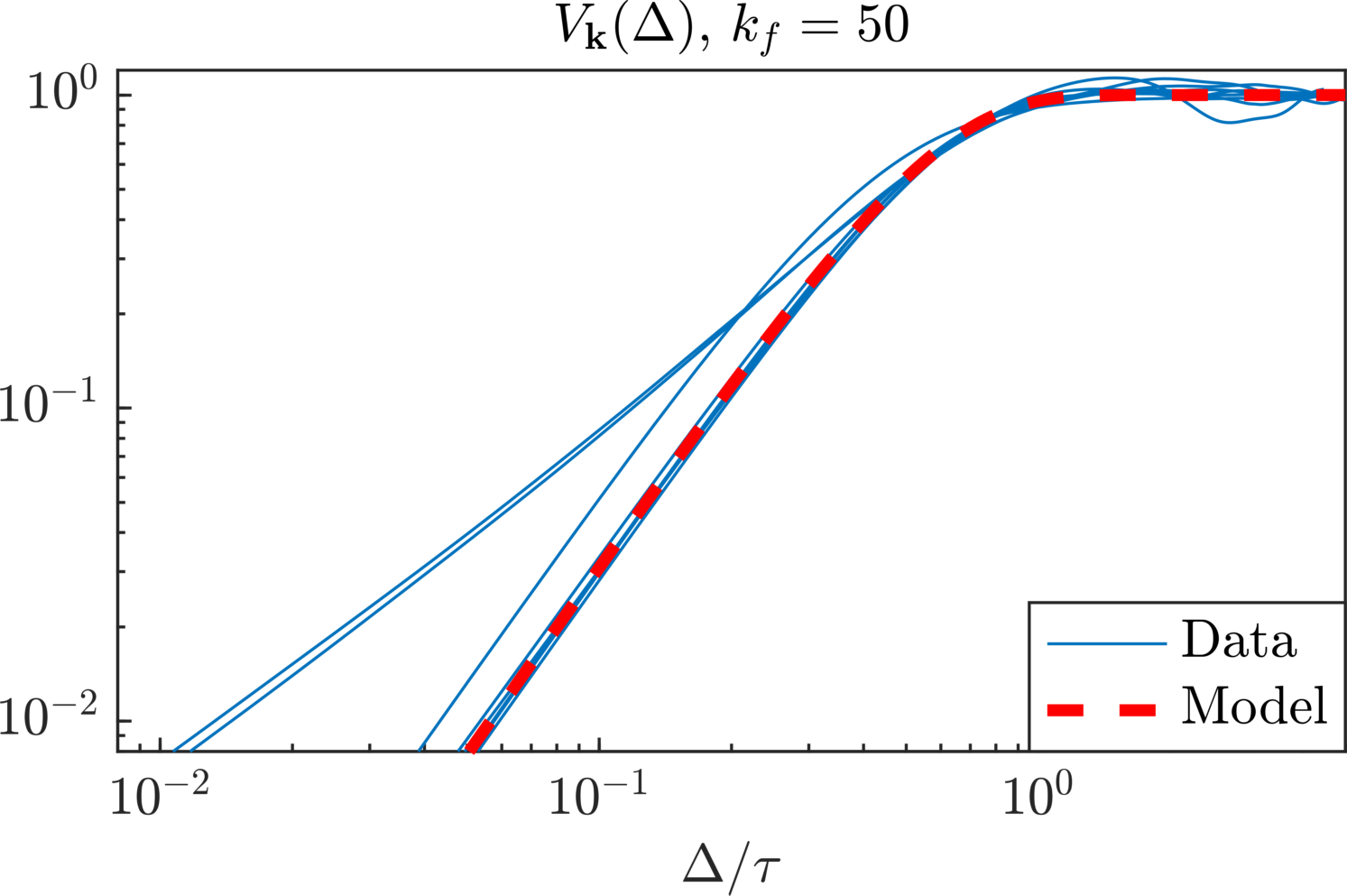}
\end{center}
\caption{Autocorrelation function $R_{\mathbf k}(\Delta)$ and normalized variance of the increments $V_{\mathbf k}(\Delta)$, both rescaled by the relaxation time $\tau_{\mathbf k}$, for different Fourier modes with forcing applied at $\abs{\mathbf k} \sim k_f = 50$, compared with the Gaussian kernel model.}
\label{fig:model_comparison_k50}
\end{figure}

\begin{figure}
\begin{center}
\includegraphics{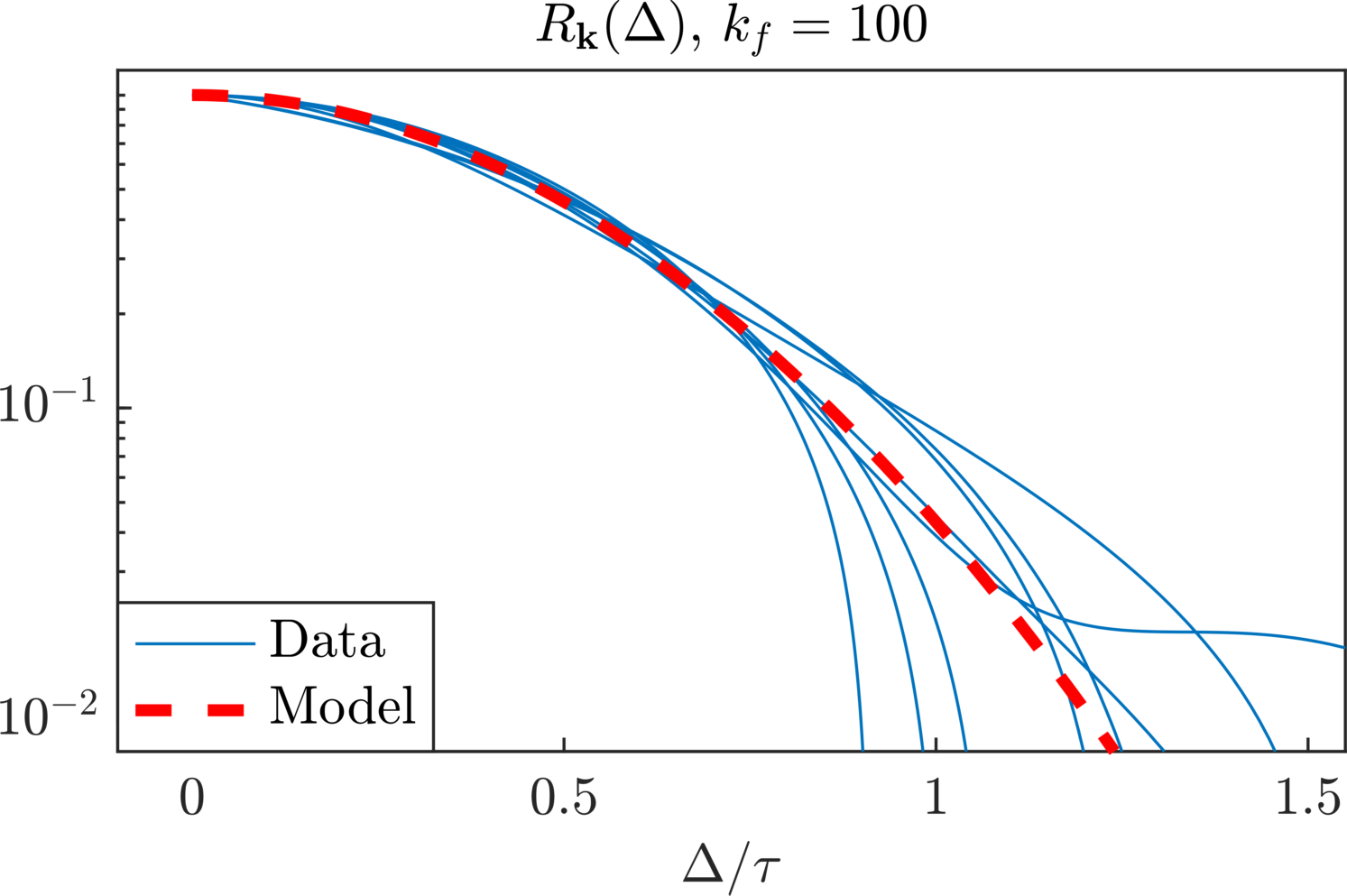} \\ \vspace{0.5cm}
\includegraphics{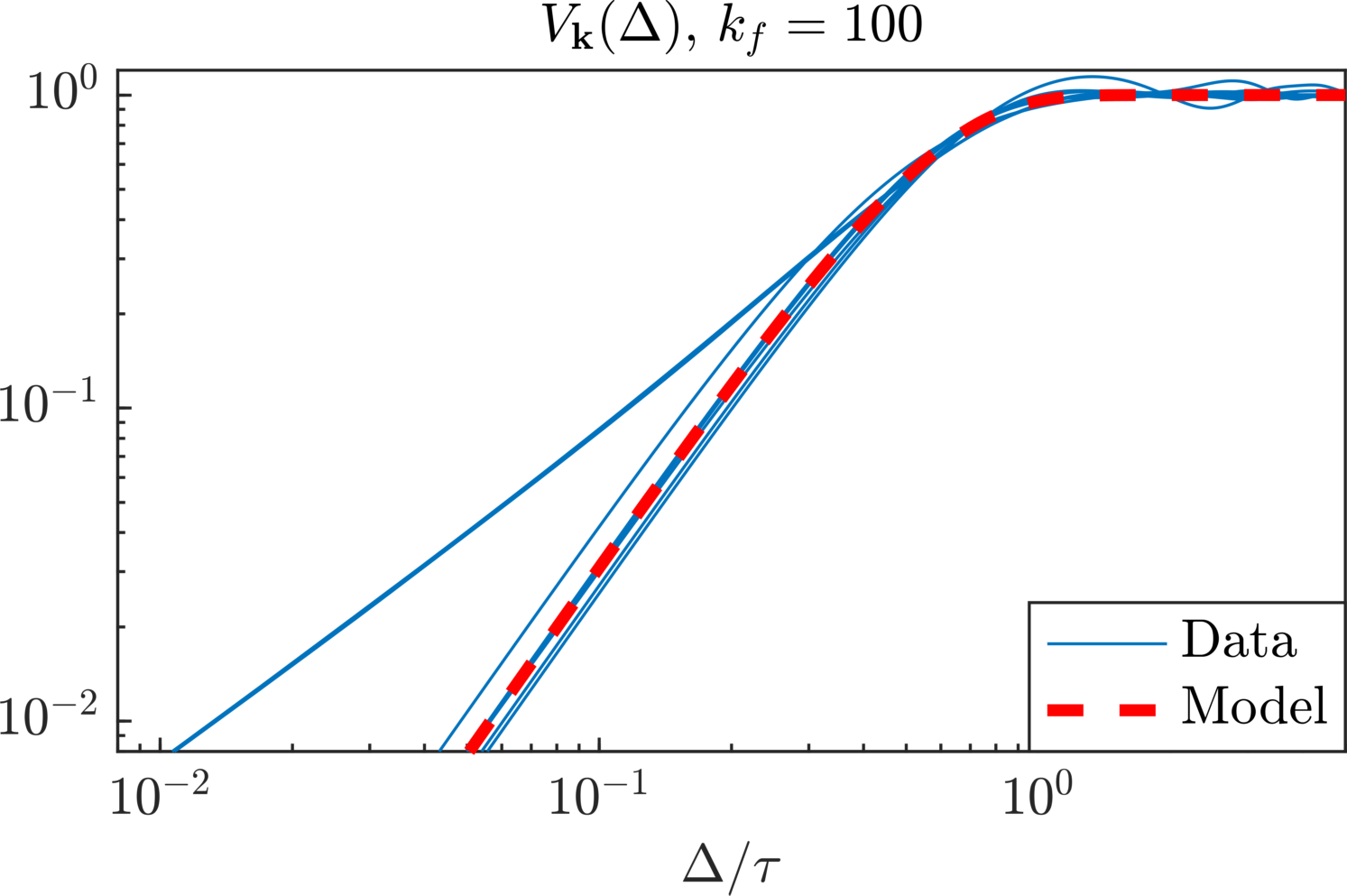}
\end{center}
\caption{Autocorrelation function $R_{\mathbf k}(\Delta)$ and normalized variance of the increments $V_{\mathbf k}(\Delta)$, both rescaled by the relaxation time $\tau_{\mathbf k}$, for different Fourier modes with forcing applied at $\abs{\mathbf k} \sim k_f = 100$, compared with the Gaussian kernel model.}
\label{fig:model_comparison_k100}
\end{figure}

To compare with the fluid data, we fit $C_\beta$ to the estimated autocorrelation $R_{\mathbf k}(\Delta)$ across all wavenumbers, determining the optimal $\beta$. For most modes, this optimal value is large, and since the difference between large $\beta$ and $\beta = \infty$ is small, we choose the simplest model and focus on the Gaussian kernel case ($\beta = \infty$). In \cref{fig:model_comparison_k50,fig:model_comparison_k100}, we modify \cref{fig:Fourier_correlation_k50,fig:Fourier_correlation_k100} by rescaling both $R_{\mathbf k}(\Delta)$ and the normalized variance of the increments $V_{\mathbf k}(\Delta)$ with respect to the relaxation time $\tau_{\mathbf k}$, while overlaying $C(u)$ and $1 - C(u)$, respectively. According to \cref{lem:autocovariance,lem:normalizations}, we expect
\begin{equation}
\begin{aligned}
R_{\mathbf k}(\Delta) &\simeq C \left( \frac{\Delta}{\tau_{\mathbf k}} \right), \\
V_{\mathbf k}(\Delta) &\simeq 1 - C \left( \frac{\Delta}{\tau_{\mathbf k}} \right),
\end{aligned}
\end{equation}
which is indeed observed numerically. The model is in agreement with the data for most Fourier modes that are not forced, particularly for values above $\sim 0.1$, where the confidence intervals in \cref{fig:Fourier_confidence_k50,fig:Fourier_confidence_k100} are smaller.

\begin{figure}
\begin{center}
\includegraphics{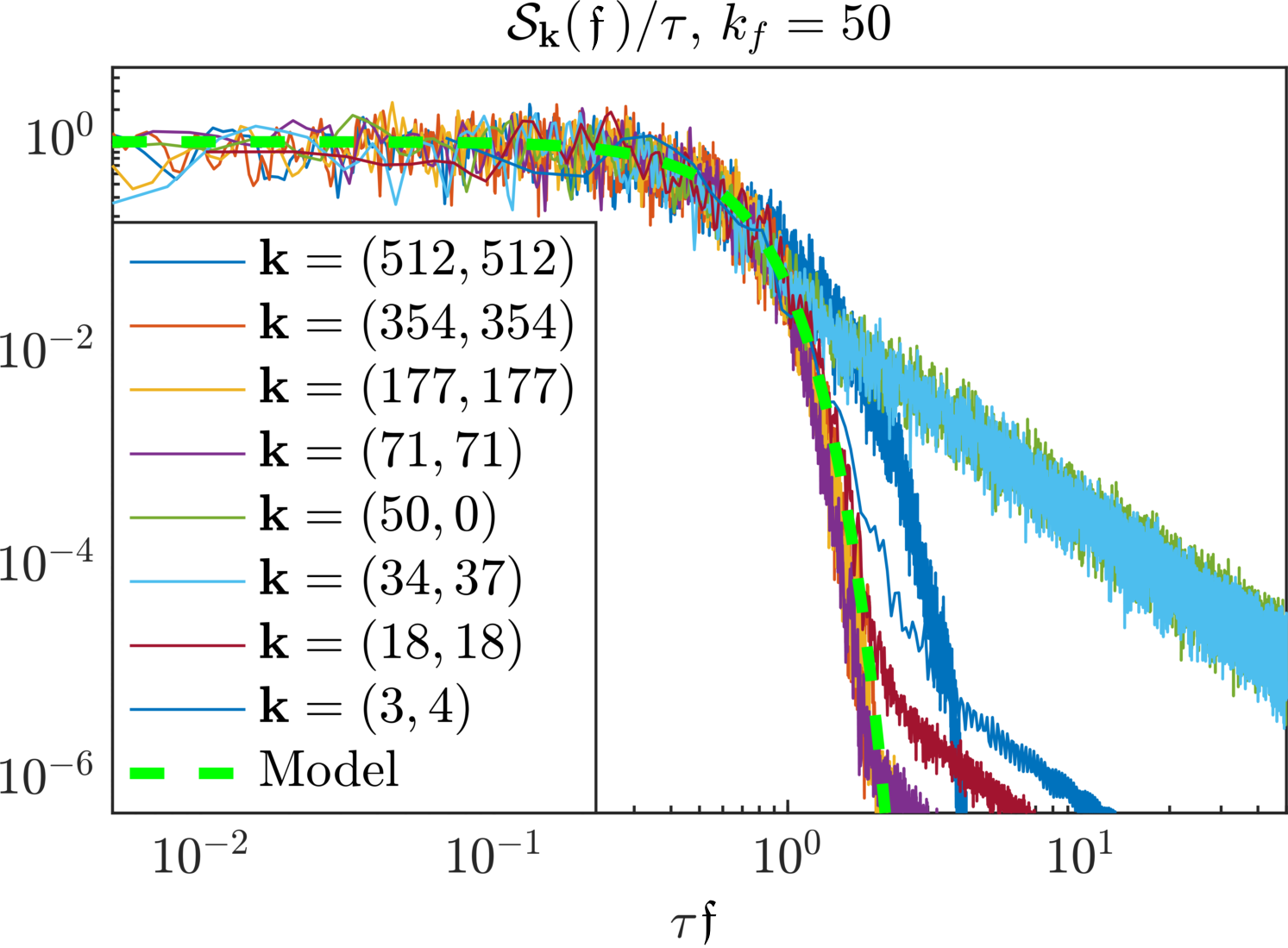} \\ \vspace{0.5cm}
\includegraphics{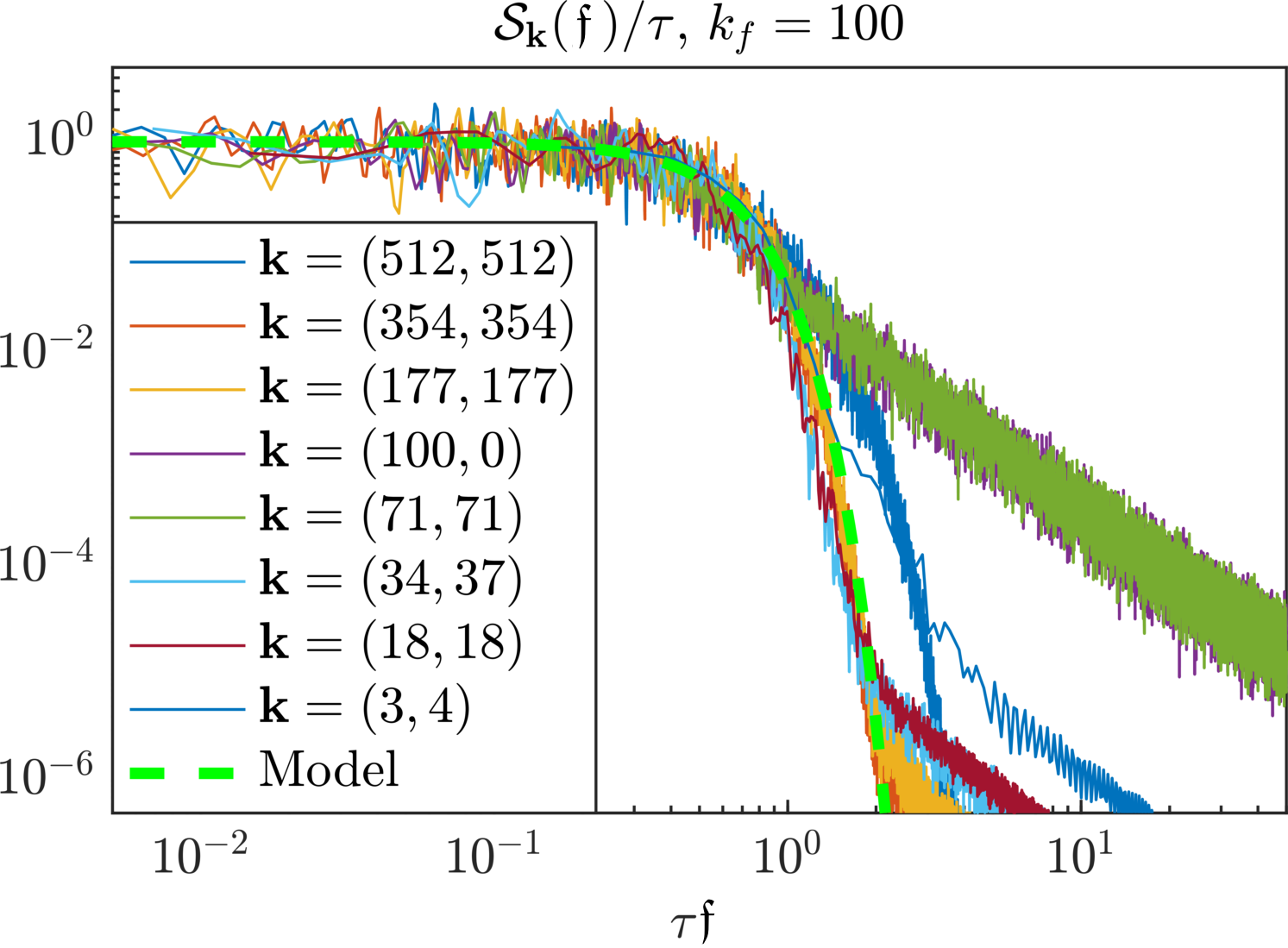}
\end{center}
\caption{Normalized power spectral density $\mathcal S_{\mathbf k}(\mathfrak f)$, rescaled by the relaxation time $\tau_{\mathbf k}$, for different Fourier modes with forcing applied at $\abs{\mathbf k} \sim k_f = 50$ (top) and $\abs{\mathbf k} \sim k_f = 100$ (bottom), compared with the Gaussian kernel model.}
\label{fig:temporal_power_spectra}
\end{figure}

Moreover, as additional validation metric, we consider the normalized power spectral density $\mathcal S_{\mathbf k}$ defined as
\begin{equation}
\mathcal S_{\mathbf k}(\mathfrak f) = \frac{\widetilde{\mathcal S}_{\mathbf k}(\mathfrak f)}{\int_{-\infty}^{+\infty} \widetilde{\mathcal S}_{\mathbf k}(\mathfrak f) \dd \mathfrak f},
\end{equation}
where $\widetilde{\mathcal S}_{\mathbf k}(\mathfrak f)$ is the power spectral density estimated from the data using Welch's method implemented in the MATLAB function \texttt{pwelch} (Signal Processing Toolbox) \cite{Wel67}. For the model associated with kernel \eqref{eq:Gaussian_kernel}, the normalized temporal power spectral density can be computed analytically and reads
\begin{equation}
\mathcal S(\mathfrak f) = \tau e^{- \pi (\tau \mathfrak f)^2}.
\end{equation}
Therefore, to compare different modes with the stochastic model, we plot the rescaled spectra $\mathcal S_{\mathbf k}(\mathfrak f / \tau_{\mathbf k})/ \tau_{\mathbf k}$ together with the function $e^{-\pi u^2}$. In \cref{fig:temporal_power_spectra}, we observe that the model captures the correct trend in most of the unforced Fourier modes, for both $k_f = 50$ and $k_f = 100$. Hence, this model is overall able to capture the statistical correlation properties of the simulated two-dimensional turbulent flow.

\section{The velocity field} \label{sec:velocity_field}

We now have all the ingredients needed to construct a stochastic model for the velocity field based on \eqref{eq:model_u}. For the temporal processes $\xi_t^{\mathbf k}$, we adopt the construction introduced in \cref{sec:model_process}. For the coefficients $\sigma_{\mathbf k}$, we follow a standard approach and determine them from the energy spectrum.

In particular, we model the velocity field $\mathbf u (\mathbf x, t)$ as a Fourier series with random coefficients
\begin{equation}
\mathbf u (\mathbf x, t)  = \sum_{\mathbf k} \sqrt{\mathcal E_{\mathbf k}} \mathbf e_{\mathbf k} (\mathbf x)  \xi_t^{\mathbf k},
\label{eq:vel_model_sec4}
\end{equation}
where each stochastic coefficient $\xi_t^{\mathbf k}$ is defined by the model in the previous section
\begin{equation}
\xi_t^{\mathbf k} = \int_{-\infty}^{+\infty} \frac1{\sqrt{\tau_{\mathbf k}}} \theta_{\mathbf k} \left( \frac{t-s}{\tau_{\mathbf k}} \right) \dd W_s^{\mathbf k},
\end{equation}
which has unit variance by \cref{lem:normalizations}, and where $W_s^{\mathbf k}$ are independent two-sided Brownian motions. The functions $\mathbf e_{\mathbf k}(\mathbf x)$ form a prescribed divergence-free Fourier basis, with the direction of each mode fixed by the wave vector $\mathbf k$ and the incompressibility constraint $\mathbf e_{\mathbf k}(\mathbf x) \cdot \mathbf{k} = 0$. In particular, these basis functions are not inferred from the simulated data and can be defined, for example, as
\begin{equation}
\mathbf e_{\mathbf k}(\mathbf x) = \frac{\mathbf k^\perp}{\abs{\mathbf k}} e^{2\pi i \mathbf k \cdot \mathbf x}.
\end{equation}
The information extracted from the simulations enters only through the energy spectrum $\mathcal E_{\mathbf k}$ and the temporal statistics of the scalar coefficients $\xi_t^{\mathbf k}$. Therefore, the resulting velocity field should be interpreted as a kinematic surrogate reproducing selected Eulerian second-order statistics, rather than as a dynamical prediction of the turbulent velocity field. Specifically, the model does not reproduce nonlinear mode interactions, inter-mode phase coupling, coherent vortex structures, or energy/enstrophy fluxes.

The energy spectrum $\mathcal E_{\mathbf k}$ is indexed by wave vectors $\mathbf k$, while the classical spectrum $E_k$, indexed by positive real numbers, is approximately defined by
\begin{equation}
E_k \sim \sum_{\abs{\mathbf k} \sim k} \mathcal E_{\mathbf k}.
\end{equation}
Assuming that the spectrum $\mathcal E_{\mathbf k}$ is roughly invariant under rotations, one obtains
\begin{equation}
\mathcal E_{\mathbf k} \sim \frac{E_{\abs{\mathbf k}}}{2\pi \abs{\mathbf k}}.
\end{equation}
This representation provides a convenient decomposition of the velocity field into spatial Fourier modes with independent stochastic temporal coefficients, each scaled consistently with the energy spectrum. In particular, the velocity field can be written as
\begin{equation}
\mathbf u (\mathbf x, t)  = \sum_{\mathbf k} \sqrt{\frac{E_{\abs{\mathbf k}}}{2\pi \abs{\mathbf k}}} \mathbf e_{\mathbf k} (\mathbf x)  \xi_t^{\mathbf k}.
\end{equation}

\subsection{White noise approximation of the model} \label{sec:approximation_model}

The model derived so far involves a large number of Fourier modes and, consequently, a large collection of stochastic processes, which limits its practical applicability. In this section, we show how, for small relaxation times, the model can be further approximated by a white noise model.

In the regime where all relaxation times $\tau_{\mathbf k}$ are small, which is the case when $\alpha$ is large, we approximate each process $\xi_t^{\mathbf k}$ by a rescaled white noise
\begin{equation}
\xi_t^{\mathbf k} \simeq \sqrt{\tau_{\mathbf k}} \frac{\d W_t^{\mathbf k}}{\d t}.
\end{equation}
This approximation is made rigorous in the following result, which shows that, in the limit of vanishing relaxation time, the integral of the process $\xi_t$ converges to a suitably rescaled Brownian motion. This result is analogous to \cite[Lemma 6]{CiF25}, which establishes the corresponding result for fractional Ornstein--Uhlenbeck processes.

\begin{proposition} \label{pro:approximation_white_noise}
Let $\xi_t$ be defined in equation \eqref{eq:def_xi}. Under \cref{as:condition_theta}, it holds
\begin{equation}
\left( \E \left[ \left( \int_0^t \xi_s \dd s - \sqrt\tau W_t \right)^2 \right] \right)^{\frac12} \le \sqrt{2M} \tau,
\end{equation}
where $M > 0$ is defined in \cref{as:condition_theta}.
\end{proposition}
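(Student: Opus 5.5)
The plan is to write both terms as Wiener integrals against the same Brownian motion and compute the mean-square difference exactly with the Itô isometry. The difference of the two integrands splits into two localized "boundary layer" pieces. Each piece is controlled in $L^1$ by the first moment $M$ and in $L^\infty$ by $1/2$, which together give the constant $\sqrt{2M}$.

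\emph{Step 1 (integrand of $\int_0^t \xi_s \dd s$).} Let $\Phi(x) = \int_{-\infty}^x \theta(y) \dd y$ be the distribution function of $\theta$. By \cref{as:condition_theta}, $\Phi$ is nondecreasing, $\Phi(-\infty)=0$, $\Phi(+\infty)=1$, and by symmetry $\Phi(-x) = 1-\Phi(x)$, so $\Phi(0)=1/2$. By the stochastic Fubini theorem (justified since $\theta \in L^1 \cap L^2$ and $[0,t]$ is bounded),
\begin{equation}
\int_0^t \xi_s \dd s = \int_{-\infty}^{+\infty} g(r) \dd W_r, \qquad g(r) = \int_0^t \frac{1}{\sqrt\tau}\theta\left(\frac{s-r}{\tau}\right)\dd s = \sqrt\tau \left[ \Phi\left(\frac{t-r}{\tau}\right) - \Phi\left(\frac{-r}{\tau}\right)\right].
\end{equation}
We also write $\sqrt\tau W_t = \int \sqrt\tau \mathbbm 1_{[0,t]}(r) \dd W_r$, using $\mathbbm 1_{[0,t]}(r) = \mathbbm 1_{\{r<t\}} - \mathbbm 1_{\{r<0\}}$.

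\emph{Step 2 (splitting the difference).} Set $\psi(x) = \Phi(x) - \mathbbm 1_{\{x>0\}}$. Then the difference of integrands is $\sqrt\tau\,[a(r) - b(r)]$ with $a(r) = \psi((t-r)/\tau)$ and $b(r) = \psi(-r/\tau)$. By the Itô isometry,
\begin{equation}
\E\left[\left(\int_0^t \xi_s\dd s - \sqrt\tau W_t\right)^2\right] = \tau \int_{\R} (a(r)-b(r))^2 \dd r \le 2\tau \int_\R \left(a(r)^2 + b(r)^2\right)\dd r .
\end{equation}

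\emph{Step 3 (key estimate on $\psi$).} Since $\Phi$ is monotone with $\Phi(0)=1/2$, we have $|\psi(x)| \le 1/2$ everywhere. Hence $\psi^2 \le \tfrac12 |\psi|$. Moreover, integrating by parts (equivalently, Fubini on the tails),
\begin{equation}
\int_\R |\psi(x)|\dd x = \int_{-\infty}^0 \Phi(x)\dd x + \int_0^{\infty} (1-\Phi(x))\dd x = \int_\R |x|\theta(x)\dd x = M .
\end{equation}
After the change of variables $x = (t-r)/\tau$, respectively $x = -r/\tau$, each of $\int a^2$ and $\int b^2$ is at most $\tfrac12 \tau M$. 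Therefore the right-hand side of Step 2 is at most $2\tau\cdot \tau M = 2M\tau^2$. Taking square roots gives the claim, with no dependence on $t$.

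The main obstacle is getting exactly the constant $\sqrt{2M}$. The crude bound $|\psi|\le 1$ loses a factor $2$ and would only give $2\sqrt M\,\tau$. The improvement comes from the symmetry assumption (ii) through $\Phi(0) = 1/2$, which yields $|\psi| \le 1/2$. The other point needing care is the justification of stochastic Fubini and of the integration by parts identifying $\int|\psi|$ with $M$. This is where (v) and the integrability assumptions are used: $x\Phi(x)\to 0$ as $x\to-\infty$ because of the finite first moment.
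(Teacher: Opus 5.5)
Your proof is correct and reaches exactly the paper's constant, but the bookkeeping is different. The paper keeps the integrand in the form $\sqrt\tau\,(\mathbbm{1}_{[0,t]}*\vartheta_\tau)$. It bounds the squared error by its absolute value using $\abs{\mathbbm{1}_{[0,t]}*\vartheta_\tau-\mathbbm{1}_{[0,t]}}\le 1$. It then controls the $L^1$ norm of this mollification error by Minkowski's integral inequality together with the translation estimate $\int\abs{\mathbbm{1}_{[0,t]}(r-s)-\mathbbm{1}_{[0,t]}(r)}\dd r\le 2\abs{s}$.

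You instead compute the error explicitly via the distribution function $\Phi$. This shows it as two boundary layers of width $\tau$ at $r=0$ and $r=t$, which makes the uniformity in $t$ transparent. It also replaces the translation inequality by the exact identity $\int\abs{\psi}\dd x=M$.

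The price of your split is the factor $2$ in $(a-b)^2\le 2(a^2+b^2)$. You win it back through $\abs{\psi}\le\frac12$, i.e.\ through the symmetry $\Phi(0)=\frac12$. So your closing remark that the constant $\sqrt{2M}$ hinges on assumption (ii) reflects your route, not the problem. Since $a-b$ is a number in $[0,1]$ minus an indicator, it takes values in $[-1,1]$. You may therefore bound $(a-b)^2\le\abs{a-b}\le\abs{a}+\abs{b}$ directly and still obtain $\tau\int(a-b)^2\dd r\le 2M\tau^2$ without using symmetry. This is in effect what the paper's chain does.
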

\begin{proof}
Let us first rewrite the two terms in the left-hand side of the statement. Denoting $\vartheta_\tau = \theta(\cdot/\tau)/\tau$, we have
\begin{equation}
\begin{aligned}
\int_0^t \xi_s \dd s &= \sqrt\tau \int_{-\infty}^{+\infty} (\mathbbm{1}_{[0,t]} * \vartheta_\tau )(r) \dd W_r, \\
\sqrt\tau W_t &= \sqrt\tau \int_{-\infty}^{+\infty} \mathbbm{1}_{[0,t]}(r) \dd W_r,
\end{aligned}
\end{equation}
which, due to the Itô isometry, implies
\begin{equation}
\begin{aligned}
\E \left[ \left( \int_0^t \xi_s \dd s - \sqrt\tau W_t \right)^2 \right] &= \tau \int_{-\infty}^{+\infty} \left( (\mathbbm{1}_{[0,t]} * \vartheta_\tau )(r) - \mathbbm{1}_{[0,t]}(r) \right)^2 \dd r \\
&\le \tau \int_{-\infty}^{+\infty} \abs{(\mathbbm{1}_{[0,t]} * \vartheta_\tau )(r) - \mathbbm{1}_{[0,t]}(r)} \dd r,
\end{aligned}
\end{equation}
where in the last row we used that 
\begin{equation}
\abs{(\mathbbm{1}_{[0,t]} * \vartheta_\tau )(r) - \mathbbm{1}_{[0,t]}(r)} \le 1.
\end{equation}
Notice that
\begin{equation}
\int_{-\infty}^{+\infty} \abs{\mathbbm{1}_{[0,t]}(r-s) - \mathbbm{1}_{[0,t]}(r)} \dd r \le 2\abs{s},
\end{equation}
which implies
\begin{equation}
\begin{aligned}
\E \left[ \left( \int_0^t \xi_s \dd s - \sqrt\tau W_t \right)^2 \right] &\le \tau \int_{-\infty}^{+\infty} \int_{-\infty}^{+\infty} \abs{\mathbbm{1}_{[0,t]}(r-s) - \mathbbm{1}_{[0,t]}(r)} \dd r \vartheta_\tau(s) \dd s \\
&\le 2\tau \int_{-\infty}^{+\infty} \abs{s} \vartheta_\tau(s) \dd s.
\end{aligned}
\end{equation}
Finally, by definition of $\vartheta_\tau$ and doing the change of variables $x = s/\tau$, we deduce
\begin{equation}
\E \left[ \left( \int_0^t \xi_s \dd s - \sqrt\tau W_t \right)^2 \right] \le 2 \tau^2 \int_{-\infty}^{+\infty} \abs{x} \theta(x) \dd x = 2M\tau^2,
\end{equation}
which gives the desired result.
\end{proof}

Therefore, assuming all $\tau_{\mathbf k}$ relatively small, \cref{pro:approximation_white_noise} gives the approximation 
\begin{equation}
\mathbf u (\mathbf x, t) \simeq \sum_{\mathbf k} \sqrt{\frac{E_{\abs{\mathbf k}}}{2\pi\abs{\mathbf k}}} \tau_{\mathbf k} \mathbf e_{\mathbf k} (\mathbf x) \frac{\d W_t^{\mathbf k}}{\d t}.
\end{equation}
This demonstrates that our model is consistent with the white noise approximation, an idea that has been extensively studied in the literature on turbulence modeling \cite{FlL23}. However, this approximation deteriorates when large scale structures, for which $\tau_{\mathbf k}$ is not small, are present.

\section{Stochastic transport and effective diffusion} \label{sec:heat_equation}

As anticipated in \cref{sec:intro}, there are many reasons to introduce over-simplified models of turbulent velocity fields. One, which is not our purpose here but deserves to be mentioned, is the supplementation of numerical codes for large-scale systems with synthetic models of small scales that cannot be resolved; see, for example, the STUOD project \cite{CCH23}, based on synthetic models of turbulent transport, which have some analogy with those discussed here.

For us, the motivation is to build a tool for \emph{analytical} investigations. The most typical example—but there are many others; see, for instance, \cite{FlT26,PFH23,PPV23}, dealing with polymers and particle aggregation—is understanding the transport of a passive temperature field by a turbulent fluid, modeled by the advection--diffusion equation
\begin{equation} \label{heat_eq}
\partial_t T + \mathbf u \cdot \nabla T = \kappa \Delta T.
\end{equation}
Here, $T$ is treated as a passive scalar, so thermodynamic coupling is neglected. Obviously, especially in two dimensions, direct numerical simulation of the Navier--Stokes equations coupled with the passive-scalar advection--diffusion equation provides the most accurate description of the temperature field and the associated heat flux within this approximation. However, numerical simulations alone may not help to \emph{explain} behaviors, for instance, the dependence of certain quantities on others (like the Nusselt number on the Rayleigh number in Rayleigh--Bénard convection, or the energy confinement time on the magnetic field strength in confined fusion plasma). In order to increase our understanding of certain phenomena and their links, \emph{simplified} models may help. Using the Navier--Stokes equations directly in a turbulent regime for \emph{analytical} computations is almost impossible nowadays. Therefore, replacing their solution in a turbulent regime with a stochastic model, even if extremely idealized and simplified, may provide some qualitative information.

This is the rationale behind the very strong approximations made so far and below in this section. Let us explain how a stochastic model may provide information on the transport of a passive scalar. First, recall that the average flux (up to geometrical details of the domain, which we do not discuss here) is an average of the form $\left\langle \mathbf u T - \kappa \nabla T \right\rangle$, which requires complicated knowledge of $\mathbf u$. But if we have a rational way to replace the turbulent dispersion produced by $\mathbf u$ with a turbulent diffusivity, then the average flux takes the form $\left\langle (\kappa + \kappa_T) \nabla T \right\rangle$ for some appropriate $\kappa_T$, which increases our ability to perform analytical computations.

Assume the velocity is given by a sum of components
\begin{equation} \label{general_velocity}
\mathbf u(\mathbf x, t) = \sum_{\mathbf k} \sigma_{\mathbf k}(\mathbf x) \xi_t^{\mathbf k}.
\end{equation}
In a number of cases (but not always), the average quantity $\overline T(\mathbf x, t) = \E[T(\mathbf x,t)]$ solves an equation of the form
\begin{equation} \label{heat_eq_averaged}
\partial_t \overline T = \kappa \Delta \overline T + \div \left(Q(\mathbf x, t)  \nabla \overline T \right),
\end{equation}
where the matrix-valued function $Q(\mathbf x, t)$ is given by the diagonal $Q(\mathbf x, t) = Q(\mathbf x, \mathbf x, t)$ of the covariance matrix
\begin{equation}
Q(\mathbf x, \mathbf y, t) = \sum_{\mathbf k} q_{\mathbf k}(t) \sigma_{\mathbf k}(\mathbf x) \otimes \sigma_{\mathbf k}(\mathbf y),
\end{equation}
and the functions $q_{\mathbf k}(t)$ are given in terms of the auto-covariance function $\E \left[ \xi_t^{\mathbf k} \xi_0^{\mathbf k} \right]$ of the noise by
\begin{equation}
q_{\mathbf k}(t) = \int_0^t \E \left[ \xi_s^{\mathbf k} \xi_0^{\mathbf k} \right] \dd s.
\end{equation}
As said, there is no general proof of this ``Green--Kubo'' formula, and numerically we have also discovered cases where it is not accurate \cite{CiF25}. However, it is rigorously true in a number of cases, such as the case where the stochastic processes $\xi_t^{\mathbf k}$ are independent Gaussian white noises (delta-correlated), as in the Kraichnan model \cite{FlL23,KrM80}, where $q_{\mathbf k}(t) = 1$, and, more generally, in the case $\kappa = 0$, when the processes $\xi_t^{\mathbf k}$ are independent Gaussian processes with finite second moments and the vector fields $\sigma_{\mathbf k}(\mathbf x) \cdot \nabla$ commute \cite{FlR26}.

Let us assume the validity of this result, at least approximately, in our case, although, strictly speaking, our processes $\xi_t^{\mathbf k}$ have a finite time correlation — compared to the white noise case — and the vector fields do not commute — compared to \cite{FlR26}. There is intense mathematical research on generalizations of this result (see, for instance, \cite{PoV03}), and establishing a Green--Kubo formula for the class of stochastic processes introduced in this work would be an interesting direction for future research.

Let us also stress that, in the case of white noise, there are ``quenched'' results stating that single realizations of the stochastic system given by equations \eqref{heat_eq} and \eqref{general_velocity} are close to the average equation \eqref{heat_eq_averaged}, see \cite{FGL22,FlL22,FlL23}. Namely, several independent arguments point in the direction of the equation \eqref{heat_eq_averaged}, although a rigorous proof in full generality is not known, and modifications may be relevant in particular cases.

Now, in order to simplify the exposition, assume that the stochastic processes $\xi_t^{\mathbf k}$ have identical auto-correlation functions, up to a time rescaling depending on the norm of the wave vector $\abs{\mathbf k}$, of the form
\begin{equation}
\E \left[ \xi_t^{\mathbf k} \xi_0^{\mathbf k} \right] = h \left( \frac{t}{\tau_{\abs{\mathbf k}}} \right),
\end{equation}
for some function $h$ satisfying $h(0) = 1$. This is approximately the case in our simulations above. Under this assumption,
\begin{equation}
q_{\mathbf k}(t) = \int_0^t h \left( \frac{s}{\tau_{\abs{\mathbf k}}} \right) \dd s = \tau_{\abs{\mathbf k}} \int_0^{t/\tau_{\abs{\mathbf k}}} h(r) \dd r.
\end{equation}
Assume also that we are in two dimensions, on the torus $[0,2\pi]^2$, with the vector fields $\sigma_{\mathbf k}(\mathbf x)$ of the noise proportional to 
\begin{equation}
\frac{\mathbf k^{\perp}}{\abs{\mathbf k}} \cos \left(\mathbf k \cdot \mathbf x \right) \qquad \text{and} \qquad \frac{\mathbf k^\perp}{\abs{\mathbf k}} \sin \left( \mathbf k \cdot \mathbf x\right),
\end{equation}
where $\mathbf k^\perp = \left( -k_2, k_1 \right)$ if $\mathbf k = \left(  k_1, k_2 \right)$, in such a way that they appear in groups of four fields associated with the wave vectors $\mathbf k, - \mathbf k, \mathbf k^\perp, - \mathbf k^\perp$, more precisely, quartets of the form
\begin{equation}
\begin{aligned}
\mathbf e_{\mathbf k}^{(1)}(\mathbf x) &= \frac{\mathbf k^\perp}{\abs{\mathbf k}} \cos \left(\mathbf k \cdot \mathbf x \right), & \mathbf e_{\mathbf{k}}^{(2)}(\mathbf x) &= \frac{\left(\mathbf k^\perp\right)^\perp}{\abs{\mathbf k^\perp}} \cos \left( \mathbf k^\perp \cdot \mathbf x \right), \\
\mathbf e_{\mathbf k}^{(3)}(\mathbf x) &= \frac{\left(-\mathbf k \right)^\perp}{\abs{-\mathbf k}} \sin \left(-\mathbf k \cdot \mathbf x \right), \qquad& \mathbf e_{\mathbf k}^{(4)}(\mathbf x) &= \frac{\left( -\mathbf k^\perp \right)^\perp}{\abs{-\mathbf k^\perp}} \sin \left( -\mathbf k^\perp \cdot \mathbf x \right),
\end{aligned}
\end{equation}
with the obvious algebraic simplifications. An easy computation gives
\begin{equation}
\sum_{a=1}^4 \mathbf e_{\mathbf k}^{(a)}(\mathbf x) \otimes \mathbf e_{\mathbf k}^{(a)}(\mathbf x) = I_2,
\end{equation}
where $I_2$ is the two-dimensional identity matrix. Of course, when there are many quartets of this form, the identity matrix is multiplied by a suitable coefficient.

Summarizing, if we accept the general result above, even though our assumptions are somewhat borderline with respect to the currently available rigorous theorems, and we adopt the notational simplifications just described, we obtain the equation
\begin{equation} \label{heat_equation_simplified}
\partial_t \overline T = \left( \kappa + \kappa_T(t) \right) \Delta \overline T,
\end{equation}
where the ``eddy diffusivity'' $\kappa_T(t)$ is given by an expression of the form
\begin{equation}
\kappa_T(t) = \sum_{\mathbf k} \mu_{\abs{\mathbf k}} \tau_{\abs{\mathbf k}} \int_0^{t/\tau_{\abs{\mathbf k}}} h(r),\dd r,
\end{equation}
for certain coefficients $\mu_{\abs{\mathbf k}}$ that can be made explicit in terms of the precise synthetic model.

We see the main advantage of a synthetic model of the turbulent velocity field: the possibility of obtaining interpretable results for the diffusion produced by turbulence. In the next sections, we derive approximate analytical large-time and small-time behaviors and compare them with direct numerical simulations of a passive tracer advected by the solution of the stochastic Navier--Stokes equation.

\subsection{Large time diffusion} \label{sec:diffusion_large_time}

In this and the next two sections, we idealize the stochastic processes $\xi_t^{\mathbf k}$ introduced above by fitting the numerical data by assuming that they have an auto-correlation function of the form
\begin{equation}
\E \left[ \xi_t^{\mathbf k} \xi_0^{\mathbf k} \right] \sim e^{- \left( \abs{\mathbf k}^2 t^2 \right)} = e^{- \left( t/\tau_{\abs{\mathbf k}} \right)^2}, \qquad \text{where} \qquad \tau_{\abs{\mathbf k}} = \frac1{\abs{\mathbf k}},
\end{equation}
hence
\begin{equation}
\kappa_T(t) = \sum_{\mathbf k} \mu_{\abs{\mathbf k}} \tau_{\abs{\mathbf k}} \int_0^{t/\tau_{\abs{\mathbf k}}} e^{-r^2} \dd r.
\end{equation}
If $t \gg \max_{\mathbf k} \tau_{\abs{\mathbf k}}$, setting
\begin{equation}
\kappa_T \coloneq \sum_{\mathbf k} \mu_{\abs{\mathbf k}} \tau_{\abs{\mathbf k}} \int_0^\infty e^{-r^2} \dd r = \frac{\sqrt\pi}2 \sum_{\mathbf k} \mu_{\abs{\mathbf k}} \tau_{\abs{\mathbf k}},
\end{equation}
we obtain the asymptotic behavior $\kappa_T(t) \sim \kappa_T$. In other words, for large times the diffusion is classical, namely Brownian, corresponding to the fact that, on a larger time scale, the passive tracers experience random displacements due to the relatively fast vortex structures (with respect to the chosen time scale).

\subsection{Small time behavior} \label{sec:diffusion_small_time}

More complex is the small-time behavior. We divide the analysis into two parts: first, we assume for simplicity that only the largest spatial scale is relevant, then we study the modifications due to the addition of smaller scales.

Assume first that only the smallest values of $\abs{\mathbf k}$ still carrying energy matter; let those be of scale $k_{\min}$. In the sum above, they have the largest values of $\tau_{\abs{\mathbf k}}$ and also of $\mu_{\abs{\mathbf k}}$, which, by inspection, depends linearly on the energy spectrum $E_k$. We therefore approximate
\begin{equation}
\kappa_T(t) \sim \sum_{\abs{\mathbf k} \sim k_{\min}} \mu_{k_{\min}} \tau_{k_{\min}} \int_0^{t/\tau_{k_{\min}}} e^{-r^2} \dd r.
\end{equation}
Using the approximation $\int_0^t e^{-r^2} \dd r \sim t -\frac{t^3}{3}$ (indeed, we shall see that already the effect of the term $\frac{t^3}{3}$ will disappear), we get
\begin{equation}
\begin{aligned}
\kappa_T(t) &\sim \sum_{\abs{\mathbf k} \sim k_{\min}} \mu_{k_{\min}} \tau_{k_{\min}} \left( \frac{t}{\tau_{k_{\min}}} - \frac13 \left( \frac{t}{\tau_{k_{\min}}} \right)^3 \right)  \\
&= \widetilde\mu \left( t - \frac{t^3}{3 \tau_{k_{\min}}^2} \right)
\sim \widetilde\mu t
\qquad \text{for } t \ll \tau_{k_{\min}}.
\end{aligned}
\end{equation}
A linearly increasing diffusion coefficient corresponds to a ballistic behavior: the equation $\partial_t \overline T = \widetilde\mu t \Delta \overline T$ is the Fokker--Planck equation associated with the diffusion $\d X_t = \sqrt{2\widetilde\mu t} \dd W(t)$, where $W_t$ is a two-dimensional Brownian motion, and
\begin{equation} \label{eq:slope_short_time}
\E \left[ \abs{X_t}^2\right] = 2 \widetilde\mu t^2,
\end{equation}
namely $\abs{X_t}$ ``grows as $t$'' (in an average sense). This behavior is reasonable in our model for short times, when tracers move adhering to the vortices of largest size, hence as in a laminar flow, locally in time and space.

There is, however, the issue that smaller vortex structures, with shorter correlation times, may act as random perturbations and produce fluctuations on a shorter time scale. The terms we have neglected have the form
\begin{equation}
\mu_{\abs{\mathbf k}} \tau_{\abs{\mathbf k}} \int_0^{t/\tau_{\abs{\mathbf k}}} e^{-r^2} \dd r,
\end{equation}
with $\abs{\mathbf k} \gg k_{\min}$. If the chosen time $t \ll \tau_{k_{\min}}$ satisfies at the same time $t \gg \tau_{\abs{\mathbf k}}$, the neglected term behaves like
\begin{equation}
\mu_{\abs{\mathbf k}} \tau_{\abs{\mathbf k}} \int_0^{t/\tau_{\abs{\mathbf k}}} e^{-r^2} \dd r \sim \frac{\sqrt\pi}{2} \mu_{\abs{\mathbf k}} \tau_{\abs{\mathbf k}},
\end{equation}
which perturbs the linear growth $\widetilde\mu t$ found above. Thus the terms in the global sum over $\mathbf k$ divide into three categories, for a chosen value $t \ll \tau_{k_{\min}}$: those that satisfy $\tau_{\abs{\mathbf k}} \ll \tau_{k_{\min}}$, which accumulate into ballistic behavior, with a smaller weight than those of $k_{\min}$; those that satisfy $t \gg \tau_{\abs{\mathbf k}}$, which produce a Brownian perturbation, although small in amplitude; and the intermediate ones, even more difficult to characterize, but with intermediate behavior and strength.

Summarizing, for small times there is some dominance of ballistic behavior, although not a clear one. In the next section we compare these theoretical predictions, based analytically on the synthetic model of turbulent velocity together with the theoretical arguments leading to equation \eqref{heat_equation_simplified}, with direct numerical simulations of tracer diffusion.

\subsection{Numerical verification}

\begin{figure}
\begin{center}
\begin{tabular}{ccc}
\includegraphics{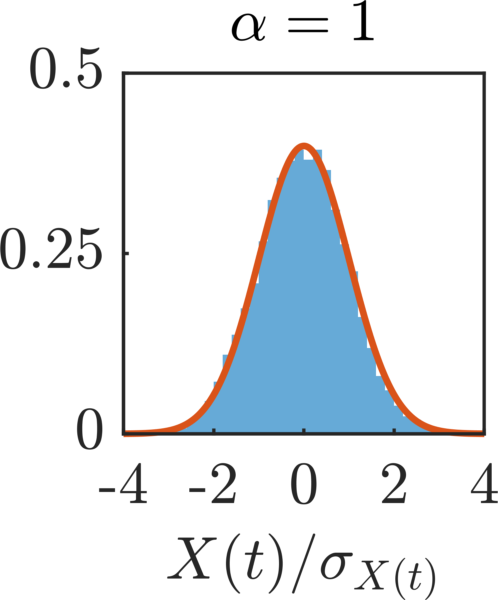} &
\includegraphics{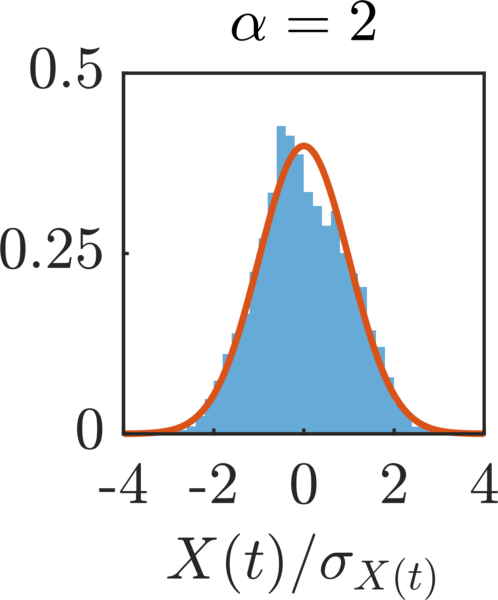} &
\includegraphics{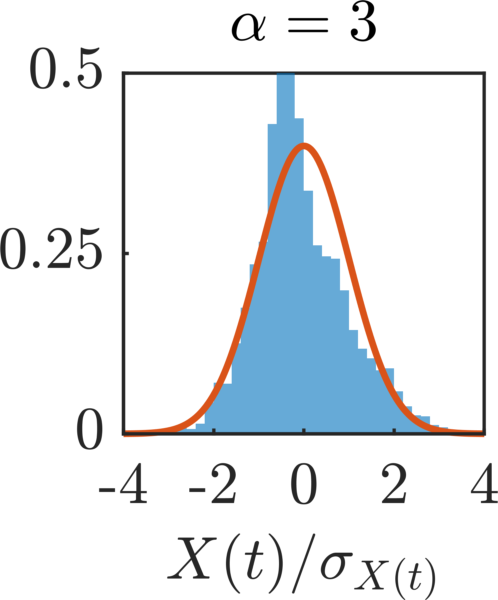} \\[0.5cm]
\includegraphics{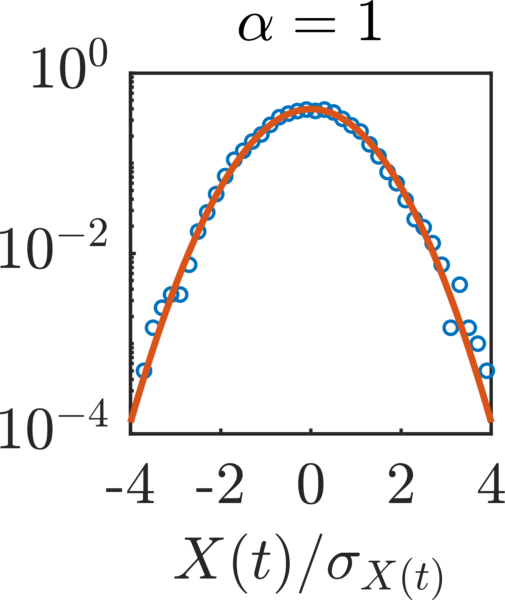} &
\includegraphics{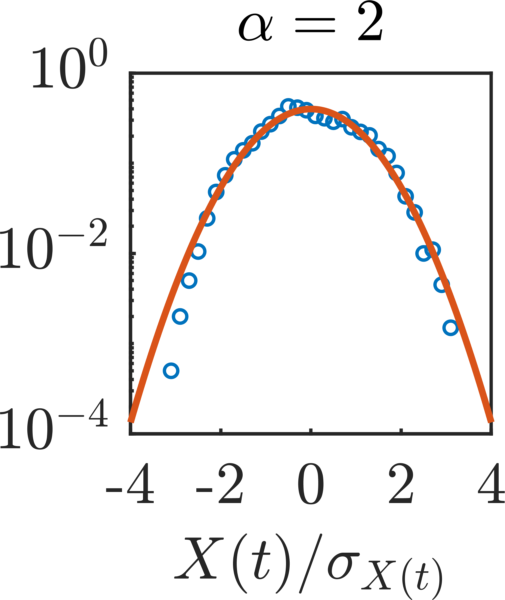} &
\includegraphics{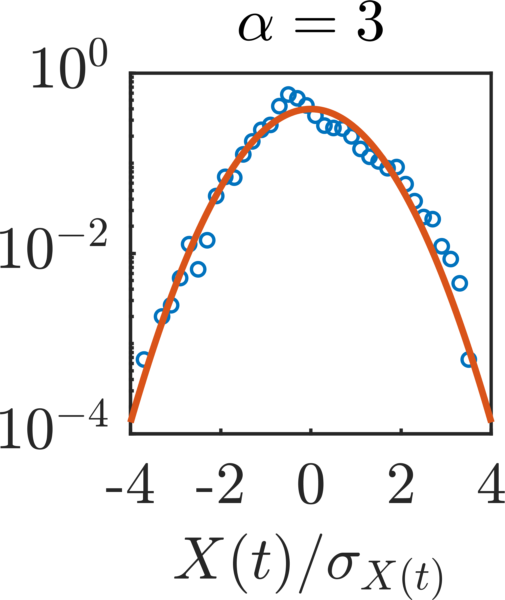}
\end{tabular}
\end{center}
\caption{Probability density function of the displacement $X(t)$ at $t = 10$, for decreasing values of the damping parameter $\alpha$, i.e., $\alpha_1 > \alpha_2 > \alpha_3 > 0$, compared with a Gaussian distribution $\mathcal N(0,\sigma_{X(t)}^2)$, shown in both linear (top) and semilogarithmic (bottom) scales.}
\label{fig:diffusion_tracer_PDF}
\end{figure}

\begin{figure}
\begin{center}
\includegraphics{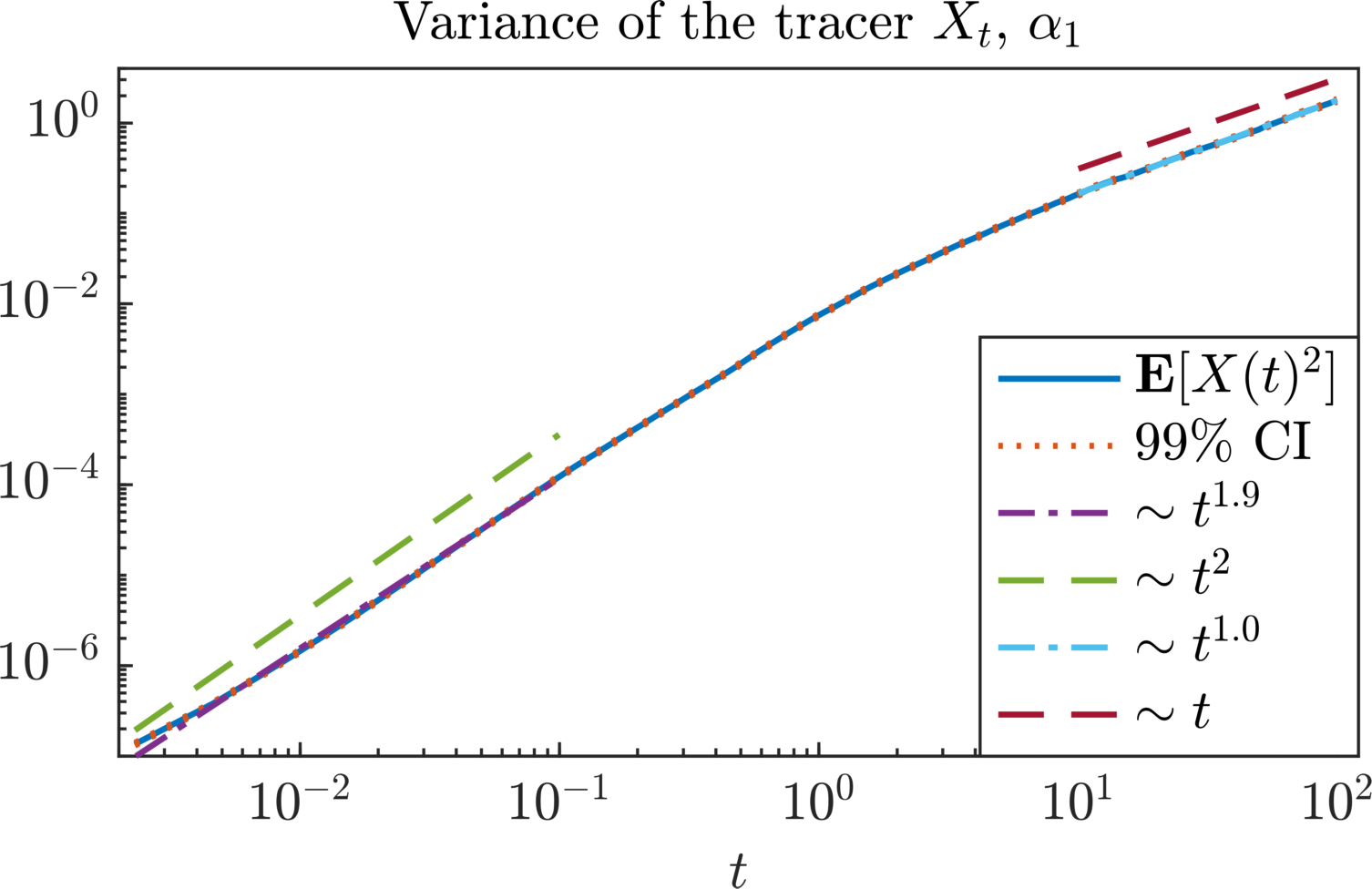} \\ \vspace{0.5cm}
\includegraphics{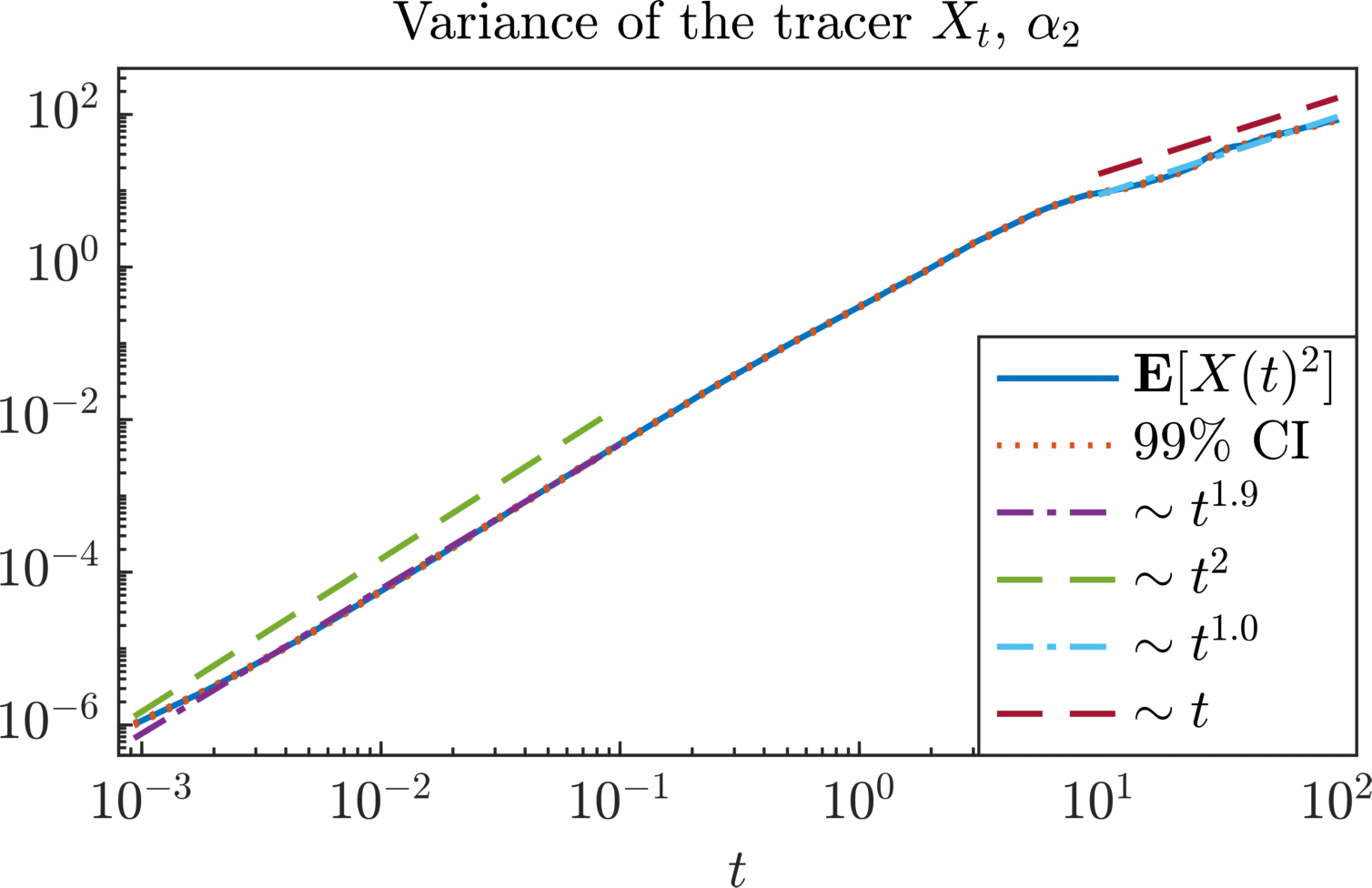} \\ \vspace{0.5cm}
\includegraphics{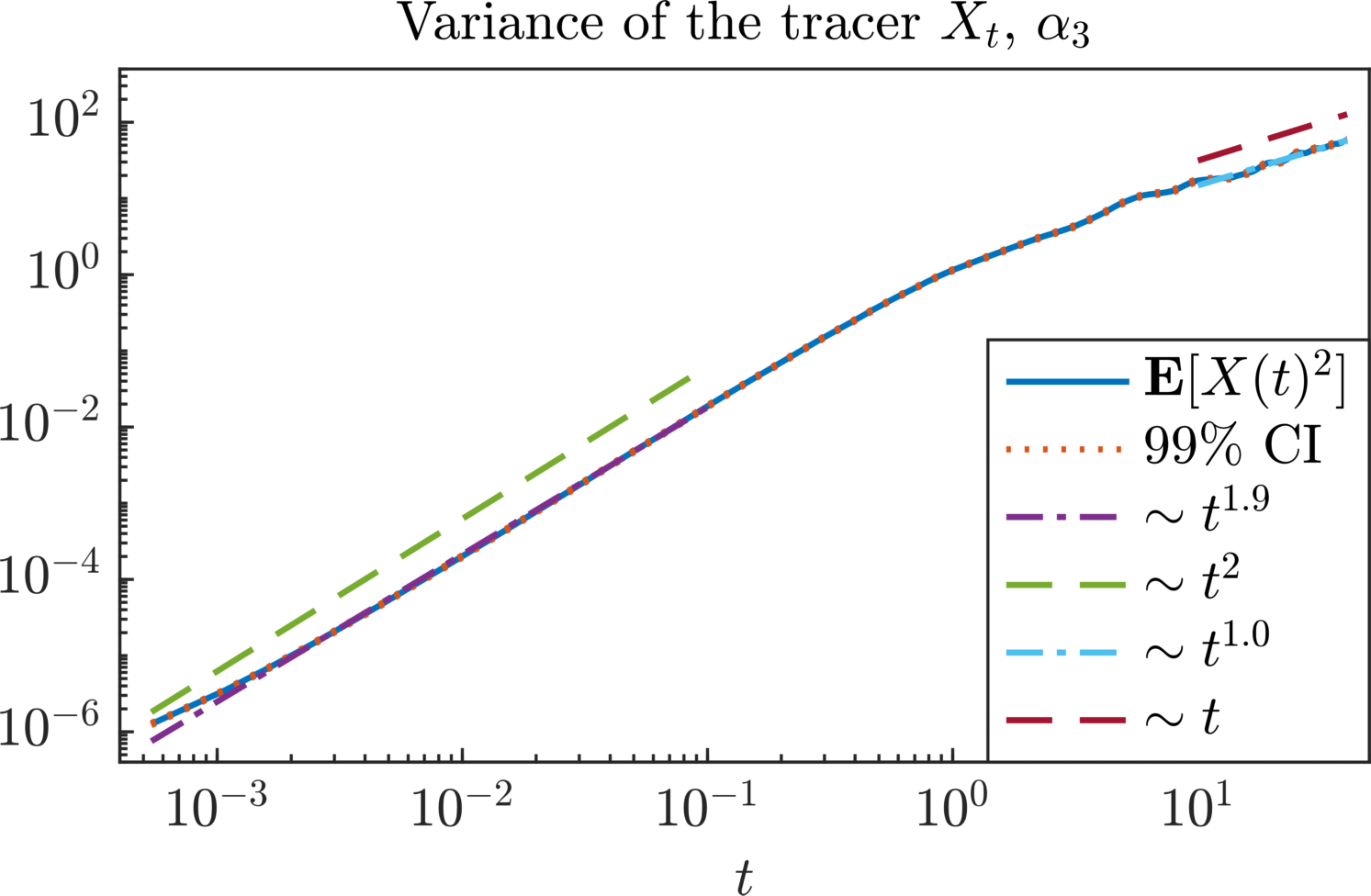}
\end{center}
\caption{Variance of the tracer $X_t$, for decreasing values of the damping parameter $\alpha$, i.e., $\alpha_1 > \alpha_2 > \alpha_3 > 0$.}
\label{fig:diffusion_tracer}
\end{figure}

We numerically verify the behavior predicted in \cref{sec:diffusion_large_time,sec:diffusion_small_time}. Although we introduced the general advection-diffusion equation with diffusion coefficient $\kappa$, the effective diffusion discussed in this section refers to the turbulent contribution $\kappa_T$. Therefore, the tracer simulations considered here correspond to the purely advective case $\kappa=0$, so that the observed dispersion is entirely due to the velocity field. For three decreasing values of the damping parameter $\alpha$, we simulate the Navier--Stokes equations in vorticity form \eqref{eq:NavierStokes_vorticity} and track the corresponding Lagrangian trajectories. For each trajectory, the displacement $X(t)$ along the $x$-axis is measured relative to its initial position, which is taken as the origin. In total, $M = 10^4$ trajectories are simulated up to the final time $T = 100$, with a time-step equal to that of the flow, corresponding to a CFL number of $0.3$. Particles are initially on a uniformly spaced grid. One flow realization is simulated to collect Lagrangian statistics. A bilinear interpolation is used to transfer the Eulerian velocity onto the particles, while a second-order Adams--Bashforth scheme is used for the Lagrangian tracking.

We first analyze the probability density function of the displacement $X(t)$, shown in \cref{fig:diffusion_tracer_PDF} at time $t = 10$, which is expected to be Gaussian based on the arguments above. This is confirmed for large values of the damping parameter $\alpha$, while deviations from Gaussianity emerge as $\alpha$ decreases.

Then, the variance of the displacement, $\E[X(t)^2]$, is estimated by means of a Monte Carlo average over the ensemble of trajectories. The numerical results are presented in \cref{fig:diffusion_tracer}, where we also report a 99\% confidence interval for $\E[X(t)^2]$. The confidence interval is not clearly visible in the figures because it overlaps the estimated mean, reflecting the high precision of the Monte Carlo estimate. Although the initial fitted slope $\sim t^{1.9}$ is slightly smaller than the $\sim t^2$ scaling predicted by equation \eqref{eq:slope_short_time}, with a relative error of approximately 5\%, the overall behavior is in agreement with the theoretical prediction and the discussion at the end of \cref{sec:diffusion_small_time}. In particular, the fitted and predicted long-time slopes $\sim t$ coincide.

\section{Conclusion} \label{sec:conclusion}

We have investigated, by direct numerical simulations, the classical problem of a Fourier synthetic description of a two-dimensional forced-damped turbulent velocity field in terms of random processes, and its applicability to capturing the transport of a passive scalar. We identify the existence of a characteristic time correlation length. The resulting structure exhibits short memory, in contrast with the natural conjecture that long memory might arise due to large-scale structures produced by the inverse cascade. However, the usual short-memory paradigm, namely the Ornstein--Uhlenbeck process or exponential decay of correlations, is not confirmed by the numerical experiments. The observed structure is different, and we propose a model in \cref{sec:model_process} that is able to reproduce the selected marginal and single-mode temporal statistics. In particular, we consider convolutions of white noise with appropriate kernels as the underlying stochastic processes for the Fourier components of turbulent flows.

We remark that the proposed model is a data-driven model constructed from the observed statistical properties of the numerical simulations performed for the range of Reynolds numbers and forcing configurations considered in this study. It does not rely on a first-principles derivation of the temporal decorrelation. Hence, establishing a physical interpretation of the model and determining the extent of its validity beyond the present parameter regime are important directions for future work. In particular, the present study does not allow us to claim that the results extend to three-dimensional turbulence, quasi-two-dimensional flows, or real turbulent flows.

Our proposal of a synthetic turbulent velocity field, made in \cref{sec:velocity_field}, is a finite sum of Fourier modes multiplied by processes of the form described in \cref{sec:model_process}. Furthermore, at longer time scales, the short memory becomes a second-order correction, and it is therefore meaningful to approximate the processes by white noises. The full procedure thus leads to a model of the velocity field in the long-time/small-correlation approximation.

In order to check the validity of the model, we have compared the direct numerical simulation of the transport of a passive scalar governed by the two-dimensional Navier--Stokes equations with analytical predictions based on the synthetic model. In the long-time regime, the approximation agrees well with the direct numerical simulation results, and in the short-time regime, it captures the nearly quadratic growth observed in the simulations.

We have mentioned above that we have not found significant memory effects in the time series investigated in this paper that could justify the introduction of fractional Gaussian noise or other long-memory processes. However, it is possible that such behavior may arise in special two-dimensional flows or in cases where very large structures emerge from the inverse cascade and related mechanisms. This topic is left for further investigation in future work.

\subsection*{Acknowledgments}

The research of PC and FF is funded by the European Union (ERC, NoisyFluid, No. 101053472). AZ is supported by ``Centro di Ricerca Matematica Ennio De Giorgi'' and the ``Emma e Giovanni Sansone'' Foundation, and is member of INdAM-GNCS.

\bibliographystyle{siamnodash}
\bibliography{biblio}

\end{document}